\setlist{nosep}
\colorlet{rowHighlight}{green!10}
\definecolor{takeawaycolor}{rgb}{0.75, 0, 0}
\definecolor{myred}{rgb}{0.75, 0, 0}
\definecolor{alg1}{rgb}{0.855, 1, 0.824}
\definecolor{alg2}{rgb}{1, 0.996, 0.824}
\definecolor{alg3}{rgb}{0.824, 0.914, 1}
\newcommand{\titlename}{Credence\xspace}
\newcommand{\name}{\textsc{Credence}\xspace}
\newcommand{\myitem}[1]{\vspace*{0.03in}\noindent\textbf{#1}}
\newcommand{\myitemit}[1]{\vspace*{0.02in}\noindent\textit{#1}}
\newcommand{\first}{\emph{(i)}\xspace}
\newcommand{\second}{\emph{(ii)}\xspace}
\newcommand{\third}{\emph{(iii)}\xspace}
\newcommand{\fourth}{\emph{(iv)}\xspace}
\newcommand{\fifth}{\emph{(v)}\xspace}
\newcommand{\ie}{i.e., \@}
\newcommand{\eg}{e.g., \@}
\newsavebox\thmbox
\theoremstyle{plain}
\newtheorem{theorem}{Theorem}
\newtheorem{observation}{Observation}
\newtheorem{definition}{Definition}
\titlespacing*{\section}{0ex}{2ex plus .2ex minus .2ex}{1ex plus .2ex minus .2ex}
\titlespacing*{\subsection}{0ex}{1ex plus .2ex minus .2ex}{1ex plus .2ex minus .2ex}
\titlespacing*{\subsubsection}{0ex}{0.8ex plus .2ex minus .2ex}{0.5ex plus .2ex minus .2ex}
\begin{document}

\title{\titlename: Augmenting Datacenter Switch Buffer Sharing with ML Predictions
	\thanks{Author's version. Final version to appear in Usenix NSDI~2024.}
}

\author{
	{\rm Vamsi Addanki}\\
	TU Berlin
	\and
	{\rm Maciej Pacut}\\
	TU Berlin
	\and
	{\rm Stefan Schmid}\\
	TU Berlin
} 

\maketitle
\pagestyle{empty}
\thispagestyle{empty}

\begin{abstract}
	Packet buffers in datacenter switches are shared across all the switch ports in order to improve the overall throughput.
	The trend of shrinking buffer sizes in datacenter switches makes buffer sharing extremely challenging and a critical performance issue.
	Literature suggests that push-out buffer sharing algorithms have significantly better performance guarantees compared to drop-tail algorithms. Unfortunately, switches are unable to benefit from these algorithms due to lack of support for push-out operations in hardware.
	Our key observation is that drop-tail buffers can emulate push-out buffers if the future packet arrivals are known ahead of time. This suggests that augmenting drop-tail algorithms with predictions about the future arrivals has the potential to significantly improve performance.

	This paper is the first research attempt in this direction.
	We propose \name, a drop-tail buffer sharing algorithm augmented with machine-learned predictions.
	\name can unlock the performance only attainable by push-out algorithms so far. Its performance hinges on the accuracy of predictions.
	Specifically, \name
	achieves near-optimal performance of the best known push-out algorithm LQD (Longest Queue Drop) with perfect predictions, but \emph{gracefully} degrades to the performance of the simplest drop-tail algorithm Complete Sharing when the prediction error gets arbitrarily worse.
	Our evaluations show that \name improves throughput by $1.5$x compared to traditional approaches. In terms of flow completion times, we show that \name improves upon the state-of-the-art approaches by up to $95\%$ using off-the-shelf machine learning techniques that are also practical in today's hardware. We believe this work opens several interesting future work opportunities both in systems and theory that we discuss at the end of this paper.
\end{abstract}

\section{Introduction}
\label{sec:intro}
Datacenter switches come equipped with an on-chip packet buffer that is shared across all the device ports in order to improve the overall throughput and to reduce packet drops. Unfortunately, buffers have become increasingly expensive and chip-manufacturers are unable to scale up buffer sizes proportional to capacity increase~\cite{weiToN}. As a result, the buffer available per port per unit capacity of datacenter switches has been gradually reducing over time. Worse yet, datacenter traffic is bursty even at microsecond timescales~\cite{burstimc17,burstimc2022}. This makes it challenging for a buffer sharing algorithm to maximize throughput.
Recent measurement studies in large scale datacenters point-out the need for improved buffer sharing algorithms in order to reduce packet drops during congestion events~\cite{burstimc2022}.
To this end, buffer sharing under shallow buffers is an emerging critical problem in datacenters~\cite{abm,bai2023empowering}.

\begin{figure}[!t]
	\centering
	\includegraphics[width=1\linewidth, trim=0 1cm 0 3cm,clip]{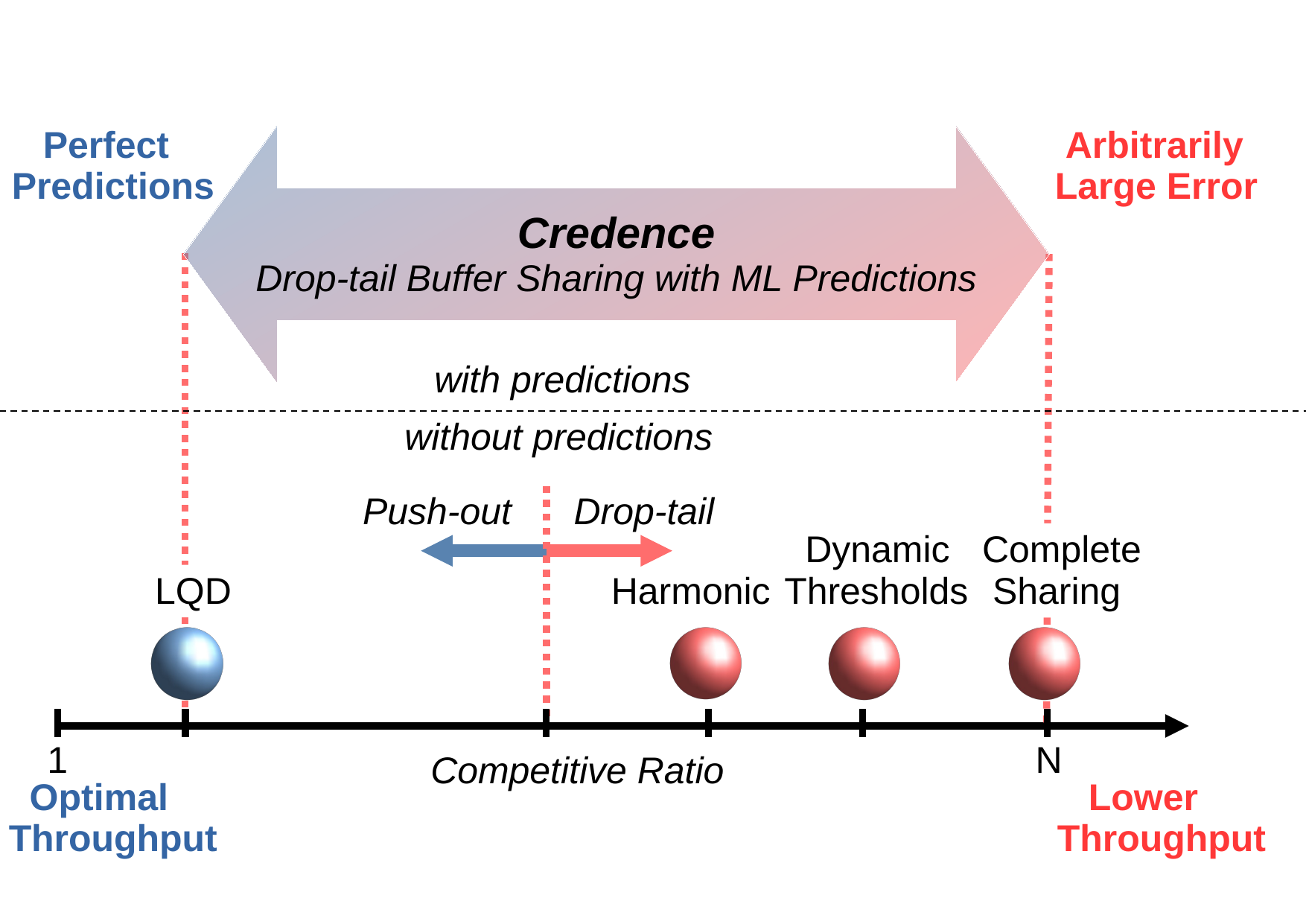}
	\caption{Augmenting drop-tail buffer sharing with ML predictions has the potential to significantly improve throughput compared to the best possible drop-tail algorithm (without predictions), and unlock the performance that was only attainable by push-out so far.}
	\label{fig:intro}
	\vspace{-5mm}
\end{figure}

The buffer sharing problem has been widely studied in the literature from an online perspective~\cite{borodin-book} with the objective to maximize throughput~\cite{competitiveBuffer,KESSELMAN2004161,breakingBarrier2,lqd2port,surveybufferonline}. Traditional online algorithms for buffer sharing can be classified into two types: \textbf{drop-tail} \eg Dynamic Thresholds (DT)~\cite{choudhury1998dynamic}, Harmonic~\cite{KESSELMAN2004161}, ABM~\cite{abm} and \textbf{push-out} \eg Longest Queue Drop (LQD). The performance gap of these algorithms compared to an offline optimal (clairvoyant) algorithm can be expressed in terms of the competitive ratio~\cite{borodin-book}. For instance, we say that an online algorithm is $2$-competitive if it performs at most $2$x worse compared to an offline optimal algorithm.
Figure~\ref{fig:intro} illustrates the performance spectrum of drop-tail and push out buffer sharing algorithms.
In terms of throughput-competitiveness, it is well-known that push-out algorithms perform significantly better than drop-tail algorithms. In fact, no deterministic drop-tail algorithm can perform better than a certain throughput (a~lower bound for competitive ratio), beyond which only push-out algorithms exist (Figure~\ref{fig:intro}). Table~\ref{table:c-ratio} presents the competitive ratios of known algorithms. Interestingly, LQD pushes out packets when the buffer is full, and it is $\approx 2$-competitive whereas Complete Sharing drops packets when the buffer is full, but it is $N+1$-competitive.

\begin{table}[t]
	\begin{center}
		\begin{tabular}{|c|c|c|c|}
			\hline
			\textbf{Algorithm}                                               & \textbf{Competitive Ratio} \\
			\hline
			Complete Sharing~\cite{competitiveBuffer}                        & $N+1$                      \\
			Dynamic Thresholds~\cite{competitiveBuffer,choudhury1998dynamic} & $\mathcal{O}(N)$           \\
			Harmonic~\cite{KESSELMAN2004161}                                 & $\ln(N)+2$                 \\
			LQD (push-out)~\cite{competitiveBuffer,breakingBarrier2}         & $1.707$                    \\
			LateQD (clairvoyant)~\cite{DBLP:journals/corr/abs-1907-04399}    & $1$                        \\
			\hline
			\rowcolor{rowHighlight}
			\name                                                            & $\min(1.707\ \eta,\ N)$    \\
			\hline
		\end{tabular}
	\end{center}
	\vspace{-2mm}
	\caption{\name's performance smoothly depends on the prediction error ($\eta$). \name outperforms traditional drop-tail buffer sharing algorithms and performs as good as push-out when the predictions are perfect ($\eta=1$) but is also never worse than Complete Sharing even when the predictions are bad ($\eta\rightarrow \infty$). $N$ denotes the number of ports.}
	\label{table:c-ratio}
	\vspace{-5mm}
\end{table}

Intuitively, the poor throughput-competitiveness of drop-tail buffers owes it to the fundamental challenge that utilizing the buffer for some queues comes at the cost of deprivation of buffer for others~\cite{abm}. To this end, drop-tail algorithms proactively drop packets \ie packets are dropped even when the buffer has remaining space~\cite{competitiveBuffer,choudhury1998dynamic,trafficaware,fab,KESSELMAN2004161}. On one hand, maintaining remaining buffer space is necessary to serve future packet arrivals. On the other hand, maintaining remaining buffer space could lead to under-utilization, throughput loss and excessive packet drops. In contrast, the superior throughput-competitiveness of push-out algorithms owes it to their fundamental advantage to push out packets instead of dropping them. Hence, push-out algorithms can utilize the entire buffer as needed and only push out packets when multiple ports contend for buffer space. Although push-out algorithms offer far superior performance guarantees compared to drop-tail, hardly any datacenter switch supports push-out operations for the on-chip shared buffer. This begs the question: Are drop-tail buffer sharing algorithms ready for the trend of shrinking buffer sizes?

Our key observation is that every push-out algorithm can be converted to a drop-tail algorithm. However, such a conversion requires certain (limited) visibility into the future packet arrivals. Specifically, pushing out a packet is equivalent to dropping the packet when it arrives. Recent advancements in traffic predictions play a pivotal role in providing such visibility into the future packet arrivals: paving a way for better drop-tail buffer sharing algorithms.

In this paper, we take the first step in this direction. Figure~\ref{fig:intro} illustrates our perspective. We propose \name, a drop-tail buffer sharing algorithm augmented with machine-learned predictions. \name's performance is tied to the accuracy of these predictions. As the prediction error decreases, \name unlocks the performance of push-out algorithms and reaches the performance of the best-known algorithm. Even when the prediction error grows arbitrarily large, \name offers at least the performance of the simplest drop-tail algorithm Complete Sharing. Table~\ref{table:c-ratio} gives the competitive ratio of \name as a function of the prediction error $\eta$. Importantly, \name's performance \emph{smoothly} varies with the prediction error, generalizing the performance space between the known push-out and drop-tail algorithms.
Hence, \name achieves the three goals of prediction-augmented algorithms, in the literature referred to as consistency, robustness and smoothness~\cite{10.1145/3528087, NEURIPS2018ML}.

In addition to the theoretical guarantees for \name's performance, our goal is also its practicality. Specifically, without predictions, \name's core logic only uses additions, subtractions, and does not add additional complexity compared to existing approaches. For predictions, we currently use random forests, which have been recently shown to be feasible on programmable switches at line rate~\cite{busse2019pforest,10229100}. A full implementation of \name in hardware unfortunately requires switch vendor involvement since buffer sharing is merely a blackbox even in programmable switches. With this paper, we wish to gain attention from switch vendors on the fundamental blocks required for such algorithms to be deployed in the dataplane.
We currently implement \name in NS3 to evaluate its performance using realistic datacenter workloads.
We present a detailed discussion on the practicality of \name later in this paper.

Our evaluations show that \name performs $1.5$x better in terms of throughput and up to $95$\% better in terms of flow completion times, compared to alternative approaches.

We believe \name is a stepping stone towards further improving buffer sharing algorithms. Especially, achieving better performance than \name under large prediction error remains an interesting open question. Our approach of augmenting buffer sharing with predictions is not limited to drop-tail algorithms, but push-out algorithms can also benefit from predictions. We discuss exciting future research directions both in systems and theory at the end of this paper.

\noindent In summary, our key contributions in this paper are:
\begin{itemize}[label=\small{\textcolor{takeawaycolor}{$\blacksquare$}}, topsep=0pt, itemsep=0pt,leftmargin=*]
	\item \name, the first buffer sharing algorithm augmented with predictions, achieving near-optimal performance with perfect predictions while also guaranteeing performance under arbitrarily large prediction error, and gradually degrading the performance as the prediction error increases.
	\item Extensive evaluations using realistic datacenter workloads, showing that \name outperforms existing approaches in terms of flow completion times.
	\item  All our artifacts have been made publicly available at \url{https://github.com/inet-tub/ns3-datacenter}.
\end{itemize}

\section{Motivation}
\label{sec:motivation}
In this section, we provide a brief background and motivate our approach by highlighting the drawbacks of traditional approaches. We show the potential for reaching close-to-optimal performance when buffer sharing algorithms are augmented with machine-learned predictions. To this end, we first describe our model and throughput competitiveness (\S\ref{sec:model-competitive}). We then discuss the drawbacks of existing approaches (\S\ref{sub:drawbacks}). We show that a renewed hope for improved buffer sharing is enabled by the recent rise in algorithms with predictions~(\S\ref{sub:hope}).

\subsection{Buffer Sharing from Online Perspective}\label{sec:model-competitive}
A network switch receives packets one after the other at each of its ports. The switch does not know the packet arrivals ahead of time. This makes buffer sharing inherently an \emph{online} problem \ie algorithms must take instantaneous decisions upon packet arrivals without the knowledge of the future. In order to systematically understand the performance of such algorithms, we take an online approach following the classical model in the literature~\cite{competitiveBuffer,KESSELMAN2004161,lqd2port,breakingBarrier2,surveybufferonline}. In this section, we describe our model intuitively, and we refer to Appendix~\ref{app:model} for formal definitions. Figure~\ref{fig:switch} illustrates the model.

\medskip
\myitem{Buffer model:} We consider an output-queued switch with $N$ ports and a buffer size of $B$. Buffer is shared across all the ports. A buffer sharing algorithm takes buffering decisions that we describe next.
We assume that time is discrete. At most $N$ packets can arrive in a single timeslot (since there are $N$ ports), and each port removes at most one packet in a timeslot.

\medskip
\myitem{Online algorithm:}
When a packet arrives, a buffer sharing algorithm determines whether it should be accepted into the available buffer space. Drop-tail algorithms can only accept or discard incoming packets, while push-out algorithms can also remove packets from the buffer.

\medskip
\myitem{Objective:} The network throughput is of  utmost importance for datacenter operators since throughput often relates to the cost in typical business models (\eg $\$$ per bandwidth usage). We hence consider throughput as an objective function, following the literature. Specifically, for any packet arrival sequence, our objective is to maximize the total number of transmitted packets.
The throughput maximization objective is closely related to packet drops minimization objective.
In this sense, our objective captures two important performance metrics \ie throughput and packet drops.

\medskip
\myitem{Competitive Ratio:} We use competitive ratio as a measure to compare the performance of an online algorithm to the optimal offline algorithm.
Specifically, let $ALG$ and $OPT$ be an online and optimal offline algorithm correspondingly. Let $ALG(\sigma)$ be the throughput of $ALG$ for the packet arrival sequence $\sigma$. We say an algorithm $ALG$ is $c$-competitive if the following relation holds for any packet arrival sequence.
\[
	OPT(\sigma) \le c\cdot ALG(\sigma)
\]

Competitive ratio is a particularly interesting metric for buffer sharing since it offers performance guarantees without any assumptions on specific traffic patterns. For example, the buffer may face excessive packet drops or may temporarily experience throughput loss due to bursty traffic. One could argue that the buffer sharing algorithm is the culprit and should have allocated more buffer to the bursty traffic. While this may have solved the problem for a particular bursty arrival, the same solution could result in unexpected drops and throughput loss if there were excessive bursty arrivals \ie large bursts could monopolize the buffer. Instead, from an online perspective, better competitive ratio indicates that the buffer sharing algorithm performs close to optimal under any traffic conditions.

\medskip
\noindent{\textcolor{takeawaycolor}{$\blacksquare$ \textbf{\textit{Takeaway.}}} \textit{A buffer sharing algorithm with lower competitive ratio improves the throughput of the switch and reduces packet drops under worst-case packet arrival patterns.}}

\begin{figure}[t]
	\centering
	\includegraphics[width=0.8\linewidth]{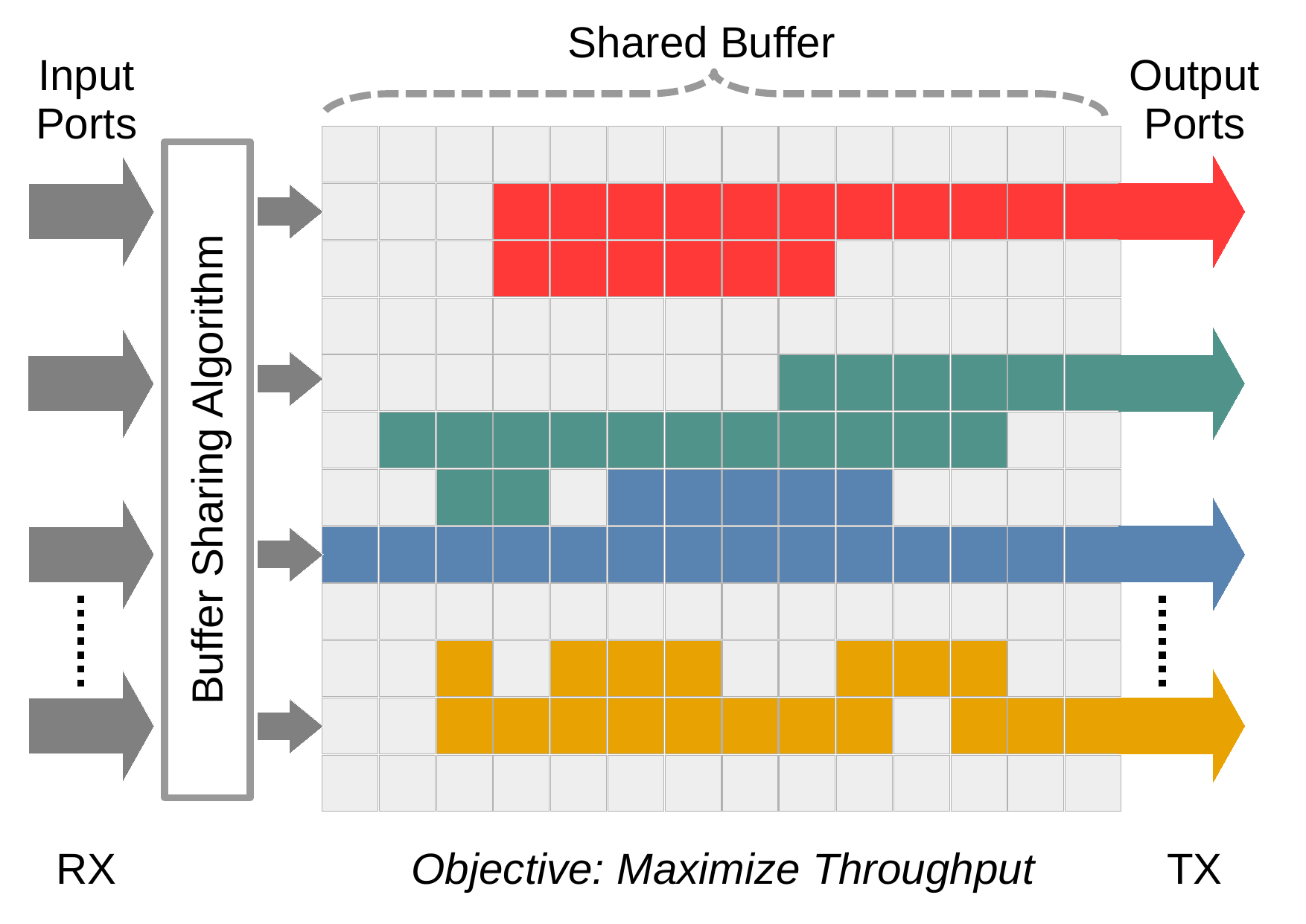}
	\caption{The switch has a buffer size of $B$ shared across $N$ output ports. Each color indicates the packets residing in the shared buffer corresponding to each port. A buffer sharing algorithm takes decisions (accept or drop) for each input packet.}
	\label{fig:switch}
	\vspace{-3mm}
\end{figure}

\subsection{Drawbacks of Traditional Approaches}
\label{sub:drawbacks}
We observe two main drawbacks of traditional buffer sharing algorithms,
both affecting the competitive ratio. First, algorithms proactively and unnecessarily drop packets in view of accommodating future packet arrivals. Second, algorithms reactively drop packets when the buffer is full and incur throughput loss, which could have been avoided.
We argue that these drawbacks are rather fundamental to drop-tail algorithms and cannot be addressed by traditional online approaches.

\begin{figure*}
	\begin{minipage}[b]{0.49\linewidth}
		\centering
		\begin{subfigure}[b]{0.495\linewidth}
			\centering
			\includegraphics[trim=4cm 0.5cm 0 0.5cm,clip,width=1\linewidth]{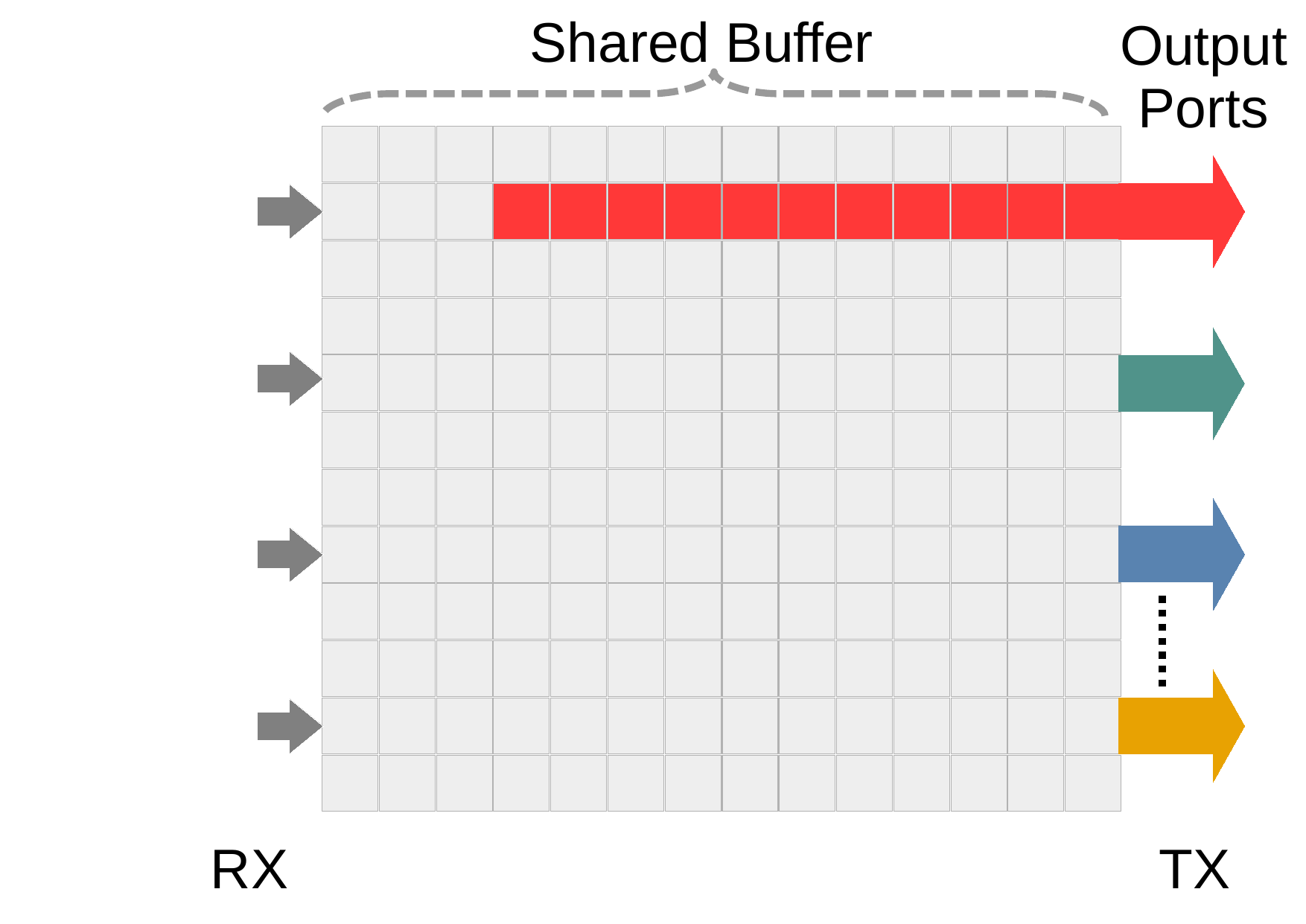}
			\caption{ALG}
			\label{fig:example1-alg}
		\end{subfigure}
		\begin{subfigure}[b]{0.495\linewidth}
			\centering
			\includegraphics[trim=4cm 0.5cm 0 0.5cm,clip,width=1\linewidth]{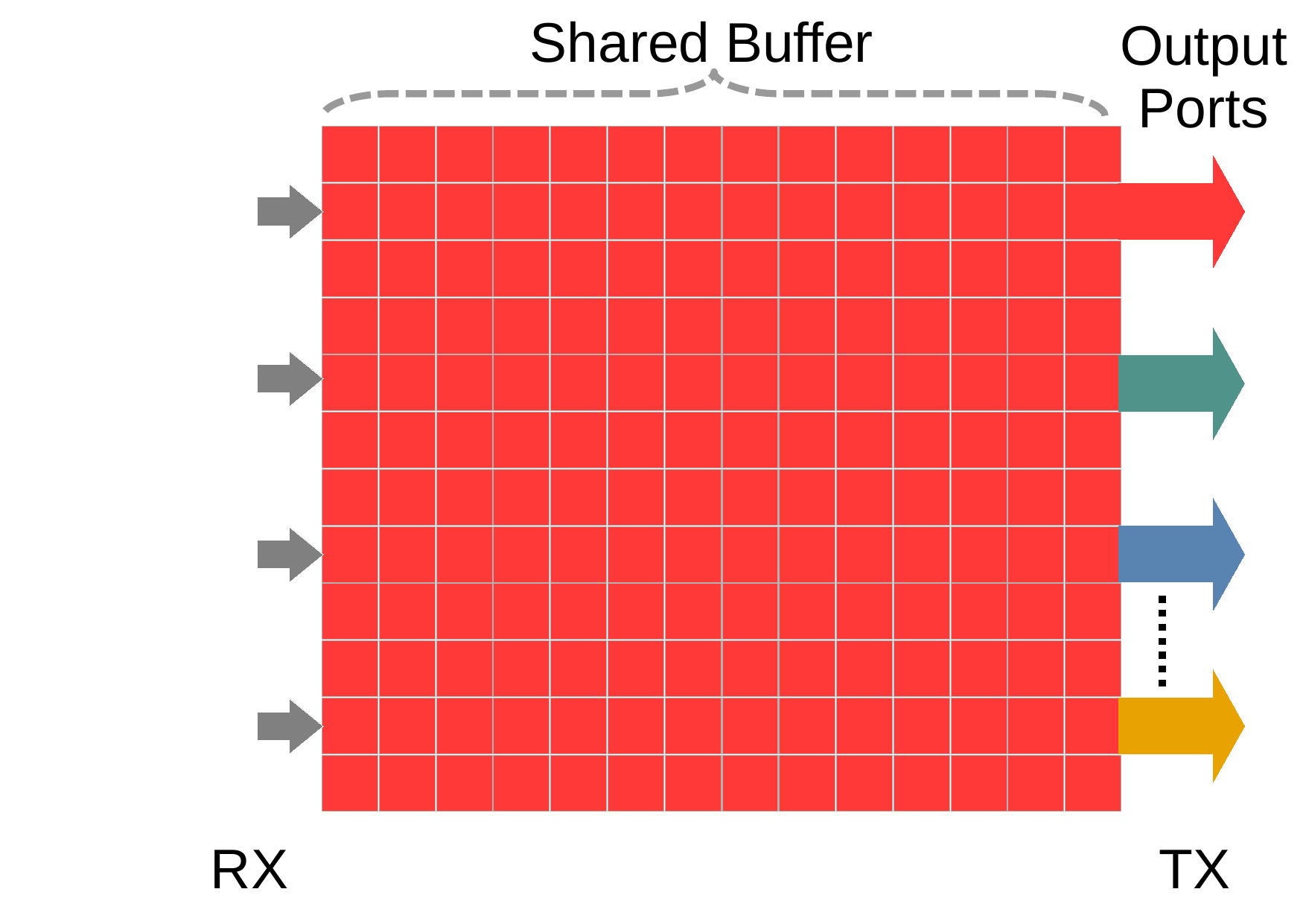}
			\caption{OPT}
			\label{fig:example1-opt}
		\end{subfigure}
		\caption{Upon a large burst arrival, a typical drop-tail algorithm (ALG) \emph{proactively} drops the incoming packets in anticipation of future bursts and significantly under-utilizes the buffer. In this case, an optimal offline algorithm accepts the entire burst without any packet drops.}
		\label{fig:example1}
	\end{minipage}\hfill
	\begin{minipage}[b]{0.49\linewidth}
		\centering
		\begin{subfigure}[b]{0.495\linewidth}
			\centering
			\includegraphics[trim=4cm 0.5cm 0 0.5cm,clip,width=1\linewidth]{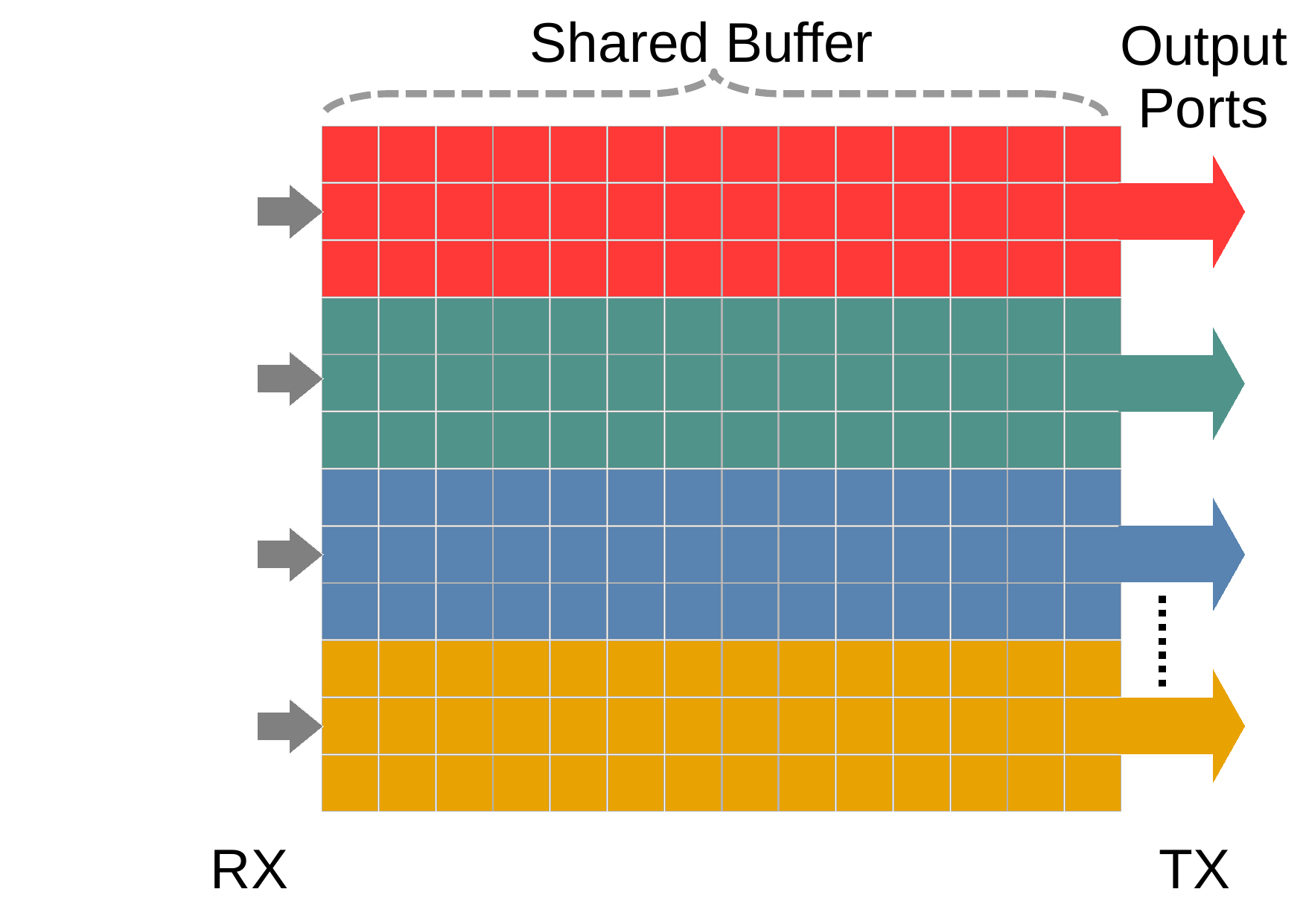}
			\caption{ALG}
			\label{fig:example2-alg}
		\end{subfigure}
		\begin{subfigure}[b]{0.495\linewidth}
			\centering
			\includegraphics[trim=4cm 0.5cm 0 0.5cm,clip,width=1\linewidth]{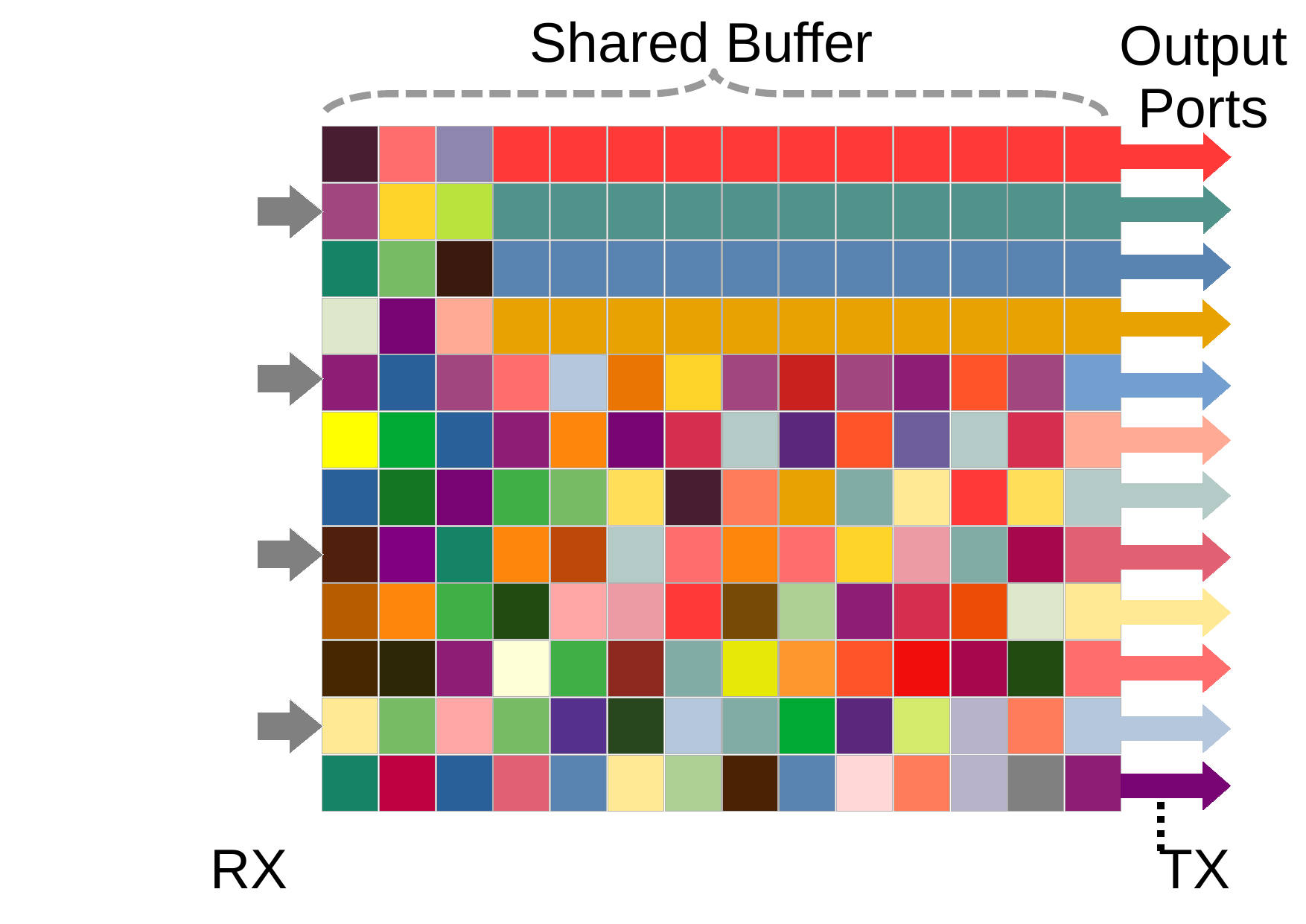}
			\caption{OPT}
			\label{fig:example2-opt}
		\end{subfigure}
		\caption{In pursuit of high burst absorption, a drop-tail algorithm ALG may absorb bursts but this results in excessive \emph{reactive} drops for the future packet arrivals. In this case, an optimal offline algorithm OPT drops few packets such that the overall throughput is maximized.}
		\label{fig:example2}
	\end{minipage}
	\vspace{-4mm}
\end{figure*}

\medskip
\myitem{Proactive unnecessary packet drops $\rightarrow$ throughput loss:}
A drop-tail buffer sharing algorithm typically drops packets even if there is remaining buffer space available~\cite{choudhury1998dynamic,fab,abm}. We refer to such drops as proactive drops. Being proactive is indeed necessary in order to accommodate transient bursts. However, proactive packet drops and the corresponding remaining buffer space ends up being wasteful if the future packet arrivals do not need additional buffer space (if the anticipated burst does not arrive). Figure~\ref{fig:example1-alg} and Figure~\ref{fig:example1-opt} illustrate an example. Consider a traffic pattern where there is little to no congestion on all the ports but once in a while, a large burst appears. Specifically, the buffer is empty initially and a large burst of size $B$ appears. A deterministic drop-tail algorithm has two choices: \first accept a portion of the burst and proactively drop the rest of the burst or \second accept the entire burst. Typical algorithms in the literature choose the former in view of accommodating future packet arrivals. An optimal offline algorithm that knows the arrivals ahead of time would accept the entire burst of size $B$ in this case. This makes an online algorithm at least $c$-competitive for this particular arrival pattern, where $\frac{1}{c}$ is the fraction of the burst accepted: since the optimal solution accepts and transmits $B$ packets over time, whereas an online algorithm only accepts and transmits only $\frac{B}{c}$ packets over time. We observe that recent works focus on minimizing proactive unnecessary packet drops by prioritizing bursty traffic to the extent that they allow burst on a single port to monopolize the buffer~\cite{fab,abm,trafficaware}.
However, note that competitive ratio is not defined for a particular arrival sequence, but over all scenarios. To this end, accepting a larger burst size may be helpful in the above example but if there were indeed future packet arrivals on other ports that need buffer, the algorithm incurs excessive reactive drops (described next) and throughput loss.

\myitem{Reactive avoidable packet drops $\rightarrow$ throughput loss:} Any drop-tail algorithm is forced to drop the incoming packets once the shared buffer is full. We call such drops reactive drops. Reactive drops result in throughput loss if the algorithm fills up significant portion of the buffer on a small set of ports but reactively drops incoming packets to other ports. Figure~\ref{fig:example2-alg} and Figure~\ref{fig:example2-opt} illustrate an example. Consider that the buffer is initially empty and four simultaneous bursts each of size $B$ arrive to four ports. If an algorithm proactively drops a significant portion of the bursts, it would suffer under arrival sequences such as in the previous example (Figure~\ref{fig:example1-alg}). Alternatively, the algorithm may choose to accept a larger portion of the bursts and ends up filling up the entire buffer in aggregate. At this point, several short bursts arrive to multiple other ports. An optimal offline algorithm accepts only a fraction of the large bursts such that it is able to accommodate upcoming short bursts. In doing so, the optimal algorithm benefits in throughput since the switch transmits packets from more number of ports. However, since the online algorithm fills up the entire buffer due to the initial large bursts, it is forced to reactively drop the upcoming short bursts, losing throughput. In fact, a similar arrival pattern for Dynamic Thresholds yields at least $\Omega\left(\sqrt{\frac{N}{\log(N)}}\right)$-competitiveness~\cite{competitiveBuffer}. The known upper bound for Dynamic Thresholds is $\mathcal{O}(N)$~\cite{competitiveBuffer}. Further, it has been shown in the literature that no deterministic drop-tail algorithm can be better than $\Omega\left(\frac{\log(N)}{\log(\log(N))}\right)$-competitive~\cite{KESSELMAN2004161}.

Interestingly, push-out algorithms are not prone to the problems discussed above, since they can take revocable decisions \ie to accept a packet and drop it later. Hence, push-out algorithms do not have to maintain free space in the buffer in order to accommodate transient bursts. Instead, such algorithms can defer the dropping decision until the moment the drop turns out to be necessary.

\medskip
\noindent{\textcolor{takeawaycolor}{$\blacksquare$ \textbf{\textit{Takeaway.}}} \textit{Traditional drop-tail algorithms are fundamentally limited in throughput-competitiveness as they are unable to effectively navigate proactive and reactive drops due to the online nature of the problem \ie future packet arrivals are unknown to the algorithm.}}

\subsection{Predictions: A Hope for Competitiveness}
\label{sub:hope}

Given that the fundamental barrier in improving drop-tail buffer sharing algorithms is the lack of visibility into the future arrivals, we turn towards predictions. The recent rise of algorithms with predictions offers a renewed hope for competitive buffer sharing.
Algorithms with predictions successfully enabled close to optimal performance for various classic problems~\cite{NEURIPS2018ML}. The core idea is to guide the underlying online algorithm with certain knowledge about the future obtained via predictions. The machine-learned oracle that produces predictions is considered a blackbox with a certain error. The main challenge is to offer performance guarantees at the extremes \ie close to optimal performance under perfect predictions and a minimum performance guarantee when the prediction error gets arbitrarily large. Further, it is desirable that the competitiveness of the algorithm \emph{smoothly} degrades as the prediction error grows.

\subsubsection{Prediction Model}\label{sec:prediction-model}
In the context of the buffer sharing problem, there are several prediction models that can be considered \eg drops or packet arrivals. In this paper, we assume that a blackbox machine-learned oracle predicts packet drops. Our choice is due to the fact that packet drops are the basic decisions made by an algorithm. Concretely, we consider an oracle that predicts whether an incoming packet would eventually be dropped (or pushed out) by the Longest Queue Drop (LQD) algorithm serving the same packet arrival sequence. We classify the predictions into four types: \first true positive \ie a correct prediction that a packet is eventually dropped by LQD, \second false negative \ie an incorrect prediction that a packet is eventually transmitted by LQD, \third false positive \ie an incorrect prediction that a packet is eventually dropped by LQD and \fourth true negative \ie a correct prediction that a packet is eventually transmitted by LQD. Figure~\ref{fig:predictions} summarizes this classification. Following the literature~\cite{NEURIPS2018ML,10.1145/3528087}, our goals for prediction-augmented buffer sharing are consistency, robustness and smoothness.

\medskip
\noindent \textbf{$\alpha$-Consistent} buffer sharing algorithm has a competitive ratio~$\alpha$ when the predictions are all true \ie perfect predictions.

\medskip
\noindent \textbf{$\beta$-Robust} buffer sharing algorithm has a competitive ratio $\beta$ when the predictions are all false \ie large prediction error.

\medskip
\noindent \textbf{Smoothness} is a desirable property such that the competitive ratio degrades smoothly as the prediction error grows \ie a small change in error does not drastically influence the competitive ratio.

Our goal is to design a prediction-augmented buffer sharing algorithm that is close to $1$-consistent (with perfect predictions) \ie near-optimal, at most $N$-robust (with arbitrarily large error) \ie not worse than Complete Sharing algorithm, and has the desirable property of smoothness.

\subsubsection{Common Pitfalls}

It is intuitive that predictions can potentially improve the performance of a drop-tail algorithm. For instance, in the examples from Figure~\ref{fig:example1} and Figure~\ref{fig:example2}, our prediction-augmented online algorithm could take nearly the same decisions as a push-out algorithm. However, the main challenge is to ensure robustness and smoothness. If an algorithm blindly trusts the predictions, we observe that false positive and false negative predictions have a significantly different impact on the performance.

\medskip
\myitem{Excessive false positives can lead to starvation:} The worst case for a naive algorithm that blindly trusts predictions is when all the predictions are false positives. In this case, the algorithm ends up dropping every incoming packet. Blindly trusting false predictions could lead to a competitive ratio worse than the simplest drop-tail algorithm Complete Sharing \ie the competitive ratio becomes unbounded ($\infty$-robust) if the predictions are mostly false positives.

\medskip
\myitem{A single false negative can hurt throughput forever:} A naive algorithm that blindly relies on false negative predictions is susceptible to adverse effects that propagate over time. Consider a packet arrival sequence that hits only one queue initially and consider that the predictions are all true negatives until the queue length reaches $B-1$, where $B$ is the total buffer size. At this point, one more packet arrives and our prediction is a false negative. As a result, our naive algorithm has a queue of size $B$ and the optimal algorithm has a queue of size $B-1$. Note that all non-empty queues drain one packet after each timeslot. From here on, in every timeslot, one packet (first) arrives to the large queue and one packet (second) arrives to any other queue. Also consider that all the predictions are true from now on. The optimal algorithm accepts both first and second packet in every timeslot. However, in every timeslot our naive approach can only accept the first packet to the large queue and cannot accept the second packet since the buffer is full. Notice that relying on just one false negative resulted in cumulative drops in this case even though all other predictions were true. In fact, a tiny error such as just $N$ number of false negatives even with all other predictions being true could result in a competitive ratio for a naive approach as worse as Complete Sharing.

\begin{figure}[t]
	\centering
	\includegraphics[width=0.7\linewidth]{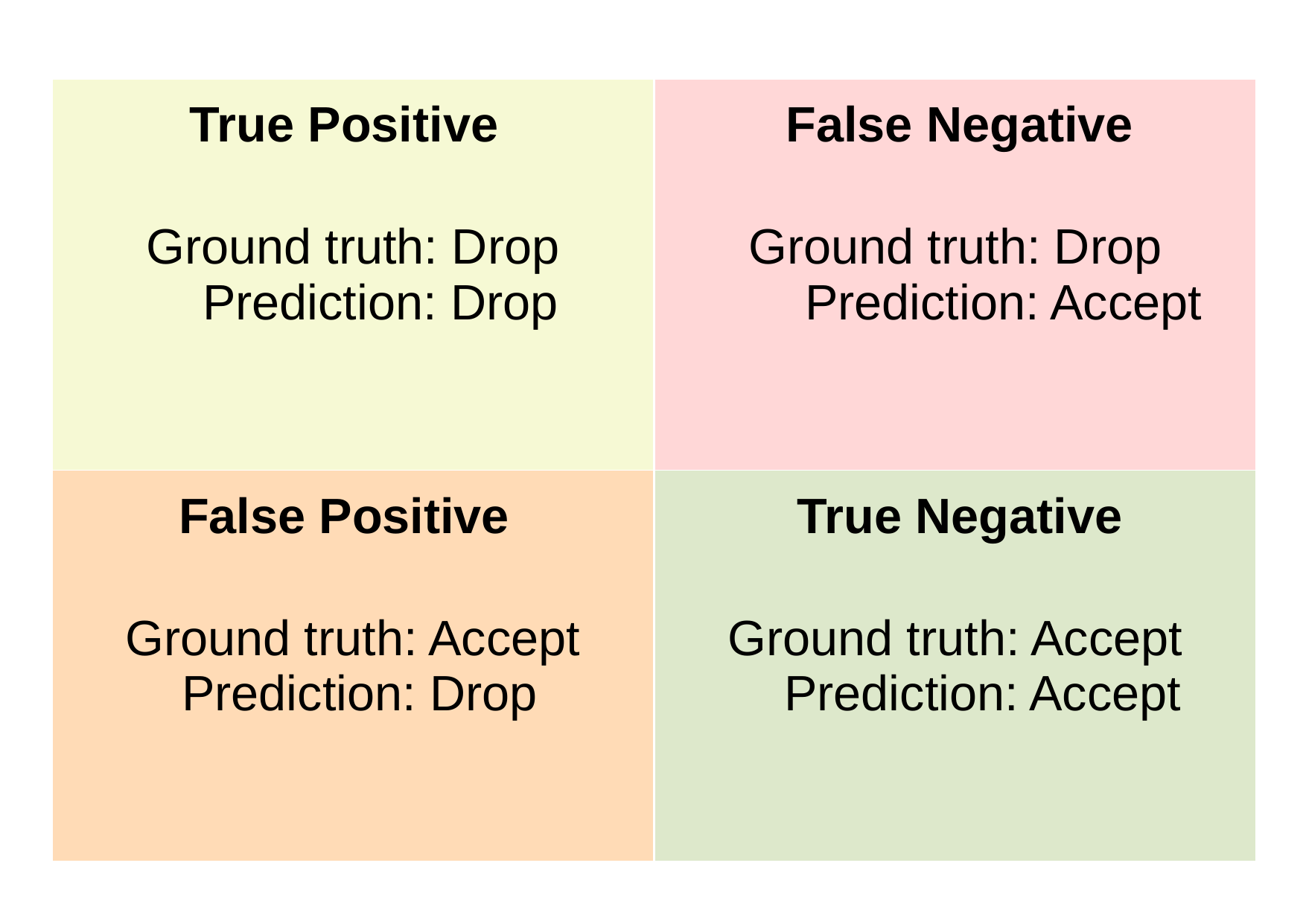}
	\vspace{-3mm}
	\caption{Confusion matrix for our prediction model.}
	\label{fig:predictions}
	\vspace{-3mm}
\end{figure}

\medskip
\noindent{\textcolor{takeawaycolor}{$\blacksquare$ \textbf{\textit{Takeaway.}}} \textit{Augmenting drop-tail algorithms with predictions has the potential to unlock the optimal performance. Ensuring performance guarantees with inaccurate predictions remains a challenge.}}

\section{Prediction-Augmented Buffer Sharing}\label{sec:credence}
Reflecting on our observations in \S\ref{sec:motivation}, our goal is to design a drop-tail buffer sharing algorithm that performs close to optimal with perfect predictions but also provides a minimum performance guarantee when the prediction error is arbitrarily large. In essence, our aim is to enable performance improvement in terms of throughput and packet drops in datacenter switches. To this end, we first present an overview of our algorithm (\S\ref{sec:overview}). We then present the workings of \name (\S\ref{sec:algorithm}) and discuss its properties (\S\ref{sec:properties}). Finally, we discuss the practicality of \name (\S\ref{sec:practicality}).

\subsection{Overview}
\label{sec:overview}
In a nutshell, \name relies on predictions and \emph{follows} a push-out algorithm, reaching close to optimal performance under perfect predictions. \name cleverly takes certain decisions independent of the predictions in order to guarantee a minimum performance. Further, \name's competitiveness gradually degrades as prediction error grows (a property known as smoothness~\cite{10.1145/3528087}), hence the algorithm still performs near-optimally when predictions are slightly inaccurate.

\myitem{\name \emph{follows} Longest Queue Drop algorithm:} Our design of \name consists of two key ingredients. First, \name uses thresholds as a drop condition irrespective of the predictions. \name treats thresholds as queue lengths of LQD and updates the thresholds based on the LQD algorithm (simply arithmetic) upon every packet arrival. Second, \name relies on predictions as long as the queue lengths satisfy the corresponding thresholds. The combination of thresholds and predictions allows \name to closely \emph{follow} the Longest Queue Drop algorithm (LQD) without requiring push-out operations\footnote{Recall that LQD is close to optimal with a competitive ratio of $1.707$.}.

\myitem{\name guarantees performance under extremities:}
When all the predictions are perfectly accurate, \name achieves a competitive ratio of $1.707$ (consistency) due to the straight-forward argument that the drops by \name and LQD are equivalent for true predictions.
In order to guarantee a minimum performance even with arbitrarily large prediction error (robustness), \name bypasses the threshold and predictions as long as the longest queue is within $\frac{B}{N}$ size. Here, $B$ is the buffer size and $N$ is the number of ports. This allows \name to be most $N$-competitive even under large prediction error, similar to the Complete Sharing algorithm. We prove our claim formally in Appendix~\ref{app:predictions}.

\myitem{\name smoothly degrades with prediction error:} We design our error function in terms of the performance of LQD and the predicted drops. We analyze the types of drops incurred by \name due to false positive and false negative predictions. This allows us to show that \name satisfies the smoothness property \ie the competitive ratio smoothly degrades from $1.707$ to $N$ as the prediction error~grows.

\subsection{\name}
\label{sec:algorithm}
We now present \name and explain how it operates. Algorithm~\ref{alg:follow-pred} presents the pseudocode of \name. Our pseudocode is simplified to discrete time for ease of presentation and for simplicity of analysis. It can be trivially extended to continuous time, and our implementation incorporates it\footnote{Our source code will be made publicly available online.}.

\myitem{Arrival:} Upon a packet arrival, \name has three important steps that are highlighted in Algorithm~\ref{alg:follow-pred}.
First, \name updates the threshold for the current queue (highlighted in blue).
Second, \name takes a decision based on the thresholds and predictions whether or not to accept the incoming packet (highlighted in yellow). Finally, the packet is either accepted or dropped. We next describe each of these steps in detail.
Third, depending on the state of the buffer, \name bypasses the thresholds and predictions with a safeguard condition in order to accept or drop the incoming packet (highlighted in green).

\myitem{Thresholds:} \name updates its thresholds based on the longest queue drop algorithm. Specifically, upon a packet arrival at time $t$ to a queue $i$, \name increments the threshold $T_i(t)$ for queue $i$ by the packet size. If upon arrival the sum of thresholds $\Gamma(t)$ is equal to the buffer size $B$, then \name first decrements the longest queue threshold by packet size and then increments the threshold for queue $i$ by the packet size. Note that upon a packet arrival to a queue, the corresponding threshold is updated before accepting or dropping the packet.

\myitem{Drop criterion:} Similar to existing threshold-based algorithms, \name also uses thresholds as a drop criterion. \name compares the queue length $q_i(t)$ of a queue $i$ against its threshold $T_i(t)$ and drops an incoming packet if the queue length is larger than or equal to the corresponding threshold. If and only if an incoming packet satisfies the thresholds, then \name takes input from a machine-learned oracle that predicts whether to accept or drop according to our prediction model discussed in \S\ref{sec:prediction-model}. Finally, based on the thresholds and predictions, \name either accepts or drop the incoming packet.

\myitem{Safeguard:} In order to bound \name's competitiveness under arbitrarily large prediction error, we bypass the above drop criterion under certain cases. Specifically, when the longest queue length is less than $\frac{B}{N}$, \name always accepts a packet irrespective of the thresholds and predictions. This ensures that \name is at least $N$-competitive even with large prediction error. Our safeguard is based on the observation that even the push-out longest queue drop algorithm cannot push out a packet from a queue less than $\frac{B}{N}$ size since the longest queue must be at least $\frac{B}{N}$ size when the buffer is full.
In essence, \name circumvents the impact of large prediction error by accepting packets until a certain amount of buffer is filled up.

\myitem{Predictions:}
\name can be used with any ML oracle that predicts whether to accept or drop a packet, according to our prediction model (see \S\ref{sec:prediction-model}). We do not rely on the internal details of the oracle. However, certain choices of ML oracles are better suited to operate within the limited resources available in a switch hardware. We discuss further on our choice of oracle later in \S\ref{sec:practicality}.

\begin{algorithm}[!t]
	\SetKwFunction{arrival}{\textbf{\textsc{\textcolor{myred}{arrival}}}}
	\SetKwFunction{updateThreshold}{\textbf{\textsc{\textcolor{myred}{updateThreshold}}}}
	\SetKwFunction{departure}{\textbf{\textsc{\textcolor{myred}{departure}}}}

	\SetKwProg{Fn}{function}{:}{}
	\SetKwProg{Proc}{procedure}{:}{}
	\SetKwInOut{KwIn}{Input}
	\SetKwInOut{KwOut}{Output}

	\KwIn{\ Packet arrivals $\sigma$, \\ \ \ Drop predictions $\phi^\prime(\sigma)$}

	\Proc{\arrival{$\sigma(t)$}}{

		\For{each packet $p \in \sigma(t)$}{
			Let $i$ be the destination queue for the packet $p$

			\textsc{updateThreshold}$(i, arrival)$

			\hspace*{-\fboxsep}\colorbox{alg1}{\parbox{0.8\linewidth}{%
					\Comment{\textcolor{darkgray}{\textit{Guarantees $N$-competitiveness}}}

					Let $j$ be the longest queue

					\If{$q_j(t) < \frac{B}{N}$}{

						$q_i(t) \leftarrow q_i(t) + 1$ \Comment{Accept}

						Continue to next packet
					}

				}}

			\hspace*{-\fboxsep}\colorbox{alg2}{\parbox{0.8\linewidth}{%
					\Comment{\textcolor{darkgray}{\textit{Enables $1.707\ \eta$-competitiveness}}}

					\label{line:thresholds}
					\eIf{$q_i(t) < T_i(t)$}{
						\If{$Q(t) < B$}{

							$drop$ = \textsc{GetPrediction}()

							\eIf{drop}{

								\Comment{Drop}

							}{
								$q_i(t) \leftarrow q_i(t) + 1$ \Comment{Accept}
							}

						}
					}{
						\Comment{Drop}
					}

				}}
		}

	}

	\Proc{\departure{i}}{

		\If{$q_i(t) > 0$}{

			$q_i(t) \leftarrow q_i(t) - 1$ \Comment{Drain one packet}

		}
		\textsc{updateThreshold}$(i, departure)$
	}

	\Proc{\updateThreshold{i, event}}{

		\If{event = arrival}{
			\hspace*{-\fboxsep}\colorbox{alg3}{\parbox{0.8\linewidth}{%
					\Comment{\textcolor{darkgray}{\textit{Thresholds are treated as LQD queue lengths}}}

					\eIf(\Comment{Sum of thresholds}){$\Gamma(t) = B$}{

						Let $T_j(t)$ be the highest threshold

						$T_j(t) \leftarrow T_j(t) - 1$ \Comment{Decrease}

						$T_i(t) \leftarrow T_i(t) + 1$ \Comment{Increase}
					}{
						$T_i(t) \leftarrow T_i(t) + 1$ \Comment{Increase}

						$\Gamma(t) \leftarrow \Gamma(t) + 1$
					}
				}}
		}
		\If{event = departure}{
			\If{$T_i(t) > 0$}{
				$T_i(t) \leftarrow T_i(t) - 1$ \Comment{Decrease}

				$\Gamma(t) \leftarrow \Gamma(t) -1$
			}
		}
	}

	\caption{\name}
	\label{alg:follow-pred}
\end{algorithm}

\subsection{Properties of \name}
\label{sec:properties}

\name offers attractive theoretical guarantees in terms of competitive ratio. In this section, for simplicity, we refer an offline optimal algorithm as $OPT$.

Although we have so far discussed the prediction error more intuitively, it requires a quantitative measure in order to analyze the performance of an algorithm relying on predictions.
There are two important considerations in defining a suitable error function.
First, following the literature, an error function
must be independent of the state and actions of our algorithm, so that we can train a predictor without considering all possible states of the algorithm~\cite{10.1145/3409964.3461790}. Second, it is desirable that the performance of our algorithm can be related to the error function in an uncomplicated manner. Taking these into consideration, we define our error function in Definition~\ref{def:error-function}. Our definition captures the prediction error in terms of the performance of LQD (push-out) and the performance of an algorithm $FollowLQD$. Here, $FollowLQD$ (Algorithm~\ref{alg:follow} in Appendix~\ref{app:detalg}) is a deterministic drop-tail algorithm (without predictions) with thresholds similar to \name.

\begin{restatable}[Error function]{definition}{errorFunction}\label{def:error-function}
	Let LQD($\sigma$) and FollowLQD($\sigma$) denote the total number of packets transmitted by the online push-out algorithm LQD and the online drop-tail algorithm FollowLQD over the arrival sequence $\sigma$.
	Let $\phi$ denote the sequence indicating drop by LQD for each packet in the arrival sequence $\sigma$.  Let $\phi^\prime$ denote the sequence of drops predicted by the machine-learned oracle. Let $\phi_{TP}^\prime$, $\phi_{FP}^\prime$, $\phi_{TN}^\prime$, and $\phi_{FN}^\prime$ denote the sequence of true positive, false positive, true negative and false negative predictions for the arrival sequence $\sigma$.
	We define the error function $\eta(\phi,\phi^\prime)$ as follows:

	\begin{align}
		\eta(\phi,\phi^\prime) & = \frac{LQD(\sigma)}{\displaystyle FollowLQD\left(\sigma - \phi^\prime_{TP} - \phi^\prime_{FP}\right)} \\ \nonumber
	\end{align}

\end{restatable}

Using Definition~\ref{def:error-function}, we analyze the throughput of \name over an entire packet arrival sequence $\sigma$ based on the predictions $\phi^\prime$. In fact, our error function is upper bounded by an intuitive closed form expression, in terms of the number of true and false predictions, as follows, that can be easily computed:\footnote{We prove our upper bound in Theorem~\ref{th:error-upper} (Appendix~\ref{app:predictions}).}
\[
	\eta(\phi,\phi^\prime) \le \frac{\phi^\prime_{TN} + \phi^\prime_{FP}}{\displaystyle \phi^\prime_{TN} - \displaystyle\min\left((N-1)\cdot\phi^\prime_{FN}, \phi^\prime_{TN}\right)}
\]
The upper bound of our error function indicates intuitively that \first the error decreases as the total number of true negative predictions dominate the total false predictions, \second the error increases with each false positive prediction and \third the error increases with each false negative with a larger weight.
Lemma~\ref{lemma:alg-lqd} states the relation between the throughput of \name, throughput of $LQD$ and the prediction error.

\begin{restatable}{Lemma}{lemmaAlgLqd}\label{lemma:alg-lqd}
	The total number of packets transmitted by \name for an arrival sequence $\sigma$, a drop sequence $\phi$ by LQD and the predicted drop sequence $\phi^\prime$ is given by
	\begin{align}\label{eq:alg-lqd}
		\name(\sigma) \ge \frac{LQD(\sigma)}{\underbrace{\eta(\phi, \phi^\prime)}_{error}}
	\end{align}
\end{restatable}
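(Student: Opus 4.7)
My plan is first to unfold Definition~\ref{def:error-function}: substituting $\eta = LQD(\sigma) / FollowLQD(\sigma - \phi^\prime_{TP} - \phi^\prime_{FP})$ into the right-hand side of~(\ref{eq:alg-lqd}) reduces the claim to the equivalent inequality
$$\name(\sigma) \ge FollowLQD(\sigma - \phi^\prime_{TP} - \phi^\prime_{FP}).$$
The real content of the lemma is thus a direct comparison of Credence on the full sequence $\sigma$ with FollowLQD on the prediction-accepted sub-sequence $\sigma' := \sigma - \phi^\prime_{TP} - \phi^\prime_{FP}$.

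I would then split Credence's arrival events by the predictor's verdict. Packets in $\phi^\prime_{TP} \cup \phi^\prime_{FP}$ are dropped by Credence unconditionally; they never contribute to $\name(\sigma)$ and only mutate Credence's threshold vector, which the \textsc{updateThreshold} routine maintains exactly as a shadow of LQD on $\sigma$. Packets in $\sigma'$ are the only ones for which Credence actually runs a threshold test, and that test is structurally identical to the one FollowLQD applies on $\sigma'$ (the safeguard branch can only admit more packets, so it cannot reduce $\name(\sigma)$). The heart of the proof is therefore a coupling argument showing that every packet of $\sigma'$ accepted by FollowLQD is also accepted by Credence when Credence encounters it inside $\sigma$.

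I would formalize this coupling with a per-queue \emph{slack invariant}—informally, that Credence carries at least as much headroom $T_i - q_i$ at every $\sigma'$-arrival as FollowLQD does on the same queue—proved by induction on the arrivals of $\sigma$ in order. The invariant survives $\sigma'$-arrivals routinely, since both algorithms perform analogous threshold updates and (by the inductive hypothesis) agree on acceptance. The main obstacle is preserving the invariant across arrivals in $\phi^\prime_{TP} \cup \phi^\prime_{FP}$, which only Credence sees: such arrivals can trigger push-outs in Credence's LQD-shadow and shrink a threshold that FollowLQD leaves untouched. I expect to resolve this with a monotonicity-style lemma for LQD—roughly, that inserting extra arrivals into an LQD execution never reduces the total headroom $B - \sum_i q_i$ available to the remaining arrivals—proved by its own case analysis on which queue is longest at each push-out event. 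Once the slack invariant is in place it directly yields ``FollowLQD accepts implies Credence accepts,'' and summing over $\sigma'$ gives $\name(\sigma) \ge FollowLQD(\sigma')$, from which the lemma follows by the definition of $\eta$.
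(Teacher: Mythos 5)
Your reduction is the right one and matches the paper: unfolding Definition~\ref{def:error-function} turns the lemma into $\name(\sigma)\ge FollowLQD(\sigma-\phi^\prime_{TP}-\phi^\prime_{FP})$, and the real work is comparing the two executions. But the way you dispose of the safeguard is where the proof actually lives, and your treatment of it is wrong in one place and missing in another. First, packets predicted positive are \emph{not} dropped by \name unconditionally: in Algorithm~\ref{alg:follow-pred} the safeguard branch fires before the prediction is ever consulted, so whenever the longest queue is below $\frac{B}{N}$, \name accepts the packet even if it is in $\phi^\prime_{TP}\cup\phi^\prime_{FP}$. Consequently \name's buffer can hold packets that have no counterpart in the $FollowLQD(\sigma-\phi^\prime_{TP}-\phi^\prime_{FP})$ run, and your parenthetical ``the safeguard branch can only admit more packets, so it cannot reduce $\name(\sigma)$'' is precisely the non-trivial claim: those extra packets consume shared buffer (the $Q(t)<B$ test) and can block later acceptances of packets that $FollowLQD$ does accept. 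The paper's proof spends most of its effort exactly here: it argues each safeguard-accepted extra packet causes at most one additional drop (so the transmitted count does not decrease, since the extra packet is itself transmitted in a drop-tail buffer), and that such packets sit at queue depth at most $\frac{B}{N}$ and hence drain out before the buffer can fill, because filling $B$ units takes at least $\frac{B}{N}$ timeslots at $N$ arrivals per slot. Some argument of this kind is indispensable; you cannot wave it away.

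Second, the engine you propose for the coupling is shaky. The per-queue slack invariant ($T_i-q_i$ for \name at least that of $FollowLQD$ on the same queue) is dubious exactly because of the arrivals only \name sees: a burst of positive-predicted packets drives \name's shadow LQD to fullness and decrements the \emph{longest} threshold, possibly shrinking $T_i$ on a queue whose $FollowLQD$ threshold is untouched, while $q_i$ is unchanged in both runs --- so the invariant can break at the very step you flag as the obstacle. The monotonicity lemma you invoke to repair it is stated backwards: inserting extra arrivals into an LQD execution can only increase total occupancy $\sum_i q_i$ (LQD never proactively drops), hence it \emph{reduces} the headroom $B-\sum_i q_i$, and in any case an aggregate-headroom statement does not control the per-queue threshold shrinkage that violates your invariant. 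The paper avoids this entirely by a drop-accounting argument (threshold drops, prediction drops bounded by $\phi^\prime_{TP}+\phi^\prime_{FP}$, buffer-full drops) rather than a per-queue coupling; as written, your route has a hole at both the safeguard step and the invariant that is supposed to carry it.
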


Equation~\ref{eq:alg-lqd} shows that the throughput of \name reaches closer to (moves away from) $LQD$ as the prediction error becomes smaller (larger). We present a sketch of our proof here. Our full proof appears in Appendix~\ref{app:predictions}. We begin by analyzing the drops incurred by \name based on the drop criterion described in \S\ref{sec:algorithm}.
We argue that for every true positive and false positive predictions, there is at most one drop by \name. All other drops incurred by \name are due to the thresholds. Using these observations, we show that \name transmits at least the number of packets transmitted by $FollowLQD$ over the arrival sequence $\sigma - \phi^\prime_{TP} - \phi^\prime_{FP}$. This leads us to Equation~\ref{eq:alg-lqd}.

Recall that \name bypasses the drop criterion and accepts packets through a safeguard condition under certain cases (see \S\ref{sec:algorithm}). Based on this, we obtain another bound for the throughput of \name in Lemma~\ref{lemma:alg-opt}, which is independent of the prediction error.

\begin{restatable}{Lemma}{lemmaAlgOpt}\label{lemma:alg-opt}
	\name transmits at least $\frac{1}{N}$ times the number of packets transmitted by an offline optimal algorithm OPT \ie $\name(\sigma) \ge \frac{1}{N} \cdot OPT(\sigma)$.
\end{restatable}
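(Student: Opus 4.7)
My plan is to derive $OPT(\sigma) \le N \cdot \name(\sigma)$ using only the safeguard condition in Algorithm~\ref{alg:follow-pred} (the green-highlighted block): whenever the longest queue in \name's buffer is strictly less than $B/N$, every incoming packet is accepted, regardless of the threshold check or the ML oracle. Consequently, whenever \name drops a packet at some time $t$, there is a port $j^\star(t)$ whose \name-queue already holds at least $B/N$ packets at that moment, and $j^\star(t)$ is then guaranteed to transmit at least $B/N$ further packets in the subsequent slots regardless of what arrives or what the predictions say.

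First I would reduce the claim to a bound on \name's drops. Assuming without loss of generality that all queues drain by the end of the sequence, $\name(\sigma) = \Lambda - D_A$, where $\Lambda$ is the total number of arrivals and $D_A$ the total number of packets dropped by \name. Since $OPT(\sigma) \le \Lambda = \name(\sigma) + D_A$, it suffices to prove $D_A \le (N-1)\,\name(\sigma)$.

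Next I would partition the timeline into maximal heavy intervals, \ie intervals throughout which $\max_i q_i^A(t) \ge B/N$ holds. Outside heavy intervals the safeguard guarantees that \name drops nothing, so $D_A$ is accumulated entirely inside such intervals. Inside each heavy interval the port hosting the longest queue remains non-empty at every slot and contributes one transmission per slot to \name's throughput, giving a lower bound on transmissions in terms of the interval length. I would then amortize each drop against these guaranteed future transmissions: at most $N$ packets can be dropped in a single slot, one of which concerns $j^\star$ itself and is absorbed by the concurrent transmission from $j^\star$, so at most $N-1$ per-slot drops need to be spread over the $B/N$ scheduled transmissions from $j^\star(t)$; a careful accounting then yields $D_A \le (N-1)\,\name(\sigma)$.

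The main obstacle I expect is in the amortization: when the identity of the heaviest port $j^\star(t)$ changes during a heavy interval, or when new arrivals to $j^\star$ extend its drain time, the accounting of which future transmission pays for which drop must be done with care. I would handle this by performing the analysis per-heavy-interval and exploiting the monotonic drainage of the heaviest queue to ensure that each transmission receives at most a bounded total amount of charge, which completes the proof via the reduction in Step~1.
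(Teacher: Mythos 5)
Your route has a genuine gap, and it sits exactly where you defer to ``careful accounting.'' The per-slot step is false as stated: in this model the (at most) $N$ arrivals of a timeslot need not be spread over distinct ports, so all of them can be destined to a single queue $k\neq j^\star$, and none of the resulting drops is ``absorbed by the concurrent transmission from $j^\star$.'' Concretely, let the adversary first fill one queue to length $B$ (all these packets are accepted, since the buffer is never full and the thresholds track LQD), and then send $N$ packets per slot to an empty queue whose predictions are all false positives: \name then drops $N$ packets per slot while transmitting exactly one, for up to $B$ consecutive slots. So no per-slot bound better than $N$ drops per transmission holds, and your fallback amortization over the $\frac{B}{N}$ packets resident in $j^\star$ does not repair this, because those same resident packets are charged again in every subsequent slot of the heavy interval; their total charge can grow like $N$ times the interval length rather than staying below $N-1$. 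In the scenario above, the inequality $D_A\le (N-1)\,\name(\sigma)$ that your reduction requires survives only because of the $\approx \frac{NB}{N-1}$ acceptances made while the long queue was being built up, i.e.\ the argument must charge drops against \emph{past} acceptances, not against the $\frac{B}{N}$ packets present at the drop instant --- a global accounting you never carry out. Note also that passing through $OPT(\sigma)\le\Lambda$ forces you to prove this strictly stronger and essentially tight drop bound, which is considerably more delicate than the lemma itself.

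The paper avoids all of this by not counting drops at all: it uses the safeguard only to observe that at any moment \name drops a packet, some queue already holds at least $\frac{B}{N}$ packets, all of which \name (being non-preemptive) is guaranteed to transmit, and it matches every $B$ packets transmitted by OPT against these $\frac{B}{N}$ packets transmitted by \name, yielding $OPT(\sigma)\le N\cdot\name(\sigma)$ directly. If you want to keep your drop-counting formulation, you would need to add the missing ingredient --- charging the drops of a congestion epoch against the acceptances that created the $\ge\frac{B}{N}$ queue before the epoch, with an argument that those acceptances are not reused across epochs --- rather than the per-slot absorption claim, which is simply not available.
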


Lemma~\ref{lemma:alg-opt} shows that irrespective of the prediction error (even under large error), \name can always transmit at least $\frac{1}{N}$ fraction of the packets transmitted by an optimal solution. Our proof of Lemma~\ref{lemma:alg-opt} is based on the fact that upon a drop by \name, there is at least one queue with length $\frac{B}{N}$ (the safeguard condition). As a result, for every $B$ packets transmitted by $OPT$, there are at least $\frac{B}{N}$ number of packets transmitted by \name over the arrival sequence $\sigma$. This leads us to the bound expressed in Lemma~\ref{lemma:alg-opt}.

Finally, using the above two results, we prove the competitive ratio of \name as a function of the prediction error. \name's competitive ratio satisfies the three desirable properties: $2$-consistent, $N$-robust and exhibits smoothness.
\begin{restatable}{theorem}{theoremCRatioPred}\label{theorem:c-ratio-pred}
	The competitive ratio of \name grows linearly from $1.707$ to $N$ based on the prediction error $\eta(\phi,\phi^\prime)$, where $N$ is the number of ports, $\phi$ is the drop sequence of LQD and $\phi^\prime$ is the predicted sequence of drops \ie the competitive ratio is at most $\min(1.707\ \eta(\phi,\phi^\prime), N)$.
\end{restatable}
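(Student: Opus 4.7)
The plan is to combine the two key lemmas that have already been set up (Lemma~\ref{lemma:alg-lqd} and Lemma~\ref{lemma:alg-opt}) with the known competitive ratio of LQD, and to take the better of the two resulting bounds. Since the theorem claims an upper bound of the form $\min(\cdot, \cdot)$ on the competitive ratio, the structure of the argument is essentially to derive each of the two arguments of the $\min$ separately and then take the tighter of them pointwise.

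First, I would start from Lemma~\ref{lemma:alg-lqd}, which gives $\name(\sigma)\ge LQD(\sigma)/\eta(\phi,\phi')$ for every arrival sequence~$\sigma$. Using the fact, cited in Table~\ref{table:c-ratio} and recalled in \S\ref{sec:overview}, that LQD is $1.707$-competitive, i.e.\ $OPT(\sigma)\le 1.707\cdot LQD(\sigma)$, I would chain the two inequalities to obtain
\[
\name(\sigma)\ \ge\ \frac{LQD(\sigma)}{\eta(\phi,\phi')}\ \ge\ \frac{OPT(\sigma)}{1.707\cdot \eta(\phi,\phi')},
\]
which rearranges to $OPT(\sigma)/\name(\sigma)\le 1.707\cdot \eta(\phi,\phi')$. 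This gives the first argument of the $\min$ and corresponds to the prediction-aware branch of the bound.

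Second, I would invoke Lemma~\ref{lemma:alg-opt}, which directly yields $\name(\sigma)\ge OPT(\sigma)/N$, independent of any prediction quality, and therefore $OPT(\sigma)/\name(\sigma)\le N$. This is the safeguard branch and corresponds to the second argument of the $\min$.

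Finally, since both inequalities hold simultaneously on every arrival sequence, I would take their pointwise minimum to conclude
\[
\frac{OPT(\sigma)}{\name(\sigma)}\ \le\ \min\bigl(1.707\cdot \eta(\phi,\phi'),\ N\bigr),
\]
which is exactly the claimed competitive ratio. The consistency ($\eta=1\Rightarrow 1.707$), robustness ($\eta\to\infty\Rightarrow N$), and smoothness properties then follow by inspection of the two arguments of the $\min$. I do not expect any serious obstacle at this stage, because all the heavy lifting has been done in Lemmas~\ref{lemma:alg-lqd} and~\ref{lemma:alg-opt}; the only mildly delicate point is to be explicit that the competitive ratio of LQD used here is a worst-case bound valid for the same sequence $\sigma$, so the chaining with Lemma~\ref{lemma:alg-lqd} is legitimate pointwise and not merely in expectation.
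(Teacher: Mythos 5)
Your proposal is correct and follows exactly the paper's own argument: chain Lemma~\ref{lemma:alg-lqd} with the known $1.707$-competitiveness of LQD to get the prediction-dependent branch, use Lemma~\ref{lemma:alg-opt} for the $N$ branch, and take the minimum of the two bounds on $OPT(\sigma)/\name(\sigma)$. No gaps; the paper's proof in Appendix~\ref{app:predictions} is the same two-step combination.
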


Our proof follows from Lemmas~\ref{lemma:alg-lqd}~and~\ref{lemma:alg-opt} (see Appendix~\ref{app:predictions}). Theorem~\ref{theorem:c-ratio-pred} essentially shows how \name's competitive ratio in terms of throughput improves from $N$ to $1.707$ as the prediction error (Definition~\ref{def:error-function}) decreases.
We note that our analysis compares an algorithm against an optimal offline algorithm over a fixed packet arrival sequence. This allows us to analyze the competitive ratio via an error function defined over the corresponding arrival sequence. However, real-world traffic is responsive in nature due to congestion control and packet retransmissions. Although we have used $\eta$ as our error function to express the competitiveness of \name, in our evaluation (\S\ref{sec:evaluation}), we compare \name with state-of-the-art approaches under realistic datacenter workloads and we also present the quality of our predictions using more natural error functions that are widely used for machine learning models.

\subsection{Practicality of \name}
\label{sec:practicality}

\name's algorithm itself is simple and close to complexity of the longest queue drop (push-out). However, the machine-learned oracle producing the predictions adds additional complexity in order to deploy \name on switches. Overall, there are three main parts of \name that contribute to additional complexity in terms of memory and computation: \first finding the longest queue (and its threshold), \second remembering thresholds and \third obtaining predictions.

\myitem{Finding the longest queue (and its threshold):} For every packet arrival, \name requires finding the longest queue for the safeguard condition described in \S\ref{sec:algorithm}. Additionally, \name requires finding the largest threshold during the threshold updates upon every packet arrival. The maximum value search operation has a run-time complexity of $\mathcal{O}(N)$, where $N$ is the number of ports. Note that typical datacenter switches have a relatively small number of ports \eg $64$ ports in Broadcom Tomahawk4~\cite{broadcom}. Prior work in the context of LQD proposes an approximation to further reduce the complexity of finding the longest queue~\cite{tcpgigabit} to $\mathcal{O}(1)$.
The average case complexity can further be reduced by only maintaining the list of queue lengths (and their thresholds) that are larger than $\frac{B}{N}$. This is sufficient since the safeguard condition checks whether the longest queue is less than $\frac{B}{N}$, which is the same as checking that no queue is longer than $\frac{B}{N}$. Similarly, the largest threshold search during the threshold updates is only triggered when the buffer is full. In this case, the longest queue must be at least $\frac{B}{N}$.
Given that switches are becoming more and more computationally capable, we believe that a basic function such as finding the maximum value in a small list is feasible to implement within the available~resources.

\begin{figure*}
	\centering
	\begin{subfigure}{1\linewidth}
		\centering
		\includegraphics[width=0.8\linewidth]{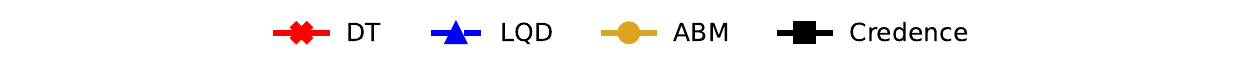}
		\vspace{-3mm}
	\end{subfigure}
	\begin{subfigure}{0.248\linewidth}
		\centering
		\includegraphics[width=1\linewidth]{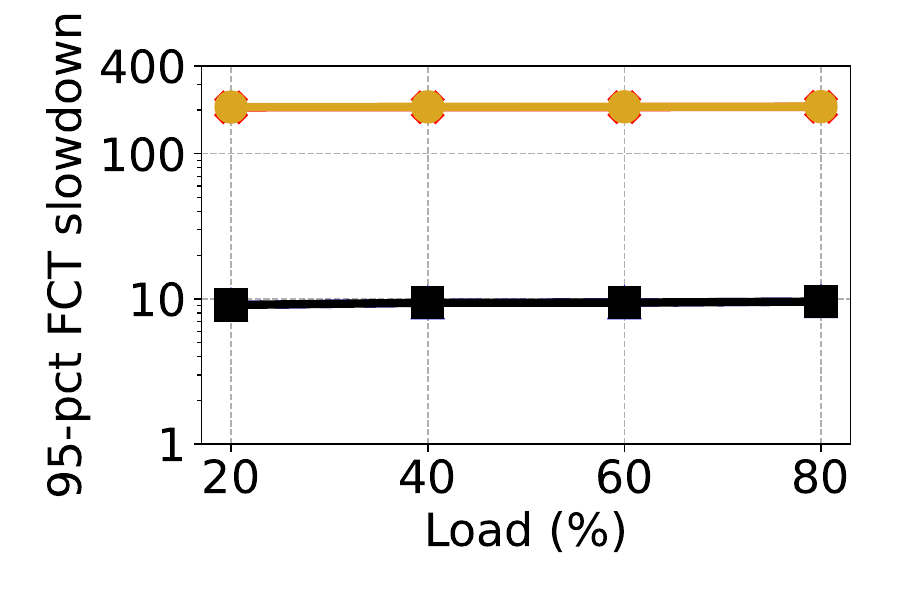}
		\caption{Incast flows}
		\label{fig:dctcp-loads-incastfct}
	\end{subfigure}\hfill
	\begin{subfigure}{0.248\linewidth}
		\centering
		\includegraphics[width=1\linewidth]{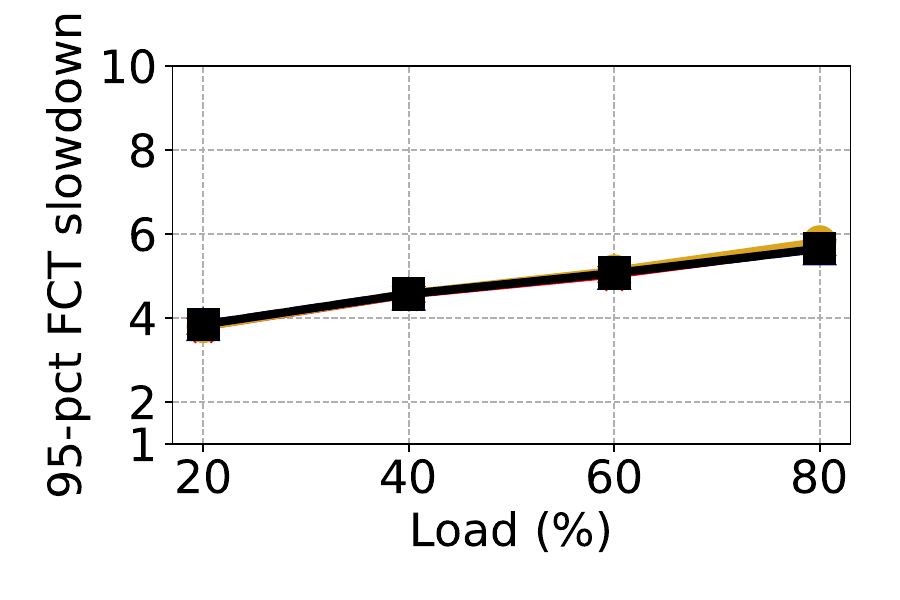}
		\caption{Short flows}
		\label{fig:dctcp-loads-shortfct}
	\end{subfigure}\hfill
	\begin{subfigure}{0.248\linewidth}
		\centering
		\includegraphics[width=1\linewidth]{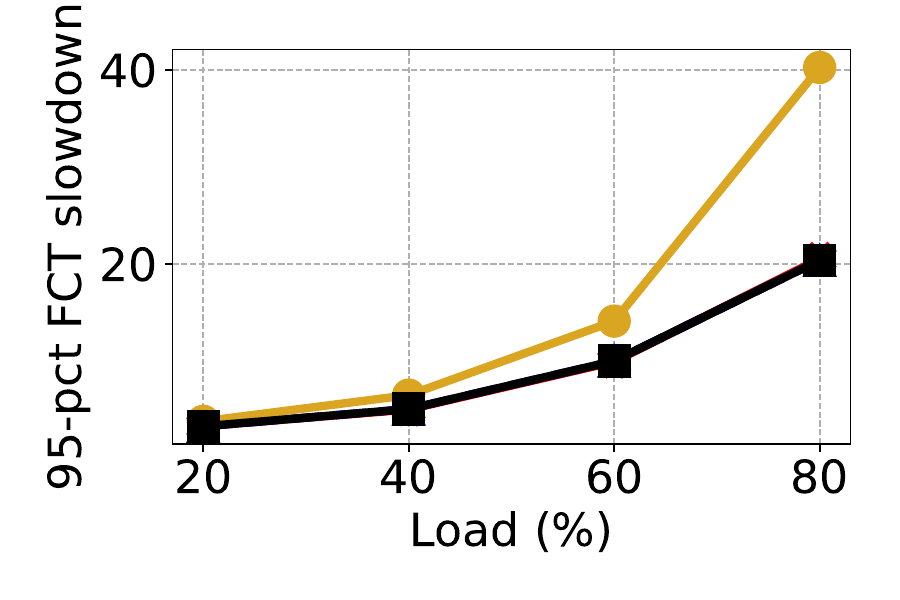}
		\caption{Long flows}
		\label{fig:dctcp-loads-longfct}
	\end{subfigure}\hfill
	\begin{subfigure}{0.248\linewidth}
		\centering
		\includegraphics[width=1\linewidth]{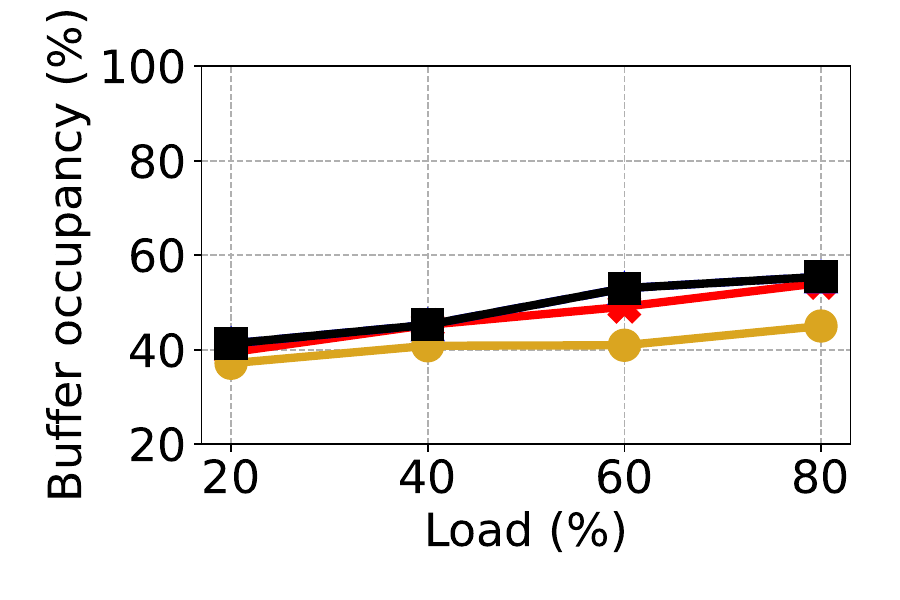}
		\caption{Shared buffer occupancy}
		\label{fig:dctcp-loads-buffer}
	\end{subfigure}\hfill
	\vspace{-2mm}
	\caption{Performance of \name across various loads of websearch workload and incast workload at a burst size $50$\% of the buffer size, with DCTCP as the transport protocol. As the load increases, ABM penalizes long flows. DT and ABM are unable to absorb bursts of size $50$\% of the buffer size. \name achieves superior burst absorption and does not penalize long flows.}
	\label{fig:dctcp-loads}
	\vspace{-3mm}
\end{figure*}
\begin{figure*}
	\centering
	\begin{subfigure}{0.248\linewidth}
		\centering
		\includegraphics[width=1\linewidth]{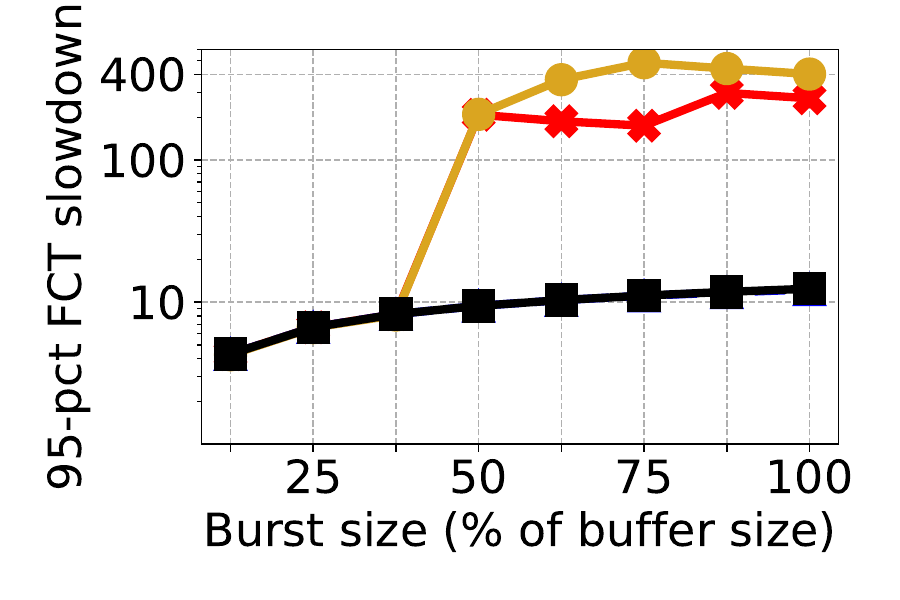}
		\caption{Incast flows}
		\label{fig:dctcp-bursts-incastfct}
	\end{subfigure}\hfill
	\begin{subfigure}{0.248\linewidth}
		\centering
		\includegraphics[width=1\linewidth]{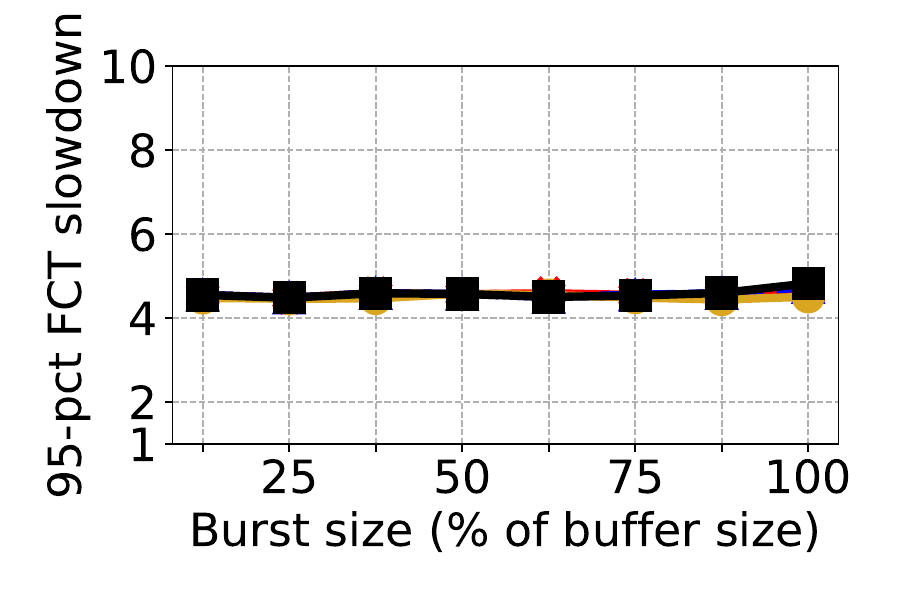}
		\caption{Short flows}
		\label{fig:dctcp-bursts-shortfct}
	\end{subfigure}\hfill
	\begin{subfigure}{0.248\linewidth}
		\centering
		\includegraphics[width=1\linewidth]{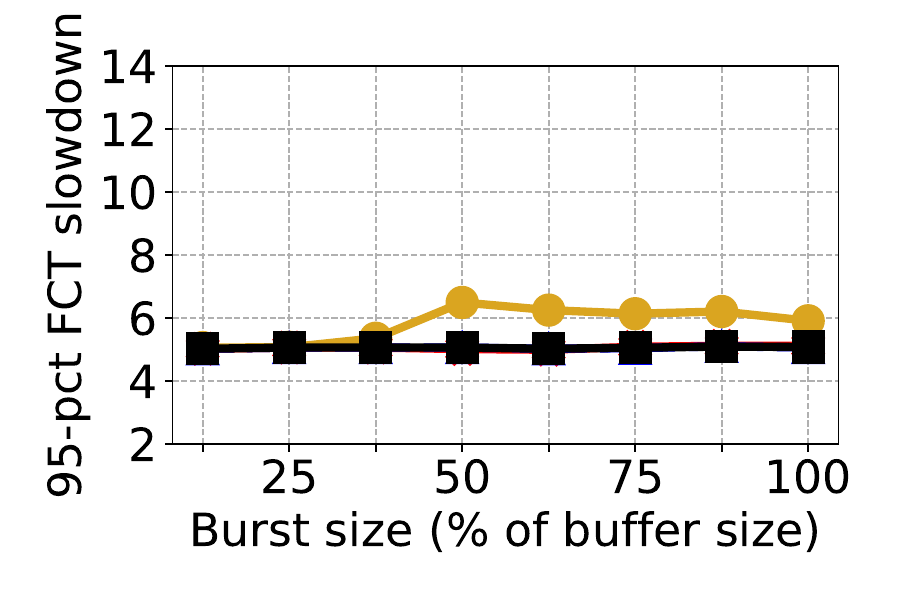}
		\caption{Long flows}
		\label{fig:dctcp-bursts-longfct}
	\end{subfigure}\hfill
	\begin{subfigure}{0.248\linewidth}
		\centering
		\includegraphics[width=1\linewidth]{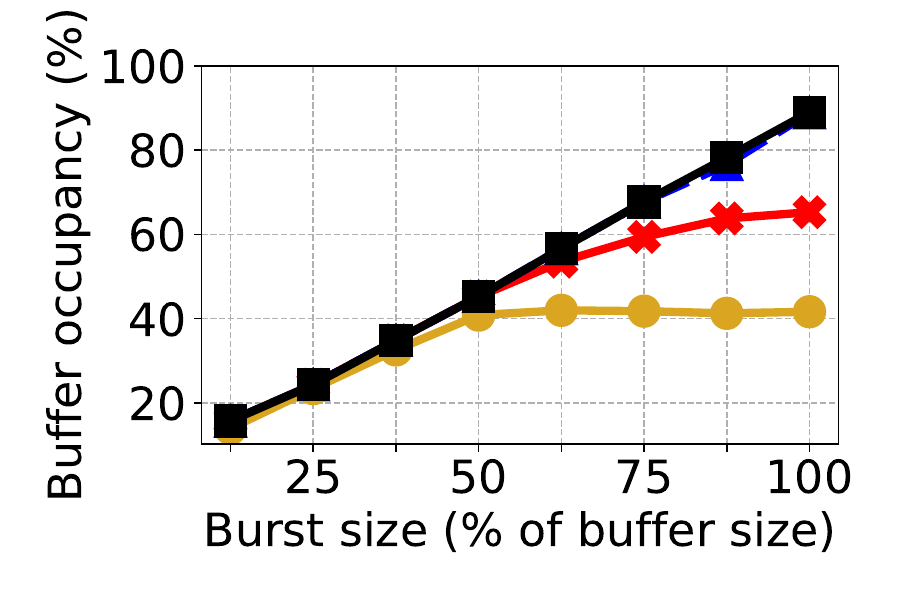}
		\caption{Shared buffer occupancy}
		\label{fig:dctcp-bursts-buffer}
	\end{subfigure}\hfill
	\vspace{-2mm}
	\caption{Performance of \name across various burst sizes of incast workload and websearch workload at $40$\% load, with DCTCP as the transport protocol. At small burst sizes, DT and ABM achieve similar performance compared to \name but as the burst size increases, \name outperforms DT and ABM in terms of FCTs for incast flows (burst absorption).}
	\label{fig:dctcp-bursts}
	\vspace{-3mm}
\end{figure*}

\myitem{Thresholds memory:} In contrast to existing threshold-based algorithms, \name's thresholds depend on their previous value \ie thresholds must be remembered. As a result, \name adds a small memory overhead of $\mathcal{O}(N)$ for the thresholds. The threshold calculations are in fact much simpler than existing schemes and do not add any further computational complexity since \name only requires adding and subtracting the threshold values by the packet size.

\myitem{Predictions:} Our prediction model (drop or accept) essentially boils down to binary classification problem. To this end, numerous ML techniques exist ranging from linear classifiers to more advanced neural networks. In view of practicality, we consider random forests as they are implementable in programmable hardware~\cite{10229100,busse2019pforest}. In order to reduce the prediction latency, we also limit the number of trees and the maximum depth of our trained random forest model. We find that, even a model trained with a maximum depth of four, and as low as four to eight trees achieves reasonable prediction error (precision $\approx 0.65$). Further, to reduce the complexity of the model, we also limit the number of features to four: queue length, total shared buffer occupancy and their corresponding moving averages (exponentially weighted) over one round-trip time.

The fundamental blocks required for \name are all individually practical in today's hardware. Unfortunately, modifying the buffer sharing algorithm and integrating it with predictions requires switch vendor support. Even in programmable switches, the traffic manager is merely a blackbox that implements Dynamic Thresholds with a single parameter exposed to the user. Given the superior performance of \name (\S\ref{sec:evaluation}), we wish to gain attention from switch vendors to discuss further on the implementation of \name.

\section{Evaluation}
\label{sec:evaluation}
We evaluate the performance of \name and compare it against state-of-the-art buffer sharing algorithms in the context of datacenter networks. Our evaluation aims at answering three main questions:

\myitemit{\textbf{(Q1)} Does} \name \textit{improve the burst absorption?}

\noindent Our evaluation shows that \name significantly improves the burst absorption capabilities of switches. We find that \name improves the $95$-percentile flow completion times for incast flows by up to $95.4$\% compared to Dynamic Thresholds (DT) and by up to $96.9$\% compared to ABM.

\myitemit{\textbf{(Q2)} Can} \name \textit{improve the flow completion times for short flows as well as long flows?}

\noindent We find that \name performs similar to existing approaches in terms of $95$-percentile flow completion times for short flows and improves upon ABM by up to $22$\% correspondingly for long flows.

\myitemit{\textbf{Q3} How does prediction error impact the performance of} \name\textit{ in terms of flow completion times?}

\noindent We increase the error of our prediction by artificially flipping the predictions with a probability. As the probability increases (error increases), we find that \name sustains performance up to $0.01$ probability and smoothly degrades in performance beyond $0.01$.

\subsection{Setup}
Our evaluation is based on packet-level simulations in NS3~\cite{ns3}. We embed a Python interpreter within NS3 using pybind11~\cite{pybind11} in order to obtain predictions from a random forest model trained with scikit-learn~\cite{scikit}.

\myitem{Topology:} We consider a leaf-spine topology with $256$ servers organized into $4$ spines and $16$ leaves. Each link has a propagation delay of $3\mu$s leading to a round-trip-time of $25.2\mu$s. The capacity is set to $10$Gbps for all the links leading to $4:1$ oversubscription similar to prior works~\cite{278346,10.1145/3387514.3405899,abm}. All the switches in our topology have $5.12$KB buffer-per-port-per-Gbps similar to Broadcom Tomahawk~\cite{broadcom}.

\myitem{Workloads:} We generate traffic using websearch~\cite{10.1145/1851182.1851192} flow size distribution that is based on measurements from real-world datacenter workloads. We vary the load on the network in the range $20$-$80$\%. We additionally generate traffic using a synthetic incast workload similar to prior work~\cite{abm}. Our incast workload mimics the query-response behavior of a distributed file storage system where each query results in a bursty response from multiple servers. We set the query request rate to $2$ per second from each server, and we vary the burst size in the range $10$-$100$\% of the switch buffer size. We use DCTCP~\cite{10.1145/1851182.1851192} and PowerTCP~\cite{278346} as transport protocols.

\begin{figure*}
	\centering
	\begin{subfigure}{1\linewidth}
		\centering
		\includegraphics[width=0.8\linewidth]{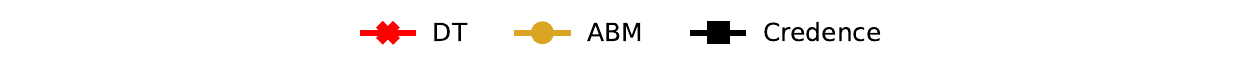}
		\vspace{-3mm}
	\end{subfigure}
	\begin{subfigure}{0.248\linewidth}
		\centering
		\includegraphics[width=1\linewidth]{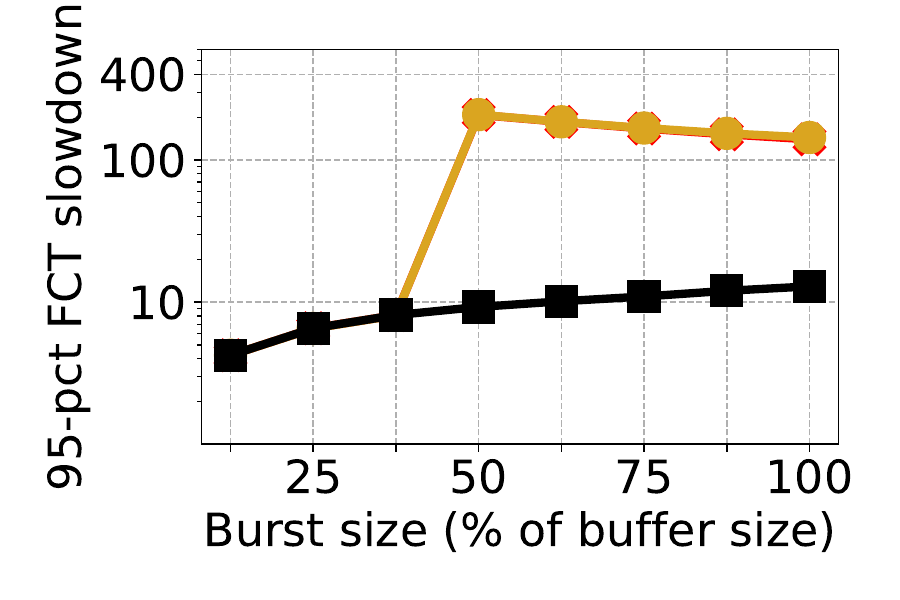}
		\caption{Incast flows}
		\label{fig:powertcp-bursts-incastfct}
	\end{subfigure}\hfill
	\begin{subfigure}{0.248\linewidth}
		\centering
		\includegraphics[width=1\linewidth]{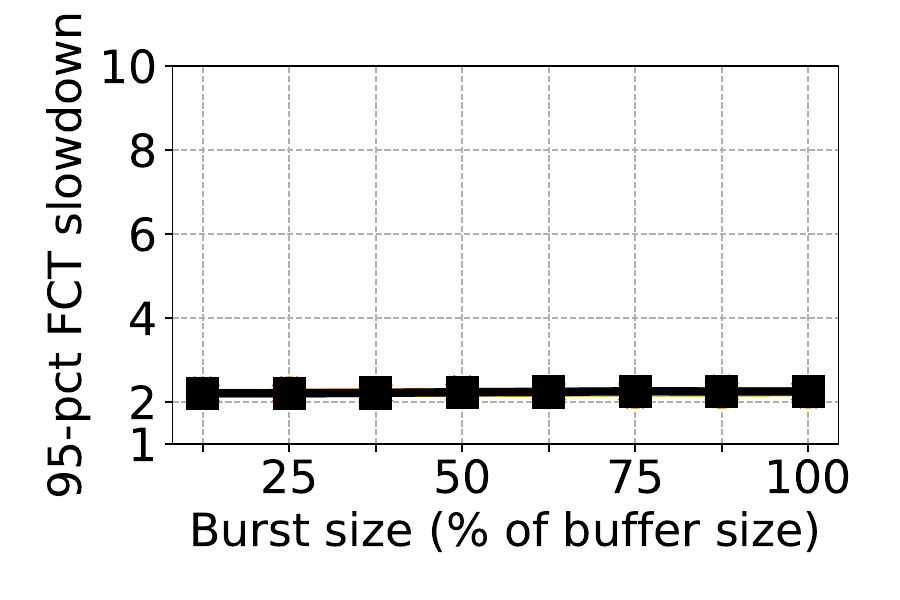}
		\caption{Short flows}
		\label{fig:powertcp-bursts-shortfct}
	\end{subfigure}\hfill
	\begin{subfigure}{0.248\linewidth}
		\centering
		\includegraphics[width=1\linewidth]{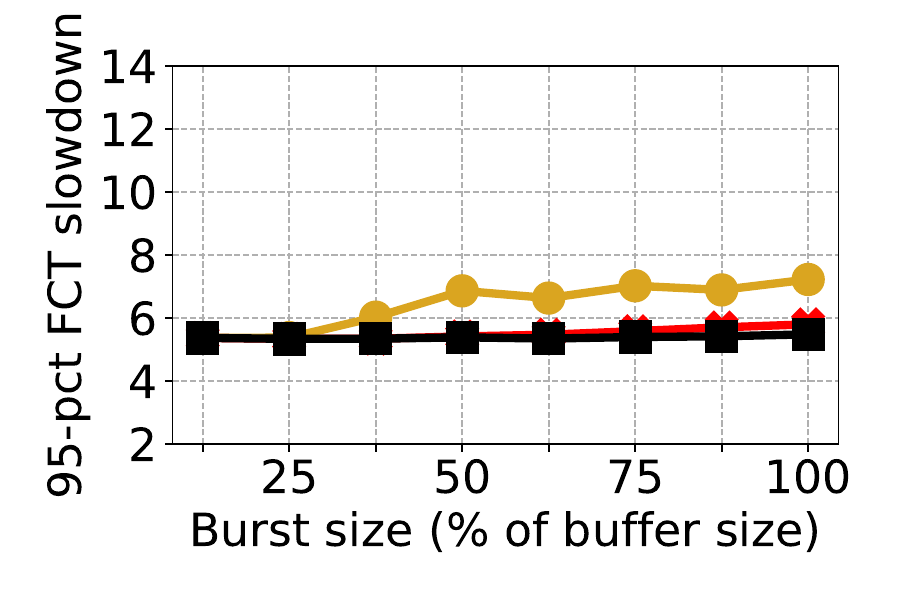}
		\caption{Long flows}
		\label{fig:powertcp-bursts-longfct}
	\end{subfigure}\hfill
	\begin{subfigure}{0.248\linewidth}
		\centering
		\includegraphics[width=1\linewidth]{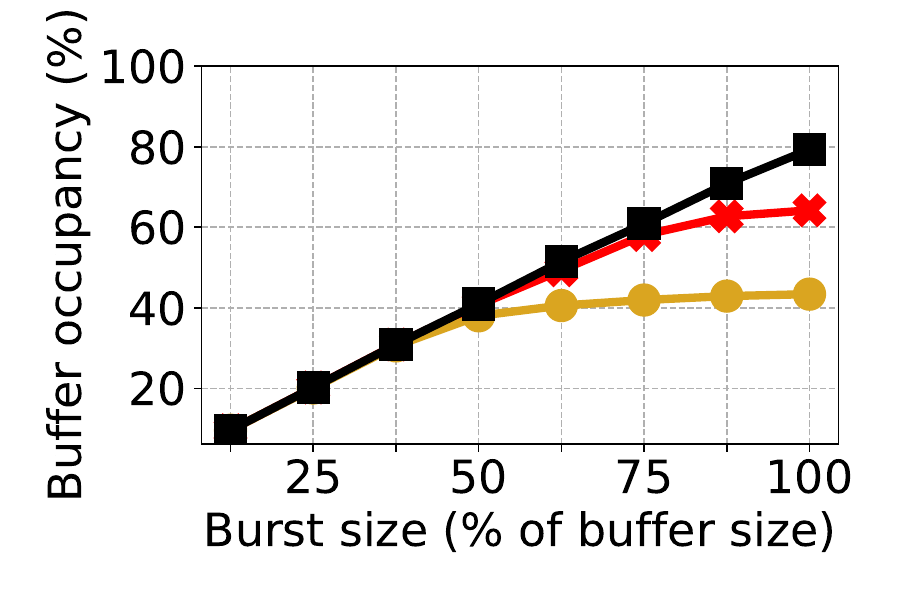}
		\caption{Shared buffer occupancy}
		\label{fig:powertcp-bursts-buffer}
	\end{subfigure}\hfill
	\vspace{-2mm}
	\caption{Performance of \name across various burst sizes of incast workload and websearch workload at $40$\% load, with PowerTCP as the transport protocol. Even with advanced congestion control, DT and ABM only benefit in terms of FCTs for long flows, but \name outperforms in terms of FCTs for incast flows (burst absorption) as well as FCTs for long flows.}
	\label{fig:powertcp-bursts}
\end{figure*}

\myitem{Comparisons \& metrics:} We compare \name with Dynamic Thresholds~\cite{choudhury1998dynamic} (the default algorithm in datacenter switches), ABM~\cite{abm} and LQD (push-out). Hereafter, we refer to Dynamic Thresholds as DT. We report four performance metrics: $95$-percentile flow completion times for short flows ($\le 100$KB), incast flows (incast workload), long flows ($\ge 1$MB), and the $99$-percentile shared buffer occupancy. We present the CDFs of flow completion times in~Appendix~\ref{app:results}.

\noindent\textbf{Predictions:} We collect packet-level traces from each switch in our topology when using LQD (push-out) as the buffer sharing algorithm. Each trace consists of five values corresponding to each packet arrival: \first queue length, \second average queue length, \third shared buffer occupancy, \fourth average shared buffer occupancy and \fifth accept (or drop). We train a random forest classifier using queue length and shared buffer occupancy as features and the model predicts packet drops. We set the maximum depth for each tree to $4$ in view of practicality. At a train-test split $0.6$ of our LQD trace, based on our parameter sweep across the number of trees used for our classifier (Figure~\ref{fig:scores} in Appendix~\ref{app:results}), we set the number of trees to $4$. We observe that the quality of our predictions does not improve significantly beyond four trees in our datasets. This gives us an accuracy of $0.99$, error score $\frac{1}{\eta}$ $0.99$\footnote{The high values of accuracy and our error score $\frac{1}{\eta}$ are attributed to the dataset being skewed \ie congestion is not persistent.} (inverse of our error function based on Definition~\ref{def:error-function}), precision of $0.65$, recall of $0.35$ and F1 score of $0.45$. We defer the definitions of these prediction scores to Appendix~\ref{app:predictions} as they are standard in the literature. We train our model with an LQD trace corresponding to websearch workload at $80$\% load, and a burst size of $75$\% buffer size for the incast workload, using DCTCP as the transport protocol.
We use the same trained model in all our evaluations. We ensure that our test scenarios are different from the training dataset by using different random seeds in addition to different traffic conditions (different loads and different burst sizes) in each~experiment in our evaluation.

\myitem{Configuration:} \name is parameter-less, and it takes input from an oracle (described above) that predicts packet drops. We set $\alpha=0.5$ for DT and ABM similar to prior work~\cite{abm}. ABM uses $\alpha=64$ for all the packets which arrive during the first round-trip-time~\cite{abm}. We configure DCTCP according to~\cite{10.1145/1851182.1851192} and PowerTCP according to~\cite{278346}.

\subsection{Results}

\begin{figure*}
	\centering
	\begin{subfigure}{1\linewidth}
		\centering
		\includegraphics[width=0.8\linewidth]{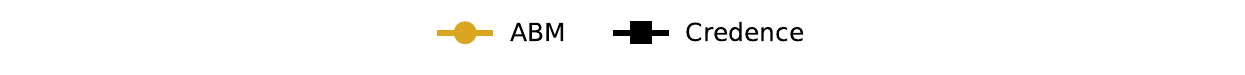}
		\vspace{-3mm}
	\end{subfigure}
	\begin{subfigure}{0.248\linewidth}
		\centering
		\includegraphics[width=1\linewidth]{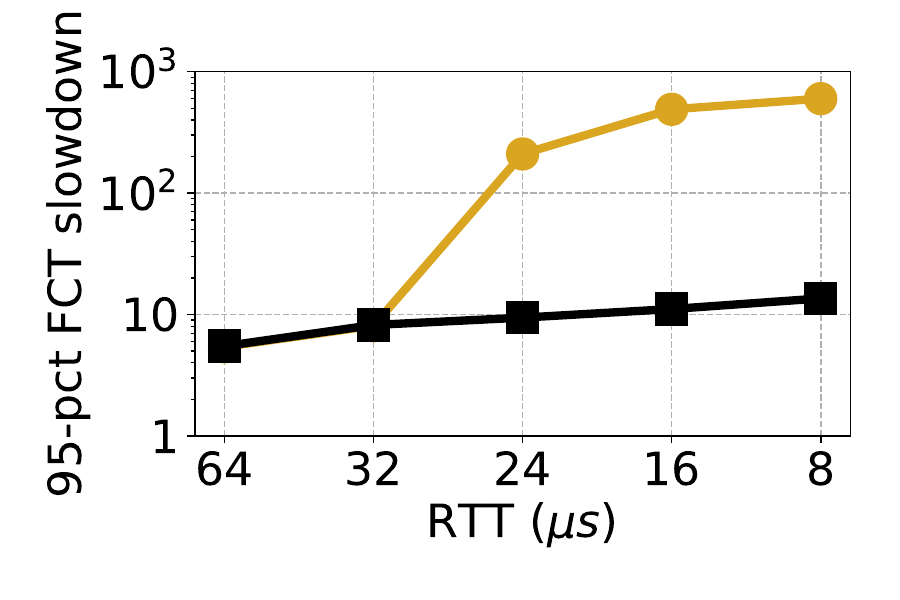}
		\caption{Incast flows}
		\label{fig:latency-incastfct}
	\end{subfigure}\hfill
	\begin{subfigure}{0.248\linewidth}
		\centering
		\includegraphics[width=1\linewidth]{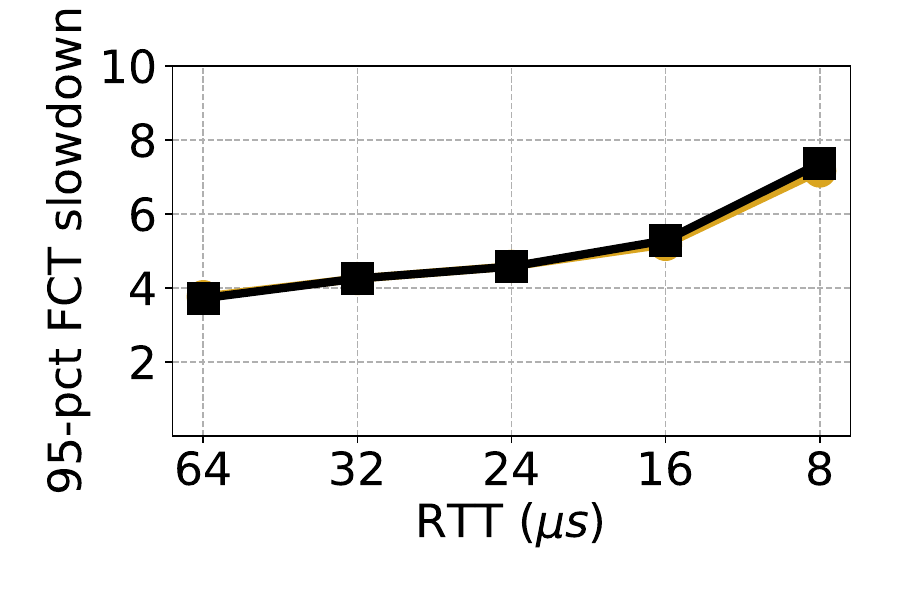}
		\caption{Short flows}
		\label{fig:latency-shortfct}
	\end{subfigure}\hfill
	\begin{subfigure}{0.248\linewidth}
		\centering
		\includegraphics[width=1\linewidth]{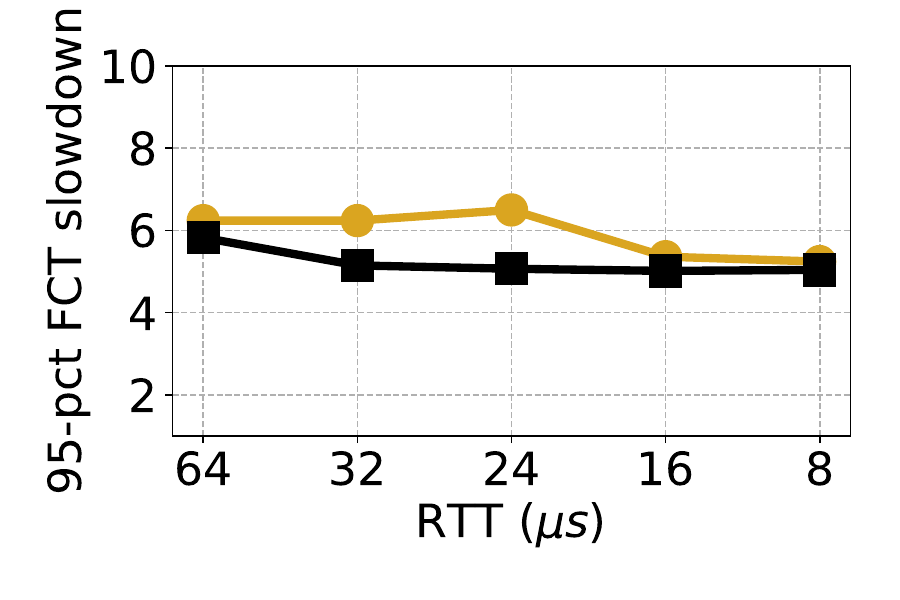}
		\caption{Long flows}
		\label{fig:latency-longfct}
	\end{subfigure}\hfill
	\begin{subfigure}{0.248\linewidth}
		\centering
		\includegraphics[width=1\linewidth]{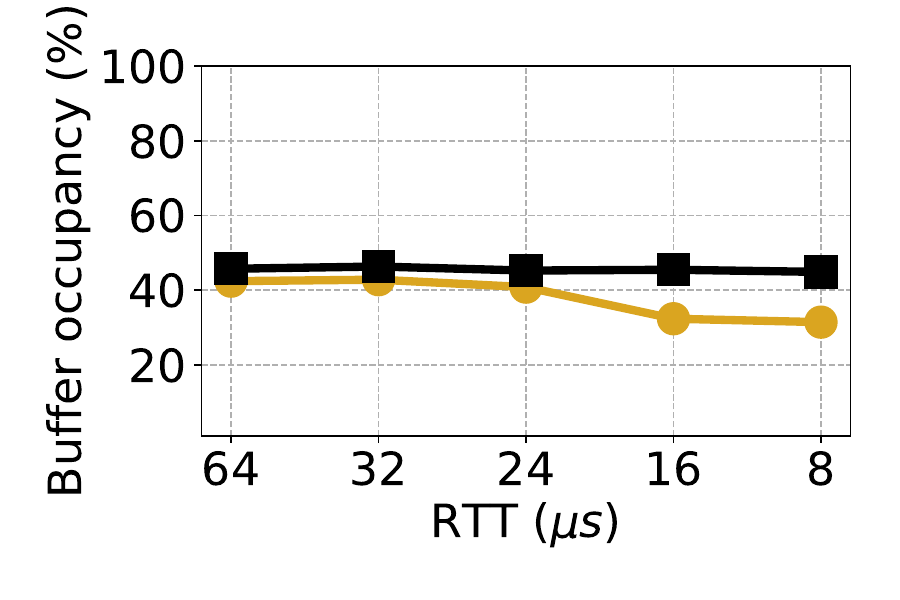}
		\caption{Shared buffer occupancy}
		\label{fig:latency-buffer}
	\end{subfigure}\hfill
	\vspace{-2mm}
	\caption{ABM is sensitive to RTT and performs significantly worse compared to \name at low RTTs. At high RTTs, ABM performs similar to \name.}
	\label{fig:latency}
	\vspace{-5mm}
\end{figure*}
\begin{figure*}
	\centering
	\begin{subfigure}{1\linewidth}
		\centering
		\includegraphics[width=0.8\linewidth]{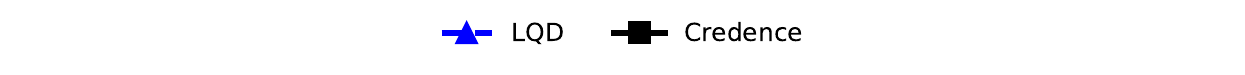}
		\vspace{-3mm}
	\end{subfigure}
	\begin{subfigure}{0.248\linewidth}
		\centering
		\includegraphics[width=1\linewidth]{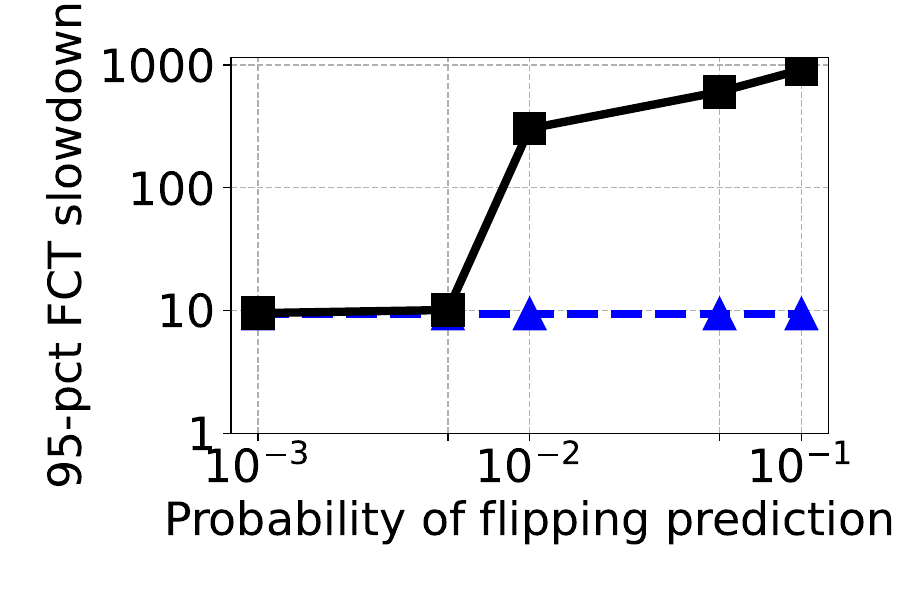}
		\caption{Incast flows}
		\label{fig:error-incastfct}
	\end{subfigure}\hfill
	\begin{subfigure}{0.248\linewidth}
		\centering
		\includegraphics[width=1\linewidth]{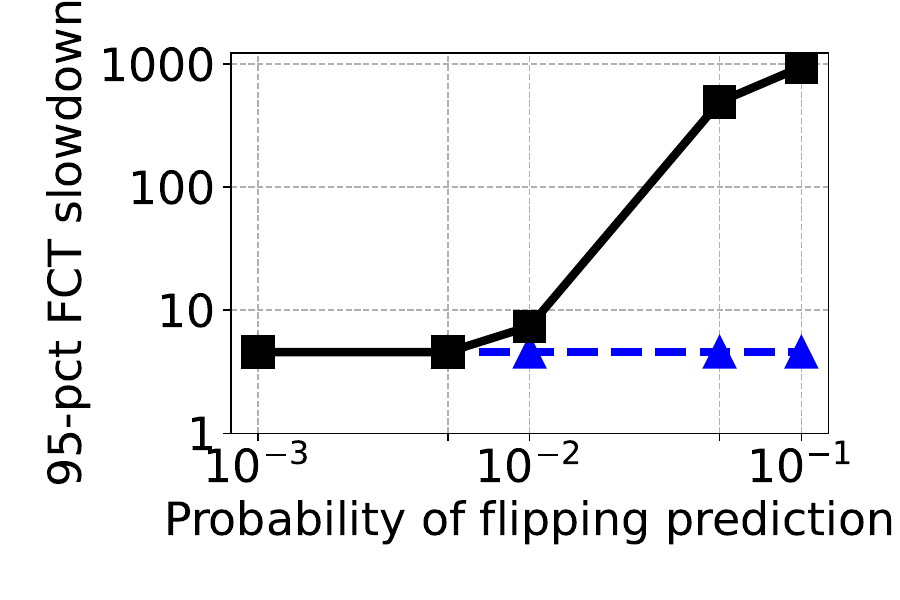}
		\caption{Short flows}
		\label{fig:error-shortfct}
	\end{subfigure}\hfill
	\begin{subfigure}{0.248\linewidth}
		\centering
		\includegraphics[width=1\linewidth]{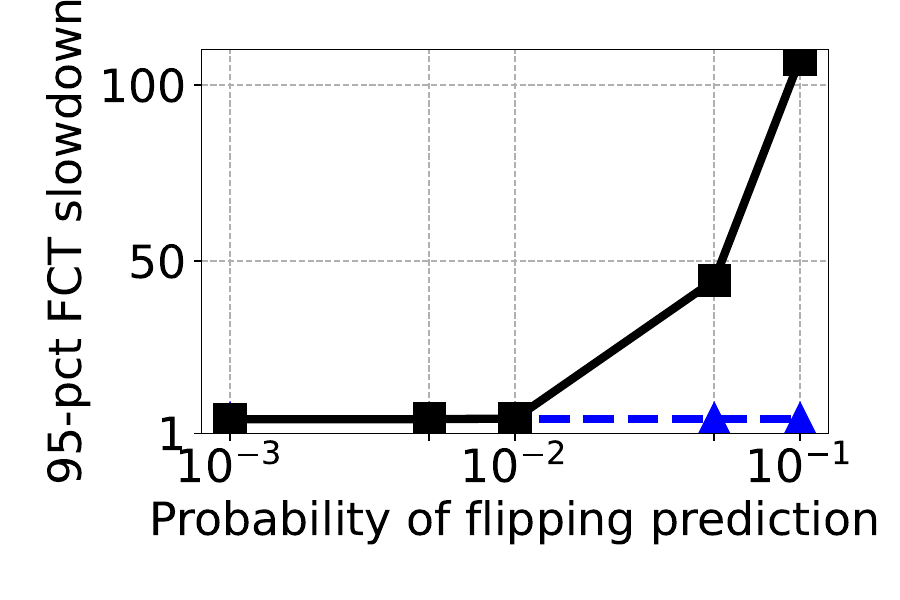}
		\caption{Long flows}
		\label{fig:error-longfct}
	\end{subfigure}\hfill
	\begin{subfigure}{0.248\linewidth}
		\centering
		\includegraphics[width=1\linewidth]{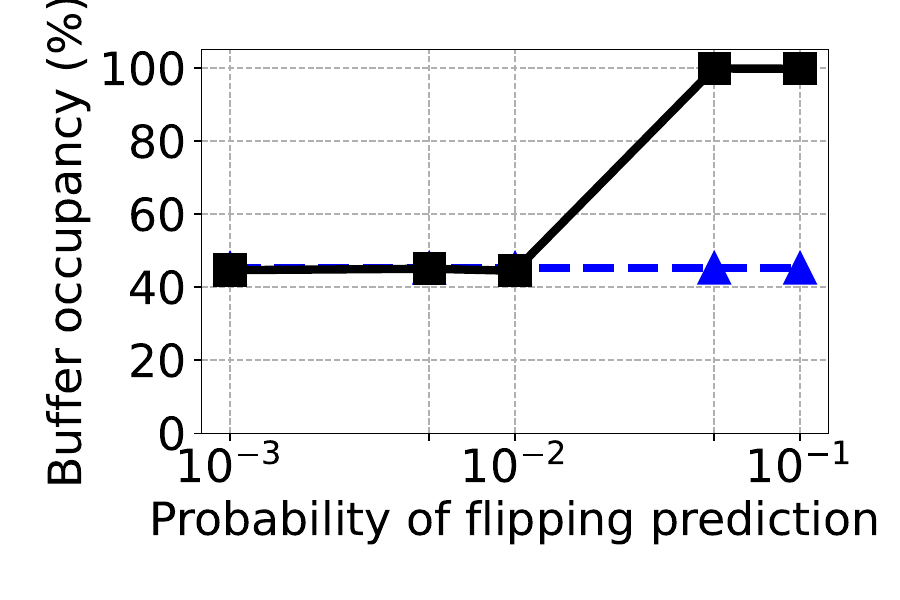}
		\caption{Shared buffer occupancy}
		\label{fig:error-buffer}
	\end{subfigure}\hfill
	\vspace{-2mm}
	\caption{Even though the predictions from our random forest classifier are intentionally flipped (to increase error), \name performs close to LQD up to $0.005$ flipping probability but smoothly diverges from LQD at $0.01$ flipping probability.}
	\label{fig:error}
	\vspace{-1mm}
\end{figure*}

\myitem{\name significantly improves burst absorption:}
In Figure~\ref{fig:dctcp-loads-incastfct}, using DCTCP as the transport protocol, we generate websearch traffic across various loads in the range $20$-$80$\% and generate incast traffic with a burst size of $50$\% buffer size.  We observe that \name performs close to the optimal performance of LQD. \name improves the $95$-percentile flow complete times for incast flows by $95.50$\% compared to DT, and by $95.53$\% compared to ABM. In Figure~\ref{fig:dctcp-bursts-incastfct}, we set the load of websearch traffic at $40$\% and vary the burst size for incast workload in the range $10$-$100$\% buffer size. \name performs similar to DT and ABM for small burst sizes. As the burst size increases, \name improves the $95$-percentile flow completion times for incast workload by $95$\% on average compared to DT, and by $96.92$\% on average compared to ABM. In Figure~\ref{fig:powertcp-bursts-incastfct}, even when using PowerTCP as the transport protocol, we see that \name improves the $95$-percentile flow completion times for incast flow by $93.27$\% on average compared to DT and by $93.36$\% on average compared to ABM. In terms of burst absorption, both DT and ABM are drop-tail algorithms, hence they face the drawbacks discussed in \S\ref{sub:drawbacks}. \name relies on predictions and unlocks the performance of LQD (push-out) as shown by our results in Figure~\ref{fig:dctcp-loads-incastfct} and Figure~\ref{fig:dctcp-bursts-incastfct}.

\myitem{\name improves long flows FCTs:} \name not only improves the burst absorption but also improves the flow completion times for long flows. In Figure~\ref{fig:dctcp-loads-longfct}, at $50$\% burst size for incast workload and across various loads of websearch workload, we observe that \name performs similar to DT in terms of $95$-percentile flow completion times for long flows and improves upon ABM on average by $28.49$\%. At $80$\% load, \name improves upon ABM by $49.34$\%. Across various burst sizes, and at $40$\% load of websearch workload, we observe from Figure~\ref{fig:dctcp-bursts-longfct} that \name improves upon ABM by up to $22.02$\% and by $12.02$\% on average. With PowerTCP as the transport protocol (Figure~\ref{fig:powertcp-bursts-longfct}), \name improves the $95$-percentile flow completion times for long flows by $3.31$\% on average compared to DT and by $17.35$\% compared to ABM. At a burst size of $100$\% buffer size, \name improves the flow completion times by $5.49$\% compared to DT and by $24.09$\% compared to ABM. As described in \S\ref{sub:drawbacks}, drop-tail algorithms such as DT and ABM cannot effectively navigate proactive and reactive drops, resulting in throughput loss \ie high flow completion times for long flows. In contrast, predictions guide \name to effectively navigate proactive and reactive drops. This allows \name to achieve better flow completion times even for long flows.

\myitem{\name does not waste buffer resources:} In anticipation of future burst arrivals, both DT and ABM buffer resources. We show the $99.99$-percentile buffer occupancies\footnote{DT, ABM and \name have similar tail occupancies ($100$-percentile) that occurs at rare congestion events in our simulations.} in Figure~\ref{fig:dctcp-loads-buffer} for various loads of websearch workload and at a burst size of $50$\% buffer size for incast workload. We observe that, DT (ABM) utilizes $3.77$\% ($18.68$\%) lower buffer space on average compared to \name, at the cost of increased flow completion times even for long flows. Even as the burst size increases (Figure~\ref{fig:dctcp-bursts-buffer}), DT and ABM are unable to efficiently utilize the buffer space. In contrast, \name efficiently utilizes the available buffer space as the burst size increases, improving burst absorption without sacrificing flow completion times for long flows.

\myitem{ABM is sensitive to RTT:}
Although ABM is expected to outperform DT, our evaluation results especially in terms of flow completion times for incast flows contradict the results presented in~\cite{abm}. We ran several simulations varying all the parameters in our setup in order to better understand the performance of ABM. We found that ABM is in fact sensitive to round-trip-time (RTT). We vary the base RTT of our topology in Figure~\ref{fig:latency} and compare \name with ABM. At high RTTs, we observe that ABM performs close to \name, but degrades in performance as RTT decreases. Specifically, at $8\mu$s RTT, ABM performs $97.73$\% worse compared to \name in terms of flow completion times for incast flows. Although ABM achieves on-par flow completion times for short flows, we observe that ABM degrades in flow completion times for long times as well as under-utilizes the buffer as RTT decreases. The poor performance of ABM at low RTTs is due to the fact that ABM prioritizes the first RTT packets and considers the rest of the traffic as steady-state traffic. However, it is not uncommon that datacenter switches experience bursts for several RTTs~\cite{burstimc2022}. Further, congestion control algorithms require multiples RTTs to converge to steady-state. In contrast, \name is parameter-less and does not make such assumptions. \name performs significantly better than existing approaches even with an off-the-shelf machine-learned predictor with a simple model.

\myitem{\name gradually degrades with prediction error:} Our random forest classifier that we used in our evaluations so far, has a precision close to $0.65$. In order to evaluate the performance of \name with even worse prediction error, we artificially introduce error by flipping every prediction obtained from our random forest classifier with a certain probability. We consider LQD (push-out) as a baseline since \name is expected to perform close to LQD and degrade as the prediction error grows large. Figure~\ref{fig:error} presents our evaluation results, under websearch workload at $40$\% and burst size $50$\% of the buffer size for incast workload. At $0.001$ flipping probability, \name performs close to LQD. However, at $0.01$ flipping probability \name starts to diverge from LQD and gets significantly worse at $0.1$ flipping probability. Figure~\ref{fig:error} gives practical insights into smoothness of \name in addition to our analysis.

\section{Related Work}
The buffer sharing problem has been widely studied for many decades. Research works in the literature range from push-out as well as drop-tail algorithms tailored for ATM networks~\cite{choudhury1998dynamic,tcpgigabit,749262,368919,477712} to more recent drop-tail algorithms tailored for datacenter networks~\cite{fab,abm,trafficaware,7859368,reveriensdi23}.  While we focus on the buffer sharing problem in this paper, several related but orthogonal approaches also tackle buffer problems in datacenter networks \eg end-to-end congestion control~\cite{278346,10.1145/3341302.3342085,10.1145/3387514.3406591,276958,10.1145/3544216.3544235,10.1145/1851182.1851192}, AQM~\cite{codel,6602305,red}, scheduling~\cite{10.1145/2486001.2486031,10.1145/2377677.2377710}, packet deflection~\cite{10.1145/2592798.2592806} and load-balancing~\cite{10.1145/2619239.2626316,10.1145/2890955.2890968,10.1145/3098822.3098839}. These approaches aim at reducing congestion events and the overall buffer requirements, but they cannot fundamentally address buffer contention across multiple switch ports sharing the same buffer.
Research on algorithms with predictions for various problems has recently been an active field of research~\cite{im2023online,NEURIPS2018ML,10.1145/3528087,mitzenmacher2021queues,mitzenmacher2019scheduling,NEURIPS2018_0f49c89d} but ours is the first approach tackling the buffer sharing problem with predictions. Ongoing research efforts  show the feasibility of deploying machine-learned predictions in the network data plane~\cite{busse2019pforest,10.1145/3365609.3365864,10229100}.

\section{Future Research Directions}
\label{sec:future-work}
\name is the first approach showing the performance benefits and guarantees by augmenting buffer sharing algorithms with predictions. We believe that this paper opens several interesting avenues for future work both in systems and theory. In this section, we discuss some of the future work directions to push approaches such as \name to be deployed in the real-world (\S\ref{sec:systems-future}), as well as to improve the performance guarantees offered by such approaches (\S\ref{sec:theory-future}).

\subsection{Systems for In-Network Predictions}\label{sec:systems-future}
In this paper, we show how predictions can improve the performance of drop-tail buffer sharing. Many interesting systems research questions remain in order to integrate buffer sharing and predictions in the network data plane.

\myitem{Training the model:} Achieving consistent performance requires lowering the prediction error. The training phase of the model is a critical step towards reaching optimal buffer sharing.
Although, for simplicity, we have used only four features to train our random forest model, it would be interesting to study the tradeoff that may arise between prediction error and the complexity of the model both in space and time.
Further, the trained model must be simple enough that fits within the resources available in the data plane. Developing such trained models is an important step forward.

\myitem{Deploying the model:} Recent works propose practical implementations for in-network machine-learning models \eg in the context of traffic classification~\cite{busse2019pforest}. P4 implementation of a model that predicts drops would enhance not only the practicality but also stimulate further research to design algorithms with performance guarantees better than \name.

\myitem{Alternative predictions:} As mentioned in \S\ref{sub:hope}, there are several different prediction models that can be considered for the buffer sharing problem. For instance, instead of predicting the drops, an oracle could predict packet arrivals just for a tiny window of the near future. Systems research on studying the practicality and deployability of different prediction models is a valuable future direction that would better guide the design of algorithms with predictions for the buffer sharing problem.

\myitem{Understanding push-out complexity:}
While push-out algorithms raised much interest initially, over the last years, research on this approach has been less active. We believe this is partly due to the lack of support from switch vendors. It is an open question how the complexity of obtaining drop predictions and the complexity of push-out fare against each other. Although we focused on augmenting drop-tail algorithms with predictions,
we believe that our approach of using predictions has much potential also in other types of buffer algorithms.
While switch vendors may be better informed about the complexity of push-out buffers, an understanding of this complexity in the scientific community is much needed in order to navigate the complexity vs performance spectrum.

\subsection{Theory for Performance Guarantees}\label{sec:theory-future}
We believe the performance guarantees offered by \name can be improved in the future. Further, considering packet priorities and traffic classes in the competitive analysis is an open question.

\myitem{Improving consistency and robustness:} An open question is whether an algorithm could be designed to improve the competitive ratio under perfect predictions (consistency) better than $1.707$, while also improving the ratio under large error (robustness) better than $N$. Further research in this direction would enable a better understanding whether a consistency-robustness tradeoff exists for the buffer sharing problem.

\myitem{Competitive analysis with packet priorities:} Literature in theory considers that all packets are of same priority in the context of competitive analysis. However, it is well-known that preferential treatment of packets has various performance benefits especially in terms of flow completion times when short flow packets are prioritized. To this end, developing analysis techniques for such a setup is an interesting future research direction.

\section{Conclusion}
We presented \name, the first buffer sharing algorithm augmented with predictions that not only reaches close to optimal performance given low prediction error but also guarantees performance with arbitrarily large prediction error, while maintaining smoothness. We analytically proved our claims and our evaluations show the superior performance of \name even with an off-the-shelf machine-learned predictor, compared to the state-of-the-art buffer sharing algorithms. The building blocks required for \name are all individually practical in today's hardware. In future, we plan to pursue switch vendors to further discuss the integration of predictions with buffer sharing algorithm in hardware.

\section*{Acknowledgements}
\noindent
This work is part of a project that has received funding from the European Research Council (ERC) under the European Union’s Horizon 2020 research and innovation programme, consolidator project Self-Adjusting Networks (AdjustNet), grant agreement No. 864228, Horizon 2020, 2020-2025.
\begin{figure}[!h]
	\centering
	\includegraphics[width=0.6\linewidth]{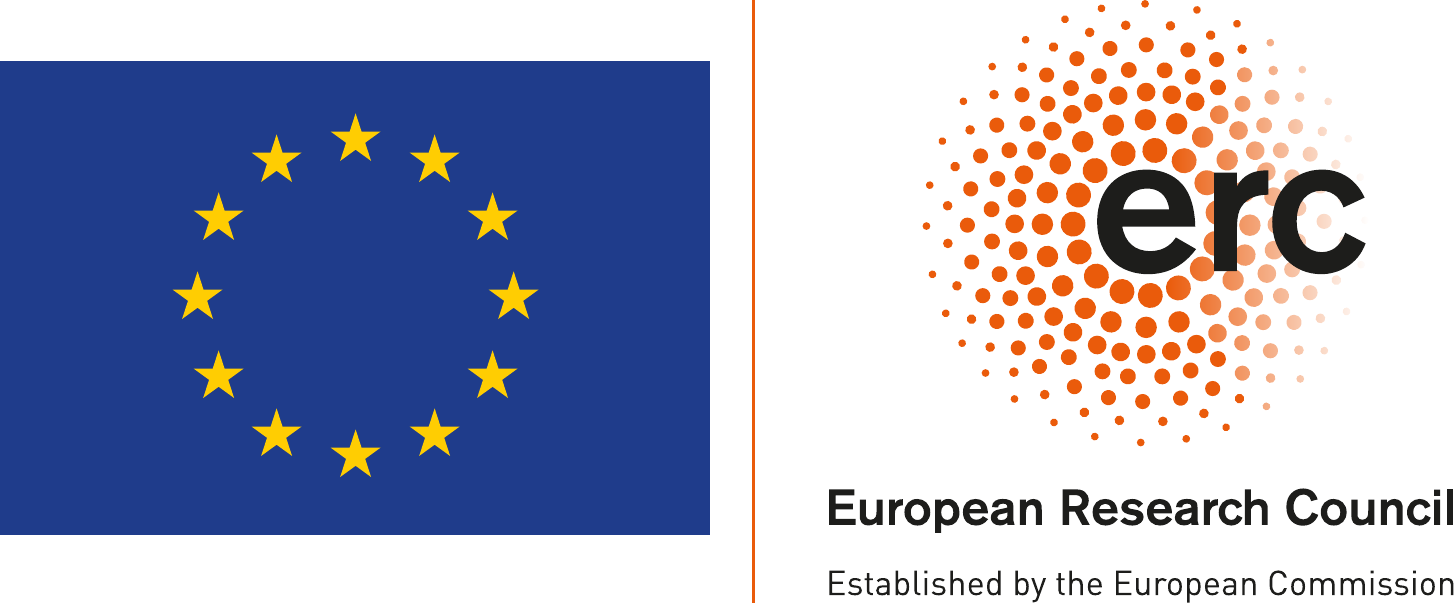}
\end{figure}

\label{bodyLastPage}

\bibliographystyle{plain}
\bibliography{references.bib}

\newpage
\appendix

\section{Model and Definitions}\label{app:model}
We consider a network switch equipped with an on-chip buffer size of $B$ units shared by $N$ ports. We mainly follow the widely used model in the literature~\cite{competitiveBuffer,breakingBarrier2,KESSELMAN2004161}. Time is discrete, and we refer to each step as timeslot. Packets (each of size unit $1$) arrive in an online manner as time progresses. Each timeslot is divided into two phases, arrival phase and departure phase. During each arrival phase, at most $N$ number of packets (in aggregate) arrive destined to $N$ ports. During each departure phase, every queue drains out one packet unless the queue is empty. A buffer sharing algorithm manages the shared buffer allocation across the $N$ ports. We next define preemptive (push-out) and non-preemptive (drop-tail) buffer sharing.

\begin{definition}[Preemptive buffer sharing]
	During every arrival phase, the buffer sharing algorithm is allowed to preempt \ie drop any number of existing packets in the buffer.
\end{definition}

\begin{definition}[Non-preemptive buffer sharing]
	During every arrival phase, the buffer sharing algorithm is only allowed to accept or drop the incoming packet. Every accepted packet must eventually be drained out from the corresponding queue.
\end{definition}

We denote by $\sigma(t) = (\sigma_i(t), \sigma_i(t),...,\sigma_N(t))$, an $N$-tuple, where $\sigma_i(t)$ denotes the number of packets arriving at time $t$ to queue $i$. We study the performance of a buffer sharing algorithm in terms of throughput \ie our objective is to maximize the total number of packets transmitted over the entire arrival sequence. We compare the performance of our online algorithms against an offline optimal algorithm, which has access to the entire arrival sequence at $t=0$ and has infinite computational capacity.

\begin{definition}[Competitive ratio]\label{def:competitive-ratio}
	Let ALG be an online algorithm and OPT be an offline optimal algorithm for the buffer sharing problem. Let ALG$(\sigma)$ and OPT$(\sigma)$ be the total number of packets transmitted by ALG and OPT for the arrival sequence $\sigma$. We say ALG is $c$-competitive if it satisfies the following condition for any arrival sequence $\sigma$.
	\begin{equation}
		OPT(\sigma) \le c\cdot ALG(\sigma)
	\end{equation}
\end{definition}

\section{FollowLQD: A Deterministic Algorithm}\label{app:detalg}
In this section we propose a new online deterministic algorithm FollowLQD in the non-preemptive case which is a non-predictive building block of \name. Intuitively, FollowLQD simply \emph{follows} the Longest Queue Drop (LQD) queues in the preemptive model. In particular, FollowLQD maintains a threshold $T_i(t)$ for each queue at time $t$. The thresholds are updated for every packet arrival and departure according to LQD in the preemptive model. We present the pseudocode for FollowLQD in Algorithm~\ref{alg:follow}. While FollowLQD tries to \emph{follow} LQD queue lengths by accepting packets as long as the queue lengths are smaller than the thresholds, it may happen that FollowLQD queues are larger than their thresholds. This is since FollowLQD cannot preempt (remove) existing packets in the buffer whereas LQD can preempt and correspondingly the thresholds may drop below the queue lengths. FollowLQD simply drops an incoming packet if it finds that the corresponding queue exceeds its threshold.

Although LQD is known to be $1.707$-competitive, we show that FollowLQD is still at least $\frac{N+1}{2}$-competitive. We present our lower bound based on a simple arrival sequence.

\begin{algorithm}[t]
	\SetKwFunction{arrival}{\textsc{\textcolor{red}{arrival}}}
	\SetKwFunction{updateThreshold}{\textsc{\textcolor{red}{updateThreshold}}}
	\SetKwFunction{departure}{\textsc{\textcolor{red}{departure}}}

	\SetKwProg{Fn}{function}{:}{}
	\SetKwProg{Proc}{procedure}{:}{}
	\SetKwInOut{KwIn}{Input}
	\SetKwInOut{KwOut}{Output}

	\KwIn{\ $\sigma(t)$ }

	\Proc{\arrival{$\sigma(t)$}}{

		\For{each packet $p \in \sigma(t)$}{
			Let $i$ be the destination queue for the packet $p$

			\textsc{updateThreshold}($i$, \emph{arrival})

			\eIf{$q_i(t) < T_i(t)$}{
				\If{$Q(t) < B$}{
					$q_i(t) \leftarrow q_i(t) + 1$ \Comment{accept}
				}
			}{
				\Comment{Drop}
			}
		}

	}

	\Proc{\departure{i}}{

		\If{$q_i(t) > 0$}{

			$q_i(t) \leftarrow q_i(t) - 1$ \Comment{Drain one packet}

		}

		\textsc{updateThreshold}($i$, \emph{departure})
	}

	\Fn{\updateThreshold{$i, event$}}{

		\If{event $=$ arrival}{
			\eIf(\Comment{Sum of thresholds}){$\Gamma(t) = B$}{

				Let $T_j(t)$ be the largest threshold

				$T_j(t) \leftarrow T_j(t) - 1$ \Comment{Decrease}

				$T_i(t) \leftarrow T_i(t) + 1$ \Comment{Increase}
			}{
				$T_i(t) \leftarrow T_i(t) + 1$ \Comment{Increase}

				$\Gamma(t) \leftarrow \Gamma(t) +1$
			}
		}
		\If{event $=$ departure}{
			\If{$T_i(t) > 0$}{
				$T_i(t) \leftarrow T_i(t) - 1$ \Comment{Decrease}

				$\Gamma(t) \leftarrow \Gamma(t) -1$
			}
		}
	}

	\caption{FollowLQD}
	\label{alg:follow}
\end{algorithm}

\begin{observation}
	FollowLQD is at least $\frac{N+1}{2}$-competitive.
\end{observation}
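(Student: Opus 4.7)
The plan is to exhibit a short adversarial arrival sequence achieving the claimed ratio. Fix the buffer size $B = N$ and consider the two-timeslot input $\sigma$ in which, at timeslot $1$, all $N$ arriving packets are destined for queue $1$, and, at timeslot $2$, exactly one packet arrives at each of the $N$ queues. By the definition of competitive ratio it suffices to prove $OPT(\sigma) \ge \tfrac{N+1}{2} \cdot FollowLQD(\sigma)$ for this specific $\sigma$.

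First I would trace FollowLQD on $\sigma$ using Algorithm~\ref{alg:follow}. During timeslot $1$, each of the $N$ arrivals finds $\Gamma < B$, so \textsc{updateThreshold} simply bumps $T_1$ and $\Gamma$ upward; simultaneously $q_1 < T_1$ and $Q < B$ hold at every step, so all $N$ packets are accepted, ending the arrival phase with $q_1 = T_1 = \Gamma = N$. The departure drains one packet from queue $1$ and decreases $T_1$ and $\Gamma$ by one. At the start of timeslot $2$, the arrival to queue $1$ raises $T_1$ back to $N$ (since $\Gamma = N-1 < B$), is accepted, and fills the buffer. Each of the remaining $N-1$ arrivals now triggers the $\Gamma = B$ branch of \textsc{updateThreshold}: the unique largest threshold $T_1$ (which decreases through $N-1, N-2, \ldots, 1$) is decremented and the corresponding $T_i$ is incremented to $1$; however the packet itself is dropped because $Q = B$. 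After the departure phase, FollowLQD has transmitted exactly $2$ packets in total.

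Next I would verify that OPT transmits $N+1$ packets. The clairvoyant OPT accepts only one of the $N$ arrivals to queue $1$ in timeslot $1$ (discarding the rest); this packet drains in the departure phase, leaving an empty buffer. In timeslot $2$, with $Q = 0$ and only one packet destined for each queue, OPT accepts every arrival and brings $Q$ exactly to $B = N$, then drains all $N$ queues. Thus $OPT(\sigma) = 1 + N$ while $FollowLQD(\sigma) = 2$, giving ratio $(N+1)/2$.

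The only delicate point is bookkeeping within timeslot $2$: one must check that $T_1$ remains the strict maximum during all $N-1$ drop events, so \textsc{updateThreshold} consistently selects $T_1$ to decrement rather than one of the freshly incremented $T_i$'s. This holds because each $T_i$ for $i > 1$ is raised only to $1$, whereas $T_1$ stays at least $2$ until the last drop; any reasonable tie-breaking at that final step still yields the same queue-length trajectory and therefore the same accept/drop pattern. Everything else reduces to routine counting, so this is the main obstacle and it is mild.
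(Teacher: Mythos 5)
There is a genuine gap, and it is fatal to the construction as written: you count only the packets that drain \emph{during} the two timeslots, but in this model throughput is the total number of packets ever transmitted, and every packet a drop-tail algorithm accepts is eventually drained and counted (the arrival sequence simply ends and the backlog empties). In your instance with $B=N$, FollowLQD accepts all $N$ packets in timeslot~1 plus one more in timeslot~2, so $FollowLQD(\sigma)=N+1$, not $2$. Moreover $OPT(\sigma)=N+1$ as well (accepting $a\ge 1$ packets in timeslot~1 leaves room for exactly $N-a+1$ in timeslot~2, so the total is always $N+1$). Hence your sequence witnesses a ratio of exactly $1$ and gives no lower bound. The trace of FollowLQD itself and the threshold bookkeeping you worry about at the end are fine; the problem is purely the throughput accounting.

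The paper's proof avoids this by making the bad behavior \emph{recurrent} rather than one-shot: it first fills queue $i$ to $B$, then repeats a two-timeslot cycle in which LQD (which drives the thresholds) preempts from queue $i$ and accepts the $N$ spread-out packets, while FollowLQD, unable to preempt, keeps queue $i$ pinned near $B$, so its buffer is full and it can accept only about $2$ new packets per cycle versus $N+1$ for OPT. Repeating the cycle arbitrarily many times amortizes away the one-time credit of the initial $B$ buffered packets (which is exactly the term that kills your finite instance), and the ratio of accepted packets per cycle tends to $\frac{N+1}{2}$. To repair your argument you would need the same ingredient: a construction in which FollowLQD's threshold for the long queue is dragged down by LQD's preemptions while its actual queue stays long, sustained over many cycles so that the residual backlog is a vanishing fraction of the totals being compared.
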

\begin{proof}
	We construct an arrival sequence such that for every two packets transmitted by FollowLQD, the offline optimal algorithm OPT transmits $N+1$ packets. Consider that all the queues are empty at time $t=0$. We then burst packets to a single queue say $i$ until its queue length reaches $B$. Note that this is possible since the threshold for queue $i$ that follows the corresponding LQD queue also grows up to $B$. At the end of the departure phase, FollowLQD transmits one packet and the queue length becomes $B-1$. At this point, we send $N$ packets, one packet to each of the $N$ queues. The thresholds are updated based on LQD, which has the following actions: \first preempt $N-1$ packets from queue $i$ and \second accept all $N$ packets to $N$ queues. Correspondingly, the threshold for queue $i$ of FollowLQD drops to $B-N+1$ but it still has $B-1$ packets in queue $i$. As a result, it can only accept one packet out of the $N$ incoming packets. At the end of the departure phase during this timeslot, FollowLQD has $B-1$ packets in queue $i$ and has transmitted $1$ packet in total. In the next timeslot, we send $N$ packets to the queue $i$ so that LQD's queue $i$ now gets back to size $B$ again. As the threshold is larger than the queue length ($B-1$), FollowLQD accepts $1$ packet. At the end of the departure phase, FollowLQD transmits $1$ packet from the queue $i$. Overall, FollowLQD transmitted $2$ packets but OPT transmitted $N+1$ packets. We then repeat the sequence such that for every $N+1$ packets transmitted by OPT, FollowLQD transmits $2$ packets. The competitive ratio is then at least $\frac{N+1}{2}$.
\end{proof}

\section{Buffer Sharing Algorithms with Predictions}\label{app:predictions}
In this section, we introduce our model for buffer sharing where there exists an oracle that predicts packet drop (or accept) for each packet in the arrival sequence $\sigma$, according to the prediction model introduced in \S\ref{sec:prediction-model}. We denote the drop sequence of LQD for the arrival sequence $\sigma$ by $\phi(\sigma)$, and the predicted drop sequence by $\phi^\prime(\sigma)$.
We classify prediction for each packet in to four types: true positive, false positive, true negative and false negative (see Figure~\ref{fig:predictions}. We denote the sequence of true positive predictions by $\phi^\prime_{TP}(\sigma)$, false positive predictions by $\phi^\prime_{FP}(\sigma)$, true negative predictions by $\phi^\prime_{TN}(\sigma)$ and false negative predictions by $\phi^\prime_{TP}(\sigma)$.
We drop $\sigma$ in our notations when the context is clear.

Hereafter, we mainly compare our online non-preemptive algorithm with predictions against online LQD (preemptive). We define the error made by the oracle by the following error function.

\errorFunction*

We now analyze \name that relies on drop predictions $\phi^\prime$ and takes decisions in pursuit of \emph{following} LQD more accurately. Algorithm~\ref{alg:follow-pred} presents the pseudocode for our algorithm.

In essence, perfect predictions allows us to perfectly follow LQD queues, essentially transmitting as many packets as LQD. However, we are also concerned about the performance of the algorithm when the oracle makes mispredictions.
In the following, we study the competitive ratio of \name as the error grows. We obtain the competitive ratio as a function of the error $\eta$: we show that \name is $1$-competitive against LQD with perfect predictions but at most $N$-competitive when the error is arbitrarily large.

\theoremCRatioPred*

\lemmaAlgLqd*

\begin{figure*}
	\centering
	\begin{subfigure}{0.248\linewidth}
		\centering
		\includegraphics[width=1\linewidth]{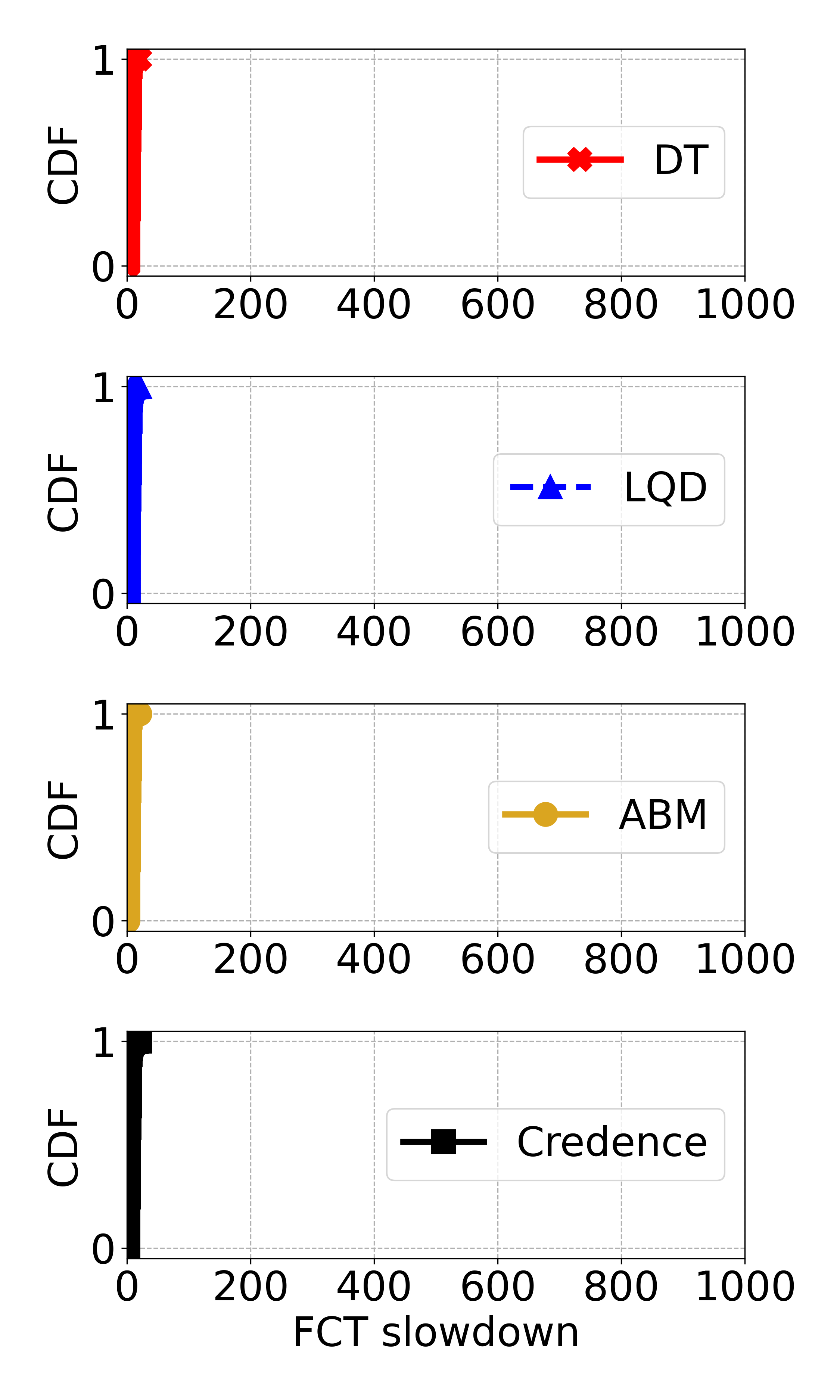}
		\caption{Burst size $=$ $12.5$\%}
		\label{}
	\end{subfigure}\hfill
	\begin{subfigure}{0.248\linewidth}
		\centering
		\includegraphics[width=1\linewidth]{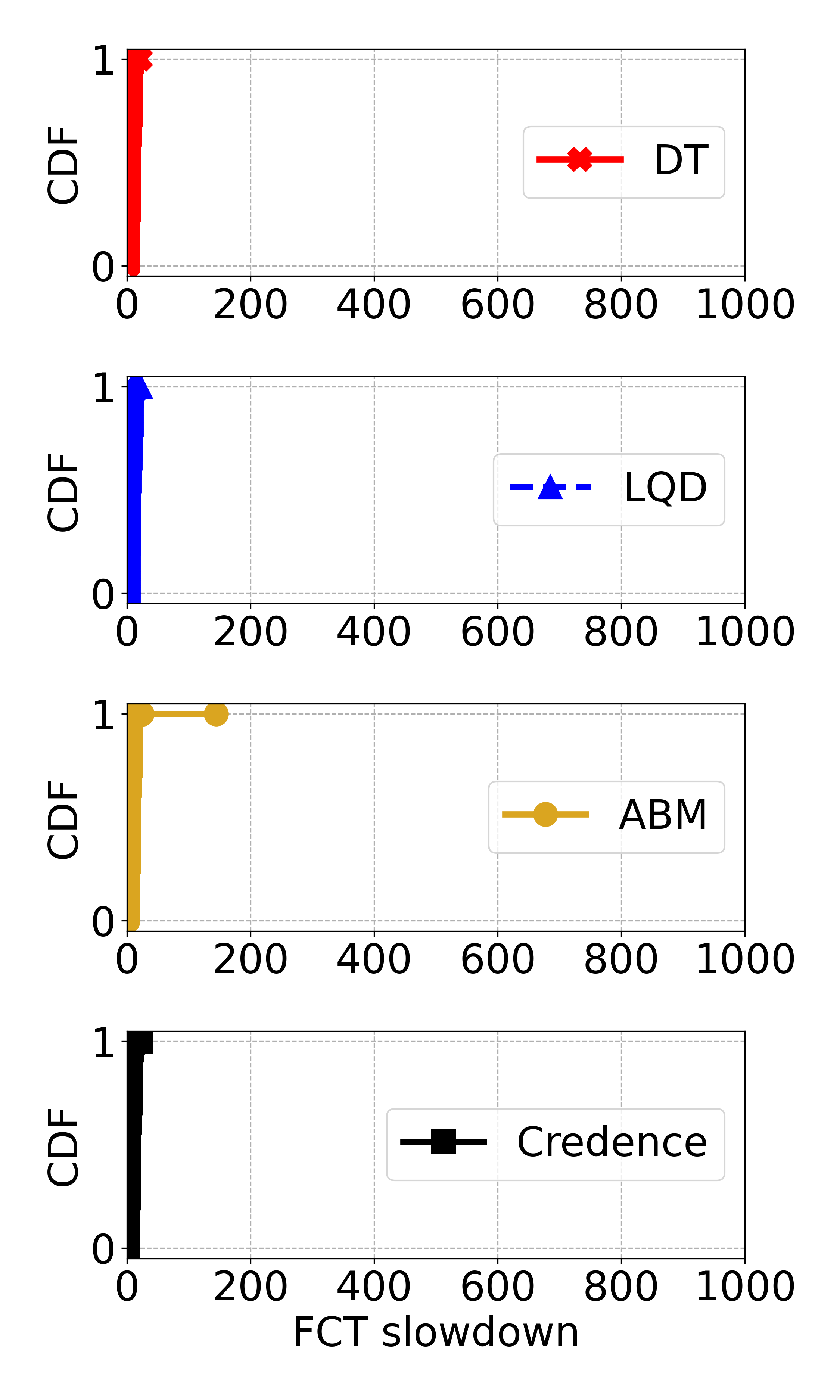}
		\caption{Burst size $=$ $25$\%}
		\label{}
	\end{subfigure}\hfill
	\begin{subfigure}{0.248\linewidth}
		\centering
		\includegraphics[width=1\linewidth]{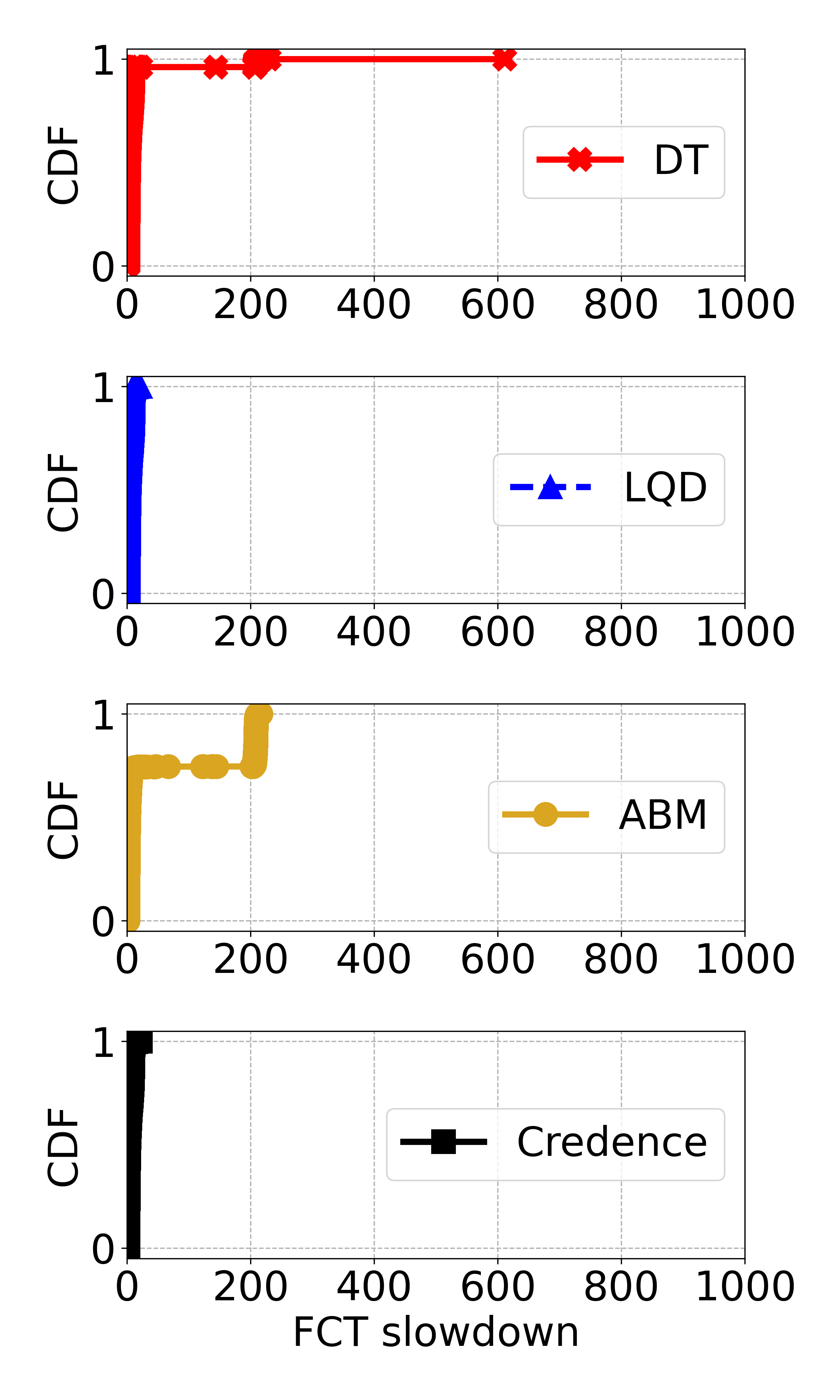}
		\caption{Burst size $=$ $50$\%}
		\label{}
	\end{subfigure}\hfill
	\begin{subfigure}{0.248\linewidth}
		\centering
		\includegraphics[width=1\linewidth]{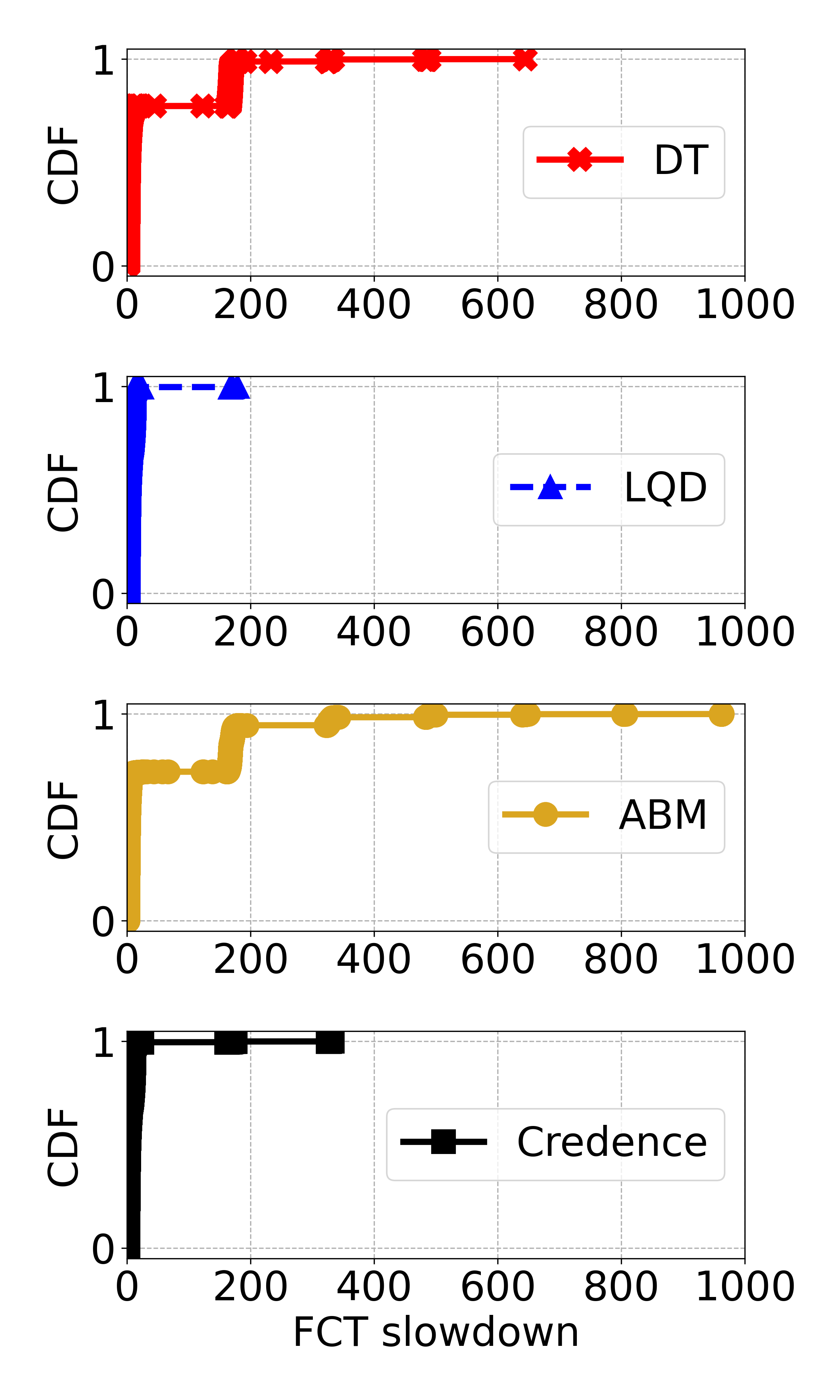}
		\caption{Burst size $=$ $75$\%}
		\label{}
	\end{subfigure}\hfill
	\caption{CDF of flow completion times (slowdown) for \name, DT, ABM and LQD across various burst sizes of incast workload and websearch workload at $40$\% load, with DCTCP as the transport protocol. Burst size is expressed as a percentage of the buffer size.}
	\label{fig:cdf-dctcp-burst}
\end{figure*}

\begin{proof}
	For simplicity, we refer to \name as ALG in the following.
	We prove our claim by analyzing the drops of ALG and relating the transmitted packets by $ALG(\sigma)$ to $LQD(\sigma^\prime)$. Every drop of $ALG$ arises from three types of situations.
	First, ALG can drop a packet due to the thresholds. Note that the thresholds used by $ALG$ correspond to the queue lengths of preemptive LQD over the same arrival sequence $\sigma$. As a result, both ALG and FollowLQD algorithm have the same thresholds at any time instance.
	Second, ALG drops a packet if the prediction is either true positive or false positive if and only if the queue length satisfies the corresponding thresholds. This type of drops are at most the total number of true positive and false positive predictions.
	Third, ALG drops a packet when the buffer is full which is the same condition for FollowLQD.
	In essence, ALG drops at most all the positive predictions and drops at most the number of packets dropped by FollowLQD serving the arrival sequence $\sigma-\phi^\prime_{TP}-\phi^\prime_{FP}$ \ie the arrival sequence in which all the packets predicted as positive are removed from $\sigma$.
	In order to prove our main claim, it remains to argue that the extra packets accepted by ALG due to the safeguard condition do not result in additional drops compared to FollowLQD with the arrival sequence $\sigma - \phi^\prime_{TP} - \phi^\prime_{FP}$. For every packet that fails to satisfy the threshold but gets accepted due to the safeguard condition by ALG, could cause at most one extra drop due to the thresholds before the buffer full again compared to FollowLQD. This is since, if FollowLQD accepts a packet, then its queue length is certainly less that the corresponding threshold (that is same for ALG). However, the queue length of ALG may have some extra packets that are accepted due to the safeguard condition. As a result, each such extra packet (dropped by FollowLQD) contributes to at most one drop compared compared to FollowLQD and the transmitted packets remains equivalent. Further, by the time the buffer is full in ALG, all the extra packets accepted due to the safeguard condition would have been drained out of the buffer. This is due to the fact that any such extra packet is at a queue length of at most $\frac{B}{N}$ (the safeguard condition) that drains out before the buffer fills up \ie it takes at least $\frac{B}{N}$ timeslots to fill the buffer (only $N$ packets can arrive in each timeslot).
	As a result, ALG transmits at least the total number of packets transmitted by FollowLQD over the arrival sequence  $\sigma-\phi^\prime_{TP}-\phi^\prime_{FP}$ \ie
	\[
		ALG(\sigma)\ge FollowLQD(\sigma-\phi^\prime_{TP}-\phi^\prime_{FP})
	\]
	Using Definition~\ref{def:error-function}, we express FollowLQD in terms of LQD and the error function $\eta(\phi,\phi^\prime)$, and obtain Equation~\ref{eq:alg-lqd}.
\end{proof}

\lemmaAlgOpt*

\begin{proof}
	Irrespective of the predictions, \name always accepts an incoming packet if the longest queue is less than or equal to $\frac{B}{N}$. When \name drops a packet, there is at least one queue that has at least $\frac{B}{N}$ number of packets. Hence, every packet in OPT can be matched to at least $\frac{B}{N}$ number of packets. Consequently, the competitive ratio is at most $N$.
\end{proof}

We are now ready to prove our main claim (Theorem~\ref{theorem:c-ratio-pred}) using the above results.

\begin{proof}[\textbf{Proof of Theorem~\ref{theorem:c-ratio-pred}}]
	From Definition~\ref{def:competitive-ratio}, in order to prove the competitive ratio of our \name, we are mainly concerned with the upper bound of $\frac{OPT(\sigma)}{\name(\sigma)}$ for any arrival sequence $\sigma$. Since $\frac{OPT(\sigma)}{LQD(\sigma)}\le 1.707$ is known from literature~\cite{competitiveBuffer,breakingBarrier2}, we use this result to compare \name and LQD in order to argue about the competitive ratio i.e, $\frac{OPT(\sigma)}{\name(\sigma)} \le 1.707\cdot \frac{LQD(\sigma)}{\name(\sigma)}$ for any request sequence $\sigma$.
	From Lemma~\ref{lemma:alg-lqd}, we have the following:
	\[
		\frac{LQD(\sigma)}{\name(\sigma)} \le \eta(\phi, \phi^\prime)
	\]
	From Lemma~\ref{lemma:alg-opt}, irrespective of the predicted sequence, we have that $\frac{OPT(\sigma)}{\name(\sigma)}\le N$. Finally, since $\frac{OPT(\sigma)}{\name(\sigma)}\le 1.707\cdot \frac{LQD(\sigma)}{\name(\sigma)}$, the competitive ratio of \name is given by:
	\[
		\frac{OPT(\sigma)}{\name(\sigma)} \le \min\left( 1.707\ \eta(\phi, \phi^\prime),\ N \right)
	\]
	The proof follows by Definition~\ref{def:competitive-ratio}.
\end{proof}

\begin{figure*}
	\centering
	\begin{subfigure}{0.248\linewidth}
		\centering
		\includegraphics[width=1\linewidth]{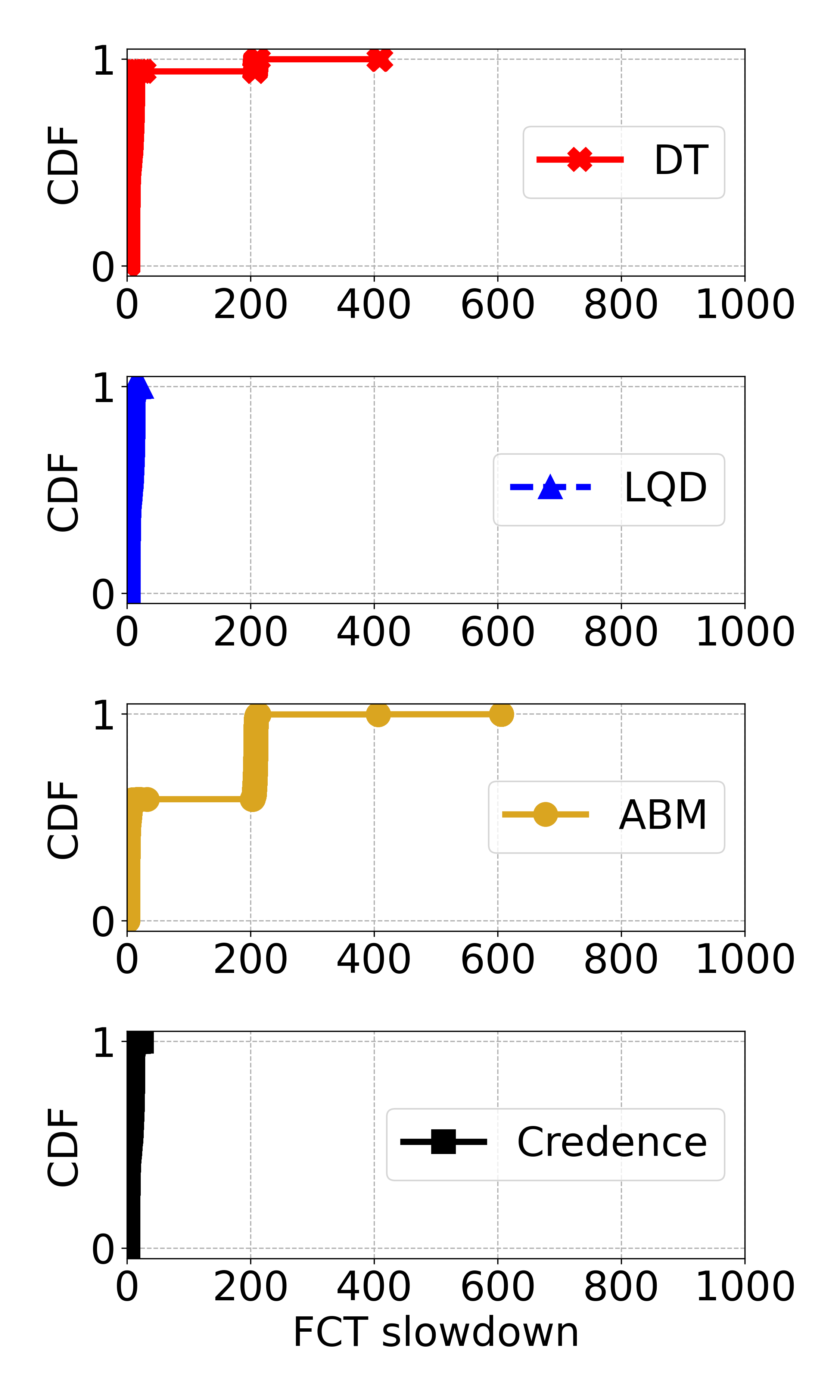}
		\caption{Load $=$ $20$\%}
		\label{}
	\end{subfigure}\hfill
	\begin{subfigure}{0.248\linewidth}
		\centering
		\includegraphics[width=1\linewidth]{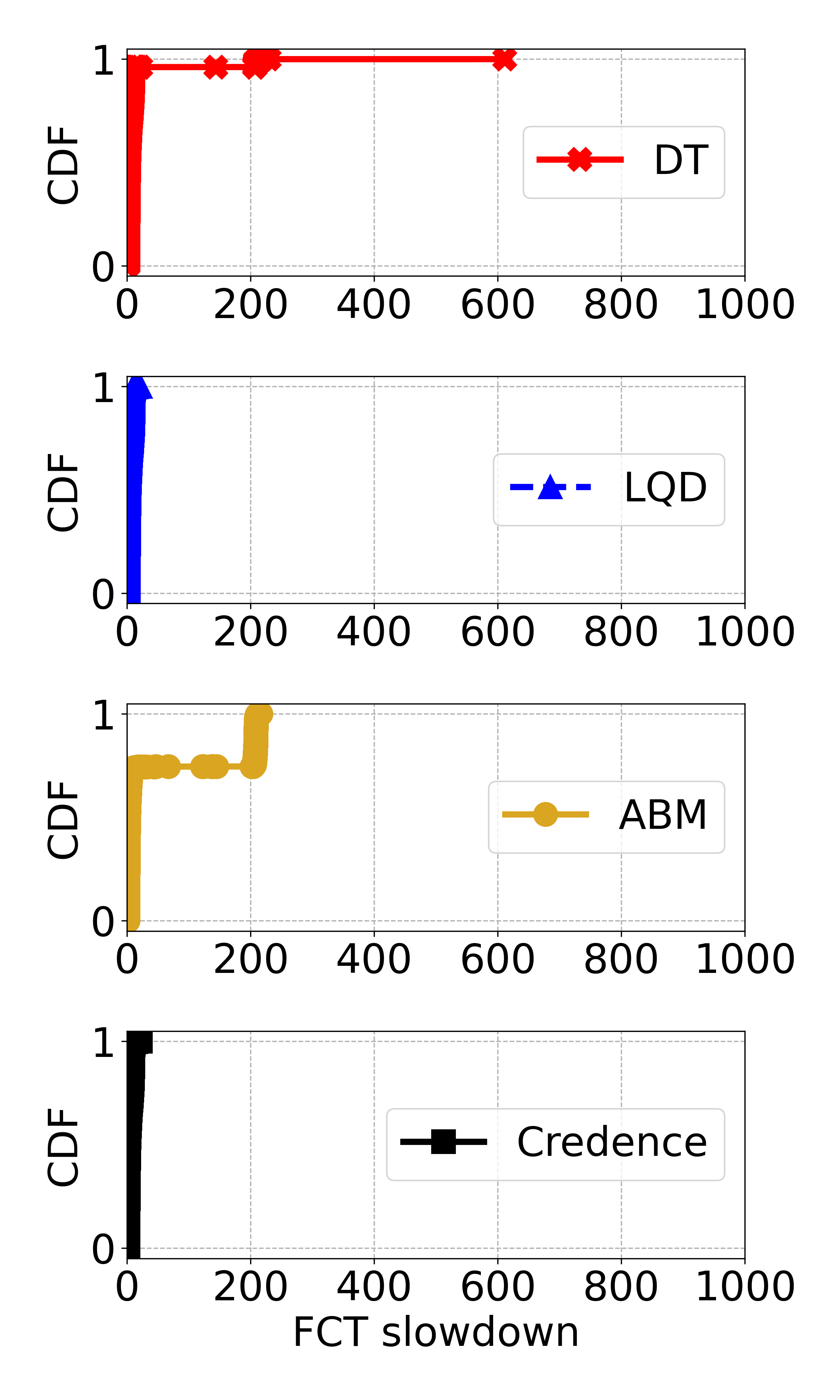}
		\caption{Load $=$ $40$\%}
		\label{}
	\end{subfigure}\hfill
	\begin{subfigure}{0.248\linewidth}
		\centering
		\includegraphics[width=1\linewidth]{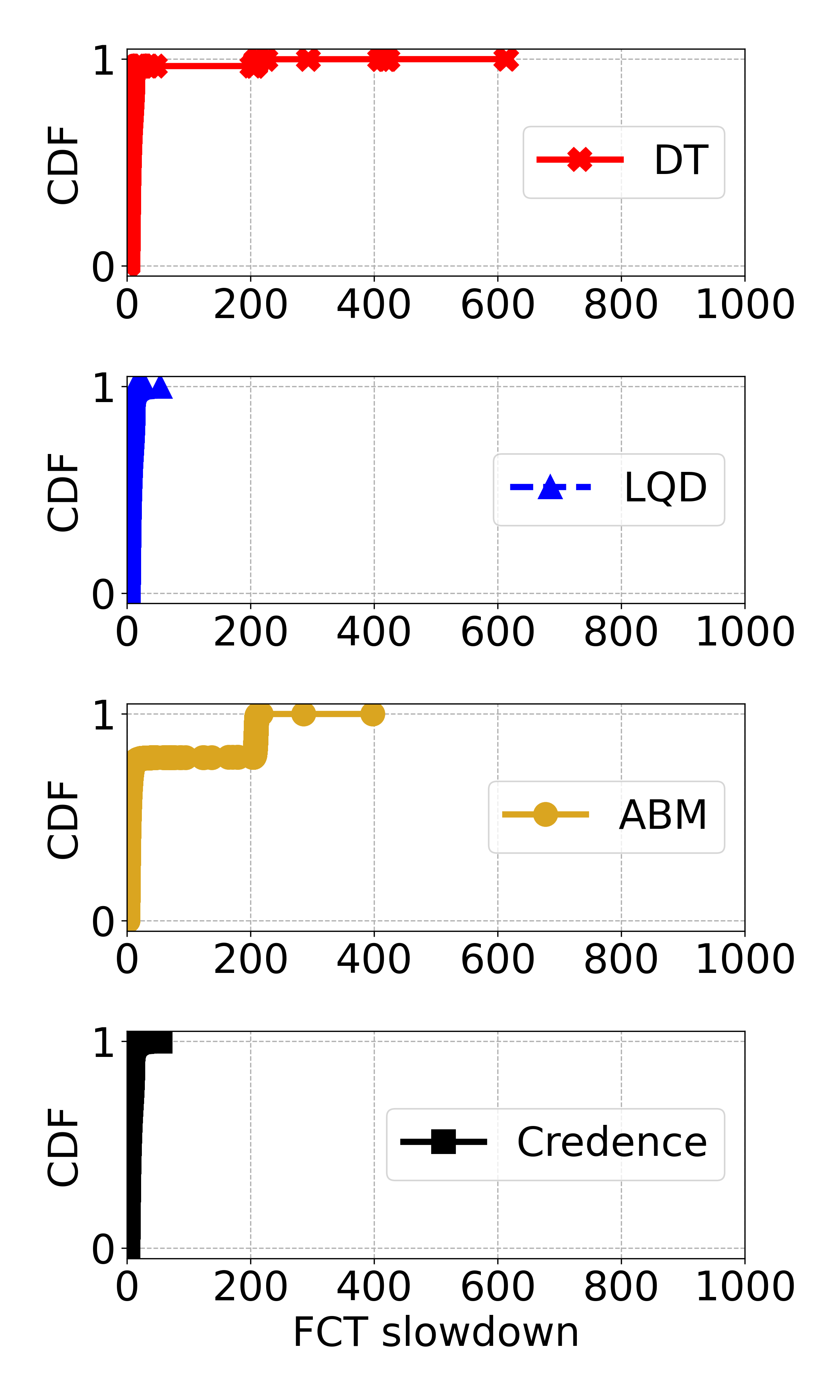}
		\caption{Load $=$ $60$\%}
		\label{}
	\end{subfigure}\hfill
	\begin{subfigure}{0.248\linewidth}
		\centering
		\includegraphics[width=1\linewidth]{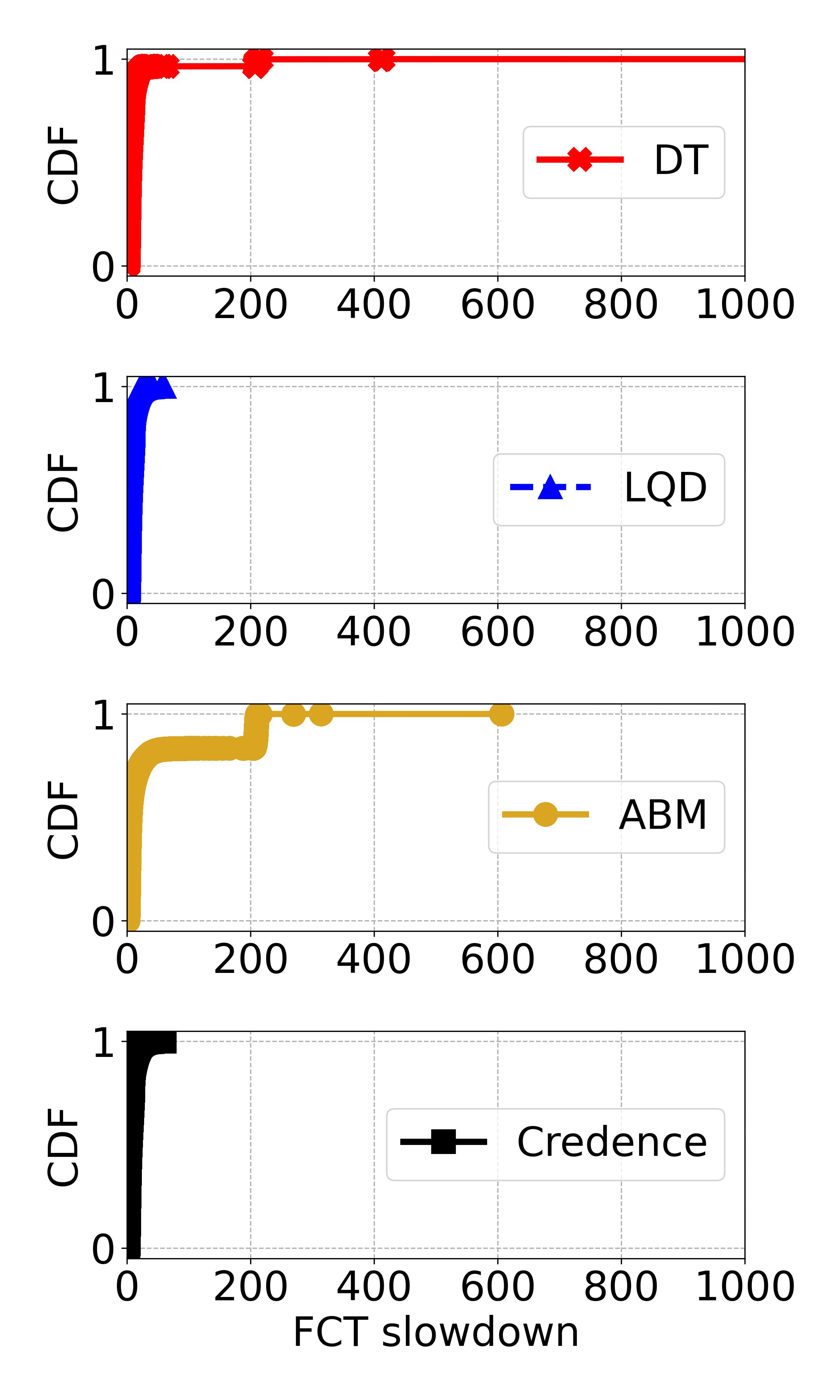}
		\caption{Load $=$ $80$\%}
		\label{}
	\end{subfigure}\hfill
	\caption{CDF of flow completion times (slowdown) for \name, DT, ABM and LQD across various loads of websearch workload and incast workload at a burst size $50$\% of the buffer size, with DCTCP as the transport protocol.}
	\label{fig:cdf-dctcp-load}
\end{figure*}

\begin{theorem}\label{th:error-upper}
	Let $\phi^\prime$ denote the sequence of drops predicted by the machine-learned oracle. Let $\phi_{TP}^\prime$, $\phi_{FP}^\prime$, $\phi_{TN}^\prime$, and $\phi_{FN}^\prime$ denote the sequence of true positive, false positive, true negative and false negative predictions for the arrival sequence $\sigma$.
	The error function $\eta(\phi,\phi^\prime)$ (Definition~\ref{def:error-function}) is upper upper bounded as follows:
	\[
		\eta(\phi,\phi^\prime) \le \frac{\phi^\prime_{TN} + \phi^\prime_{FP}}{\displaystyle \phi^\prime_{TN} - \displaystyle\min\left((N-1)\cdot\phi^\prime_{FN}, \phi^\prime_{TN}\right)}
	\]
\end{theorem}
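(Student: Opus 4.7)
The plan is to split the bound into two inequalities -- one on the numerator of $\eta$ and one on its denominator -- then recombine via Definition~\ref{def:error-function}. The numerator identity is immediate from the classification in Figure~\ref{fig:predictions}: LQD transmits exactly those packets it accepts, and the packets LQD accepts are precisely the ones with ground-truth label ``accept'', \ie the true negatives plus the false positives, giving $LQD(\sigma)=\phi^\prime_{TN}+\phi^\prime_{FP}$. All of the real work lies in proving the lower bound
\begin{equation*}
FollowLQD(\sigma')\;\ge\;\max\!\bigl(\phi^\prime_{TN}-(N-1)\phi^\prime_{FN},\;0\bigr),
\end{equation*}
where $\sigma'=\sigma-\phi^\prime_{TP}-\phi^\prime_{FP}$. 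Once this is established, substituting into $\eta(\phi,\phi^\prime)=LQD(\sigma)/FollowLQD(\sigma')$ gives the target inequality, with the outer $\min$ absorbing the degenerate case in which the right-hand side above is zero (so that $\eta$ may legitimately be $+\infty$).

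To bound $FollowLQD(\sigma')$, I would start from the perfect-prediction baseline $\phi^\prime_{FN}=0$. In that case $\sigma'$ contains only true negatives, which are precisely the packets LQD accepts and transmits on $\sigma$; since FollowLQD's thresholds are defined to mirror LQD's queue lengths (Algorithm~\ref{alg:follow}) and the packet population on $\sigma'$ is, by construction, exactly what LQD keeps in its buffer on $\sigma$, a straightforward induction on arrivals shows $FollowLQD(\sigma')=\phi^\prime_{TN}$. I then re-introduce false negatives one by one into $\sigma'$ and track the incremental damage they do. Each false negative is a packet that the LQD run on $\sigma$ preempts but that FollowLQD, having no preemption, must retain until a natural departure drains it from its destination queue.

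The technical crux -- and the main obstacle -- is showing that each resident false negative costs FollowLQD at most $N-1$ true-negative transmissions. My plan is a token-based charging argument: at the moment FollowLQD accepts a false negative, allocate a budget of $N-1$ tokens to it, and charge each subsequent ``extra'' true-negative drop (a TN that FollowLQD drops but the reference $\phi^\prime_{FN}=0$ execution would have accepted) to a live token on some currently-resident false negative. Two facts must be made rigorous to complete the argument. First, every extra drop occurs at a moment when FollowLQD's buffer is strictly fuller than the reference execution's, so the excess is attributable to still-resident false negatives and a live token is guaranteed to exist. Second, between the acceptance and the departure of a given false negative, at most $N-1$ charges can land on it: the FN's own queue cannot contribute a charge while the FN is present (the corresponding threshold has already been consumed by the FN's own arrival in \textsc{updateThreshold}), so only the remaining $N-1$ queues can. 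Making the second fact airtight requires careful bookkeeping of how \textsc{updateThreshold} shifts the longest-threshold queue as arrivals interleave with departures; I expect this to be where most of the technical effort concentrates. Summing the per-FN budgets and combining with $LQD(\sigma)=\phi^\prime_{TN}+\phi^\prime_{FP}$ via Definition~\ref{def:error-function} then yields the claimed closed-form upper bound on $\eta(\phi,\phi^\prime)$.
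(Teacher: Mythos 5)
Your decomposition is the same as the paper's: the identity $LQD(\sigma)=\phi^\prime_{TN}+\phi^\prime_{FP}$, a lower bound on $FollowLQD(\sigma-\phi^\prime_{TP}-\phi^\prime_{FP})$ obtained by charging lost true negatives to false negatives, and a recombination via Definition~\ref{def:error-function} with the $\min$ absorbing the degenerate case. The gap sits precisely in the step you flag as the crux: the claim that at most $N-1$ charges can land on a resident false negative because ``the FN's own queue cannot contribute a charge.'' That exclusion gets the threshold mechanics backwards. A false negative is by definition a packet that LQD drops or preempts, so the threshold trajectory --- which mirrors LQD's queue lengths --- does \emph{not} retain it; the resident FN is exactly what makes FollowLQD's queue length reach or exceed its threshold, and a subsequent true-negative arrival to that very queue can then be rejected by the test $q_i(t)<T_i(t)$ even though LQD accepts it. The paper's proof explicitly budgets for this event: per false negative it allows up to $N-1$ buffer-full drops in distinct timeslots \emph{plus} one threshold-induced drop at the FN's own queue (the drop that ``nullifies'' the excess packet), i.e.\ up to $N$ lost true negatives, not $N-1$. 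With your budget of $N-1$ and no credit taken for the FN itself, the token argument as planned does not close.

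The repair is essentially the paper's accounting: allow $N$ charges per \emph{accepted} false negative and observe that in the non-preemptive model every accepted packet is eventually transmitted, so each accepted FN contributes one transmission of its own that offsets the extra charge, while an FN that is itself dropped generates no charges at all. Either way the net loss per false negative relative to the baseline $\phi^\prime_{TN}$ is at most $N-1$, which recovers $FollowLQD(\sigma-\phi^\prime_{TP}-\phi^\prime_{FP})\ge\phi^\prime_{TN}-(N-1)\cdot\phi^\prime_{FN}$ and hence the stated bound. One further caveat: your baseline step, that with $\phi^\prime_{FN}=0$ FollowLQD transmits all $\phi^\prime_{TN}$ packets, is asserted via ``straightforward induction'' but does need the observation that LQD run on the reduced sequence (which defines the thresholds in Algorithm~\ref{alg:follow}) accepts every packet of that sequence; the paper makes the same implicit assumption, so this is a shared informality rather than a defect specific to your write-up.
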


\begin{proof}
	Our proof is based on two arguments: \first $LQD(\sigma) = \phi^\prime_{TN} + \phi^\prime_{FP}$ and \second $FollowLQD(\sigma-\phi^\prime_{TP}-\phi^\prime_{FP}) \ge \phi^\prime_{TN} - \displaystyle\min\left((N-1)\cdot\phi^\prime_{FN}, \phi^\prime_{TN}\right)$.

	First, $LQD(\sigma)$ is the total number of transmitted packets by LQD. Recall that the ground-truth (transmitted by LQD) for a prediction is an accept if and only if the prediction is either true negative or a false positive. Hence, the total number of packets transmitted by LQD \ie $LQD(\sigma)$ is the sum of true negative predictions and the false positive predictions.

	Second, $FollowLQD(\sigma - \phi^\prime_{TP} - \phi^\prime_{FP})$ transmits at least $\phi^\prime_{TN} + \phi^\prime_{FN} - Y$, where $Y$ is the total number of drops caused by false negative predictions. Note that $\sigma = \phi^\prime_{TN} + \phi^\prime_{FN} + \phi^\prime_{FP} + \phi^\prime_{TP}$. The proof follows by showing that each false negative results in at most $N$ extra drops due to the buffer limit. Further, these extra drops \emph{must} be true negative predictions since we have already removed positive predictions from our arrival sequences (\ie we assume at most all the positive predictions have be dropped). Additionally, since the extra drops are true negative predictions, it implies that LQD transmits those packets but although our prediction is true, we incur additional drop due to the buffer limit. For each false negative, there can be at most one drop in a single timeslot for up to $N-1$ distinct timeslots such drops that LQD accepts and transmits those packets but FollowLQD drops them. Beyond $N-1$ drops, there can only be at most $1$ other drop upon which the existence of an additional packet (false negative) in FollowLQD's buffer would be nullified. This is since, during the initial $N-1$ drops, FollowLQD could not accept the incoming packet but after the transmission phase, the queues having false negative predictions decrement their size by $1$. This leaves at least $N-1$ packets free space in FollowLQD after $N-1$ drops and LQD also has the same remaining space after those extra $N-1$ accepted by LQD are also transmitted. At this time, both LQD and FollowLQD have the same remaining space and they also transmit the same number of packets in each timeslot. One additional drop by FollowLQD corresponding to a false negative is still possible due the thresholds \ie if there exists a packet arrival to the queue having false negative, the incoming packet is dropped since the existence of false negatives implies that the queue length is large than the threshold. As a result, there are at most $N$ drops by FollowLQD for each false negative prediction.

	The proof follows by the above two arguments.
\end{proof}

For completeness, although well-known in the literature, we define accuracy, precision, recall and f1 score below (used in Figure~\ref{fig:scores} in \S\ref{sec:evaluation}).

\[
	Accuracy = \frac{\phi^\prime_{TP} + \phi^\prime_{TN}}{\phi^\prime_{TP} + \phi^\prime_{TN} + \phi^\prime_{FP} + \phi^\prime_{FN}}
\]
\[
	Precision = \frac{\phi^\prime_{TP}}{\phi^\prime_{TP} + \phi^\prime_{FP}}
\]
\[
	Recall = \frac{\phi^\prime_{TP}}{\phi^\prime_{TP} + \phi^\prime_{FN}}
\]
\[
	F1\ score = \frac{2\cdot \phi^\prime_{TP}}{2\cdot \phi^\prime_{TP} + \phi^\prime_{FP} + \phi^\prime_{FN}}
\]

\begin{figure*}
	\centering
	\begin{subfigure}{0.248\linewidth}
		\centering
		\includegraphics[width=1\linewidth]{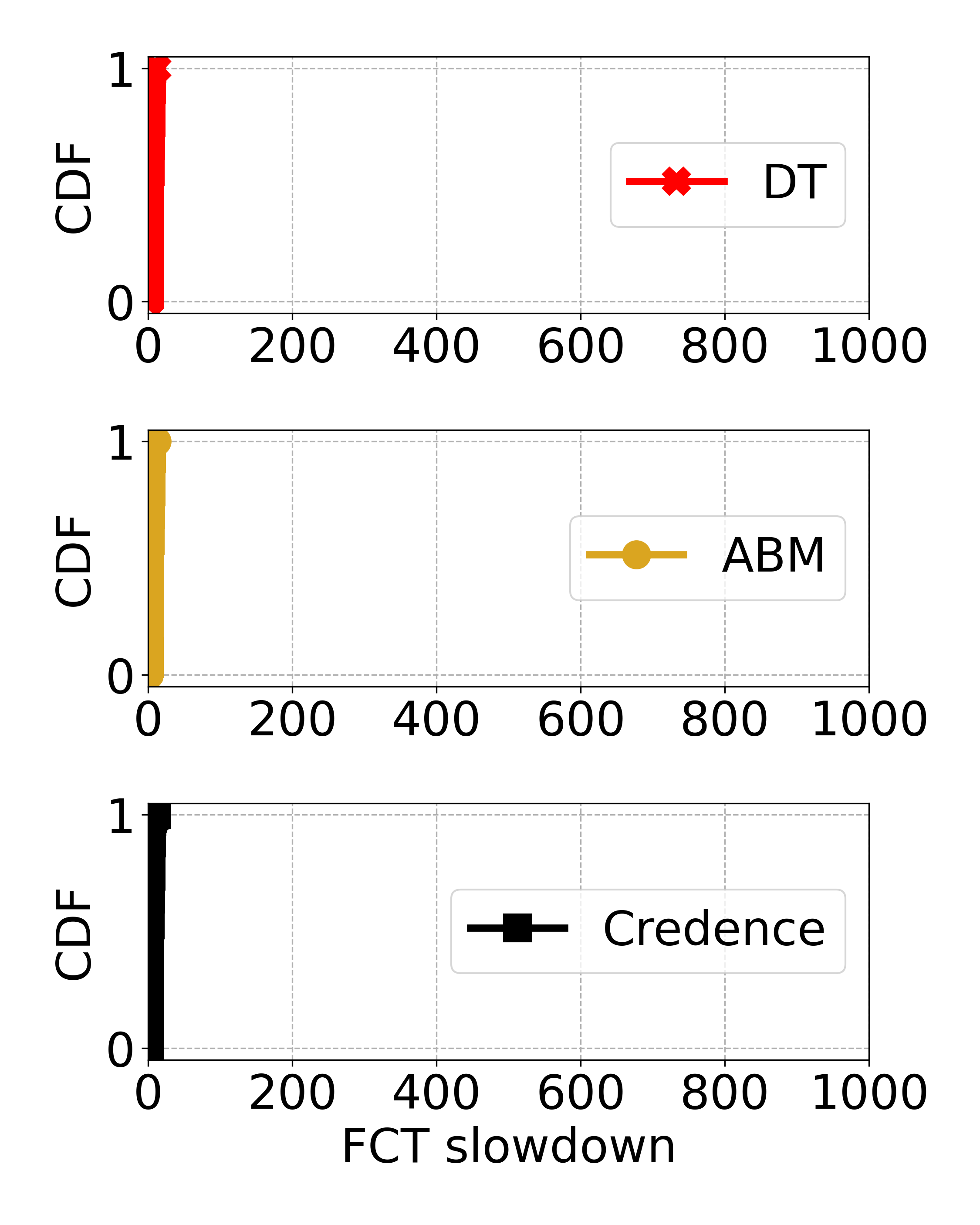}
		\caption{Burst size $=$ $12.5$\%}
		\label{}
	\end{subfigure}\hfill
	\begin{subfigure}{0.248\linewidth}
		\centering
		\includegraphics[width=1\linewidth]{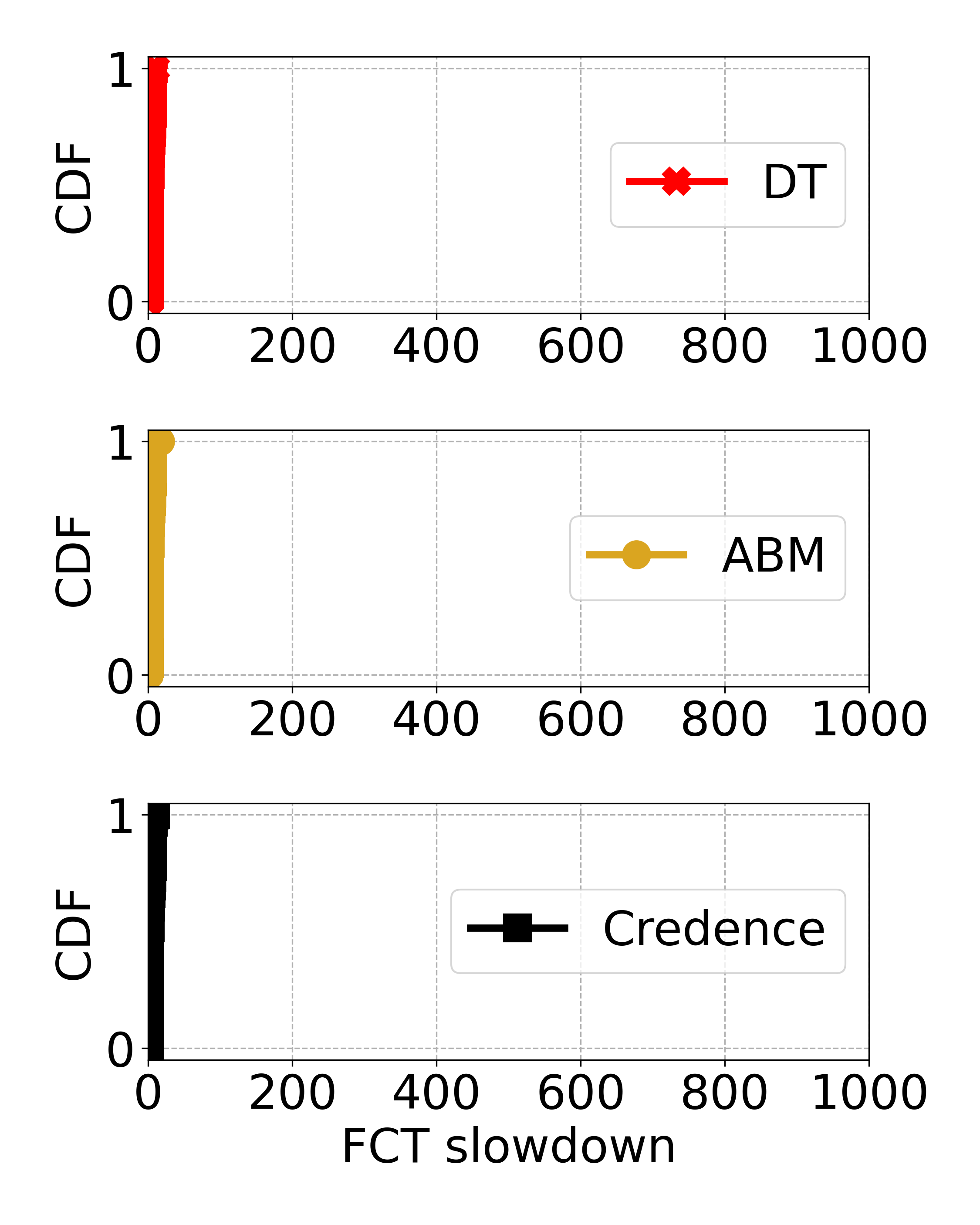}
		\caption{Burst size $=$ $25$\%}
		\label{}
	\end{subfigure}\hfill
	\begin{subfigure}{0.248\linewidth}
		\centering
		\includegraphics[width=1\linewidth]{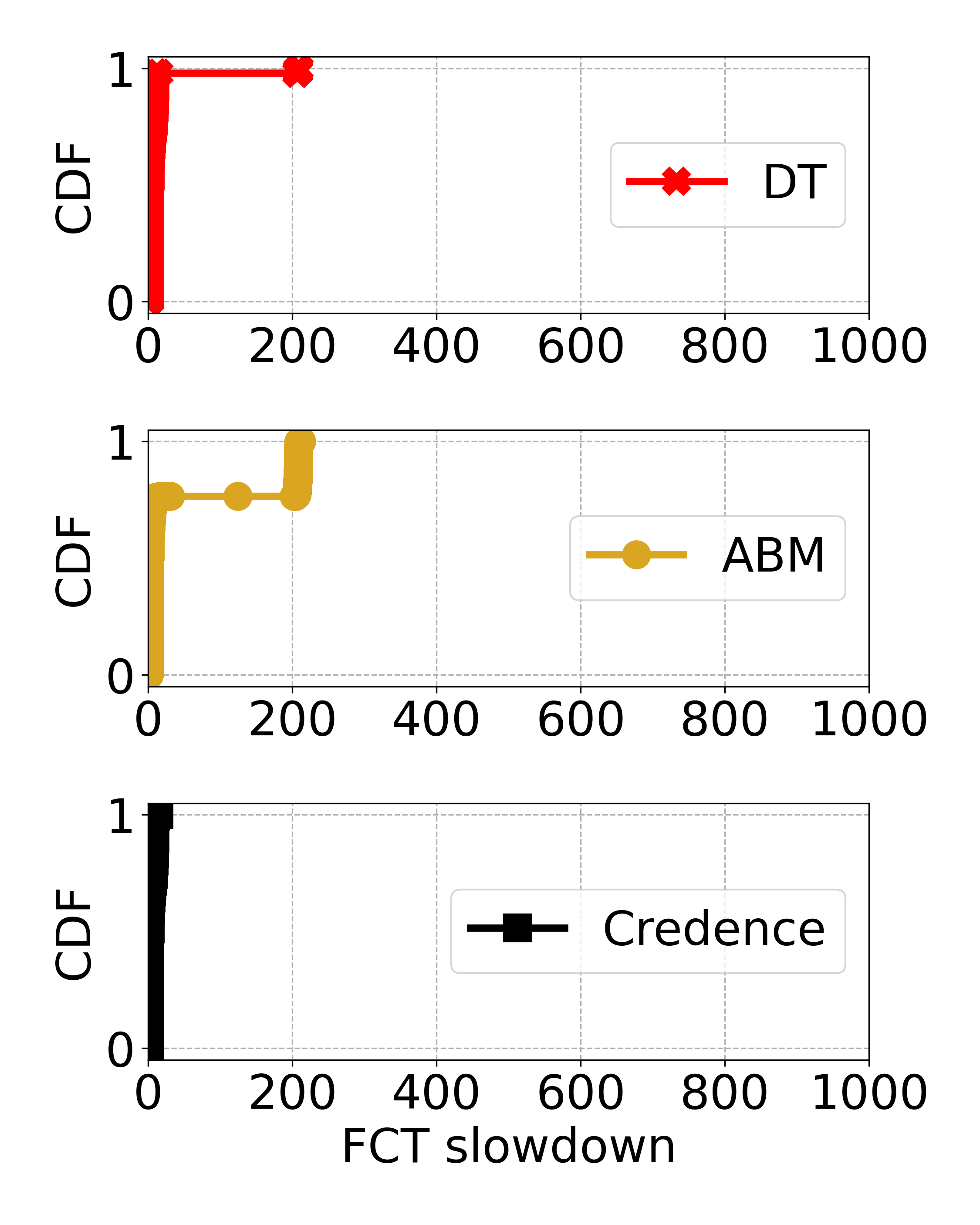}
		\caption{Burst size $=$ $50$\%}
		\label{}
	\end{subfigure}\hfill
	\begin{subfigure}{0.248\linewidth}
		\centering
		\includegraphics[width=1\linewidth]{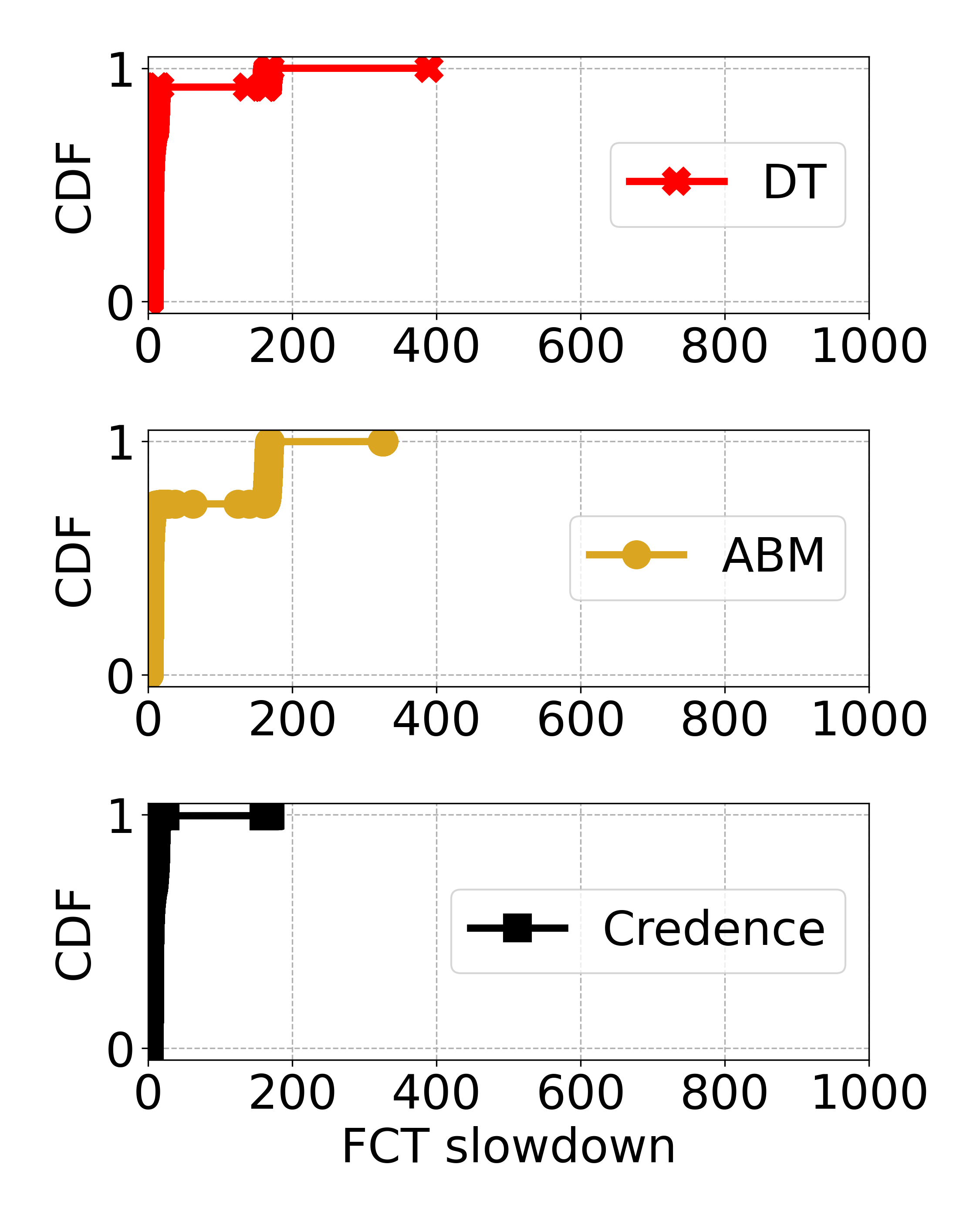}
		\caption{Burst size $=$ $75$\%}
		\label{}
	\end{subfigure}\hfill
	\caption{CDF of flow completion times (slowdown) for \name, DT, ABM and LQD across various burst sizes of incast workload and websearch workload at $40$\% load, with PowerTCP as the transport protocol. Burst size is expressed as a percentage of the buffer size.}
	\label{fig:cdf-powert-burst}
	\vspace{-5mm}
\end{figure*}

\section{Additional Results}\label{app:results}
In this section, we present additional results from our evaluations.
Figures~\ref{fig:cdf-dctcp-burst},~\ref{fig:cdf-dctcp-load},~\ref{fig:cdf-powert-burst} present the CDF of flow completion times for each experiment in our evaluations (\S\ref{sec:evaluation}), showing the complete performance profile of each algorithm.

Figure~\ref{fig:competitive-example} presents our numerical results based on a custom simulator in discrete time. Note that Figure~\ref{fig:competitive-example} shows the throughput \emph{ratio} of an algorithm vs LQD. We perform this experiment using custom simulator in order to fully control the prediction error (artificially).

We generate large bursts of the size of the total buffer, where each such burst arrives according to a poisson process (which is fixed in subsequent runs). We then collect a trace of per-packet drop (or accept) trace using LQD as the buffer sharing algorithm. This trace serves as the ground-truth as well as the case for perfect predictions for \name. We then run \name over the same packet arrival sequence from above, and use the drop trace of LQD as predictions. With full access to this trace \ie perfect predictions case, \name performs exactly as LQD as expected. However, in order to study the performance of \name with increasing error, in a controlled manner, we flip each packet drop (or accept) from our LQD's drop trace \ie each flip becomes a false prediction. We control the error via the flipping probability \ie the false prediction rate. We observe from Figure~\ref{fig:competitive-example} that \name degrade in throughput as the probability of false predictions increases \ie as the prediction error increases. However, even at as high as $0.7$ probability of false predictions, \name still out-performs DT.

\begin{figure}[!h]
	\centering
	\begin{minipage}{1\linewidth}
		\centering
		\includegraphics[width=0.7\linewidth]{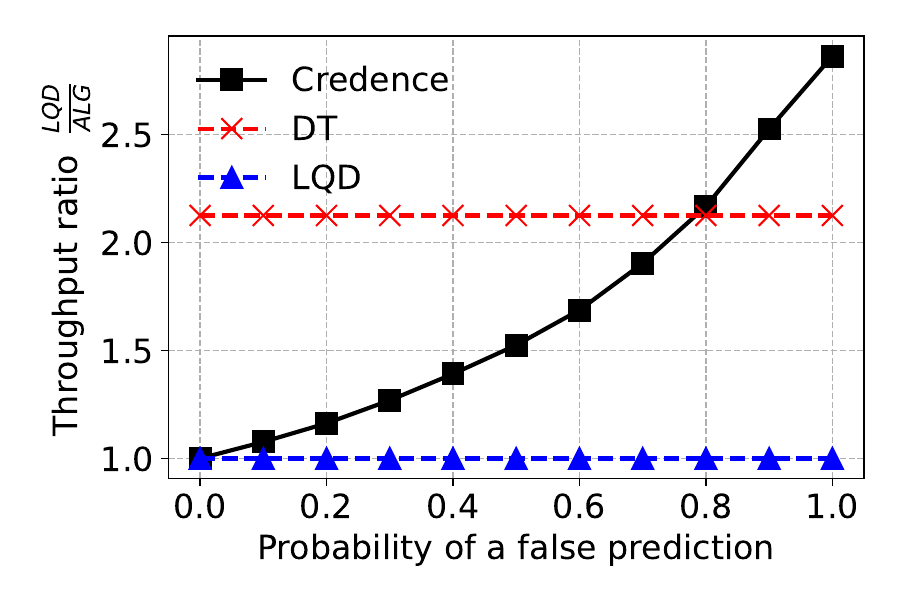}
		\caption{As the probability of false predictions increases, \name's throughput compared to LQD (push-out) \ie the ratio $\frac{LQD}{ALG}$ increases from $1$ to $2.9$ (lower values are better). \name performs significantly better than DT even when the probability of false predictions is as high as $0.7$.}
		\label{fig:competitive-example}
	\end{minipage}\hfill
	\begin{minipage}{1\linewidth}
		\centering
		\includegraphics[width=0.7\linewidth]{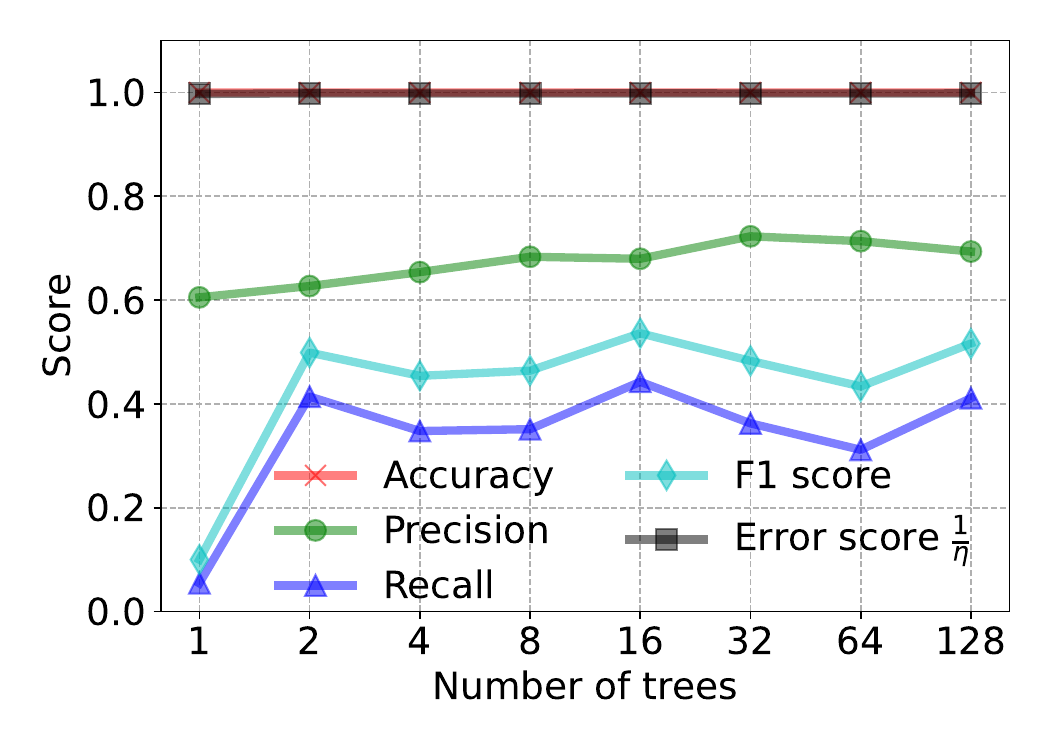}
		\vspace{-2mm}
		\caption{The quality of our predictions does not improve significantly beyond $4$ trees in our random forest classifier.\\ \\ \\ }
		\label{fig:scores}
	\end{minipage}
	\vspace{-5mm}
\end{figure}

In Figure~\ref{fig:scores}, we present our results obtained from a parameter sweep across the number of trees used for random forest model vs prediction~scores.

\label{LastPage}

\end{document}